\documentclass[lettersize,journal]{IEEEtran}
\pdfoutput=1
\usepackage{amsmath,amssymb,amsfonts}
\usepackage{array}
\usepackage[caption=false,font=normalsize,labelfont=sf,textfont=sf]{subfig}
\usepackage{textcomp}
\usepackage{stfloats}
\usepackage{url}
\usepackage{verbatim}
\usepackage{graphicx}
\usepackage{subcaption}
\usepackage{subfig}  % Use subfig instead of subcaption
\usepackage{graphicx}
\usepackage{titlesec}
\usepackage{caption}
\usepackage{subcaption}
\usepackage{cite}
\usepackage{booktabs}
\usepackage{xcolor}
\usepackage{epsfig}
\usepackage{epstopdf}
\usepackage{acronym}
\graphicspath{{fig/}}
\usepackage{mathtools}
\usepackage{amsthm}
\usepackage{bm}
\usepackage{comment}
 \usepackage{subcaption}
 \usepackage{setspace}
  \usepackage{graphicx}
  \usepackage{lipsum}
  \usepackage[absolute,overlay]{textpos}
  \usepackage{upgreek}
  \usepackage{siunitx}
  \usepackage[greek,english]{babel}

\usepackage{ragged2e}
\usepackage{textgreek}

\newacro{ACDD}{Alamouti with cyclic delay diversity}
\newacro{URLLC}{ultra-reliable low-latency communications}
\newacro{3GPP}{third generation partnership project}
\newacro{PHY}{physical layer}
\newacro{MIMO}{multiple-input multiple-output}
\newacro{SIMO}{single-input multiple-output}
\newacro{MISO}{multiple-input single-output}
\newacro{SISO}{single-input single-output}
\newacro{MRC}{maximum-ratio combining}
\newacro{SNR}{signal-to-noise ratio}
\newacro{CP}{cyclic prefix}
\newacro{CDD}{cyclic delay diversity}
\newacro{FSC}{frequency-selective channel}
\newacro{STC}{space-time coding}
\newacro{FFT}{fast Fourier transform}
\newacro{LMMSE}{linear minimum mean-squared error}
\newacro{FER}{frame error rate}
\newacro{OFDM}{orthogonal frequency division multiplexing}
\newacro{OCDM}{orthogonal chirp division multiplexing}
\newacro{FSC}{frequency-selective channel}
\newacro{CSI}{channel state information}
\newacro{LMMSE-PIC}{linear minimum mean squared error with parallel interference cancellation}
\newacro{PFE}{perfect-feedback equalizer}
\newacro{FD}{full-duplex}
\newacro{PDP}{power delay profile}
\newacro{PDF}{probability density function}
\newacro{DFT}{discrete Fourier transform}
\newacro{SDFT}{sparse DFT}
\newacro{ICI}{inter-carrier interference}
\newacro{OTFS}{orthogonal time frequency space}
\newacro{AWGN}{additive white Gaussian noise}
\newacro{SWH}{sparse Walsh-Hadamard}
\newacro{LLR}{log-likelihood ratio}
\newacro{PMF}{probability mass function}
\newacro{CRC}{cyclic redundancy check}
\newacro{PAM}{pulse amplitude modulation}
\newacro{QAM}{quadrature amplitude modulation}
\newacro{FWHT}{fast Walsh-Hadamard transform}
\newacro{MAP}{maximum a-posteriori}
\newacro{SC}{single-carrier}
\newacro{ISI}{inter-symbol interference}
\newacro{ZP}{zero-padding}
\newacro{EVD}{eigenvalue decomposition}
\newacro{BCJR}{Bahl, Cocke, Jelinek, and Raviv}
\newacro{WHT}{Walsh-Hadamard transform}
\newacro{APP}{a-posteriori probability}
\newacro{SILE-EPIC}{self-iterated linear equalizer with expectation propagation}
\newacro{EP}{expectation propagation}
\newacro{i.i.d.}{independent and identically distributed}
\newacro{CWCU}{component wise conditionally unbiased}
\newacro{MSE}{mean squared error}
\newacro{EXIT}{extrinsic information transfer}
\newacro{MI}{mutual information}
\newacro{PAPR}{peak-to-average power ratio}
\newacro{DFT-s}{discrete Fourier transform-spread}
\newacro{AMP}{approximate message passing}
\newacro{GAMP}{generalized \ac{AMP}}
\newacro{VAMP}{vector \ac{AMP}}
\newacro{RSC}{recursive systematic convolutional}
\newacro{QPSK}{quadrature phase-shift keying}
\newacro{CFAR}{constant false alarm rate}
\newacro{PD}{probability of detection}
\newacro{PFA}{probability of false alarm}
\newacro{RV}{random variable}
\newacro{CDF}{cumulative distribution function}
\newacro{HD-ZP}{half-duplex ZP}
\newacro{FD-CP}{full-duplex ZP}
\newacro{DFRC}{dual-function radar communication}
\newacro{SINR}{signal-to-interference noise ratio}
\newacro{ISAC}{integrated sensing and communication}
\newacro{SI}{self-interference}
\newacro{RSI}{residual self-interference}
\newacro{ADC}{analog-to-digital converter}
\newacro{DAC}{digital-to-analog converter}
\newacro{ED}{energy-detection}
\newacro{IDFT}{inverse discrete Fourier Transform}
\newacro{SFFT}{symplectic finite Fourier transform }
\newacro{CRB}{Cram{\'{e}}r-Rao bound}
\newacro{ZC}{Zadoff-Chu}
\newacro{RMSE}{root mean square error}
\newacro{UW}{unique word}
\newacro{GFDM}{generalized frequency division multiplexing}
\newacro{RRC}{root-raised cosine}
\newacro{UB}{upper bound}
\newacro{CEF}{channel estimation field}
\newacro{TRX}{transceiver}
\newacro{IF}{intermediate frequency}
\newacro{RF}{radio frequency}
\newacro{FPGA}{field programmable gate arrays}
\newacro{SDR}{software-defined radio}
\newacro{UWB}{ultra wideband}
\newacro{PCB}{printed circuit board}
\newacro{SMA}{SubMiniature version A}
\newacro{MUSIC}{multiple signal classification}
\newacro{CIR}{channel impulse response}
\newacro{FR}{Frequency Range}
\newacro{mmWave}{millimeter wave}
\newacro{LoS}{line-of-sight}
\newacro{ULA}{uniform linear array}
\newacro{AoA}{angle-of-arrival}
\newacro{AoD}{angle-of-departure}
\newacro{FIM}{Fisher Information Matrix}
\newacro{TDoA}{time difference of arrival}
\newacro{LM}{Levenberg-Marquardt}
\newacro{RSSI}{received signal strength indicator}
\newacro{OMP}{orthogonal matching pursuit}
\newacro{NLoS}{non line-of-sight}
\newacro{FDoA}{frequency difference-of-arrival}
\newacro{TSNR}{threshold SNR}
\newacro{ML}{maximum-likelihood}
\newacro{GPS}{global positioning system}

\def\BibTeX{{\rm B\kern-.05em{\sc i\kern-.025em b}\kern-.08em
		T\kern-.1667em\lower.7ex\hbox{E}\kern-.125emX}}
	
\usepackage{soul,color}

\usepackage{tikz}
\usepackage{pgfplots}
\usetikzlibrary{shapes,arrows}
\usetikzlibrary{positioning,calc}
\usetikzlibrary{decorations.pathreplacing,calligraphy}
\usetikzlibrary{arrows.meta}
\usepackage{siunitx}

\tikzset{add/.style n args={4}{
		minimum width=3mm,
		path picture={
			\draw[black] 
			(path picture bounding box.south east) -- (path picture bounding box.north west)
			(path picture bounding box.south west) -- (path picture bounding box.north east);
			\node at ($(path picture bounding box.south)+(0,0.13)$)     {\tiny #1};
			\node at ($(path picture bounding box.west)+(0.13,0)$)      {\tiny #2};
			\node at ($(path picture bounding box.north)+(0,-0.13)$)        {\tiny #3};
			\node at ($(path picture bounding box.east)+(-0.13,0)$)     {\tiny #4};
		}
	}
}

\tikzset{add2/.style n args={4}{
		minimum width=1mm,
		path picture={
			\draw[black] 
			(path picture bounding box.south) -- (path picture bounding box.north)
			(path picture bounding box.west) -- (path picture bounding box.east);
			\node at ($(path picture bounding box.south)+(0,0.13)$)     {\tiny #1};
			\node at ($(path picture bounding box.west)+(0.13,0)$)      {\tiny #2};
			\node at ($(path picture bounding box.north)+(0,-0.13)$)        {\tiny #3};
			\node at ($(path picture bounding box.east)+(-0.13,0)$)     {\tiny #4};
		}
	}
}

\usepackage{pgfplots}
\usepgfplotslibrary{polar}
\usepackage{tikz}

\newtheorem{proposition}{Proposition}

\counterwithin{corollary}{proposition} % makes Corollary 1.1, 1.2, etc.

\definecolor{applegreen}{rgb}{0.55, 0.71, 0.0}
\definecolor{awesome}{rgb}{1.0, 0.13, 0.32}
\definecolor{azure(colorwheel)}{rgb}{0.0, 0.5, 1.0}
\definecolor{darklavender}{rgb}{0.45, 0.31, 0.59}
\definecolor{cyan(process)}{rgb}{0.0, 0.72, 0.92}
\definecolor{brightmaroon}{rgb}{0.76, 0.13, 0.28}
\definecolor{ao(english)}{rgb}{0.0, 0.5, 0.0}
\definecolor{brightturquoise}{rgb}{0.03, 0.91, 0.87}
\definecolor{bondiblue}{rgb}{0.0, 0.58, 0.71}
\definecolor{atomictangerine}{rgb}{1.0, 0.6, 0.4}
\definecolor{classicrose}{rgb}{0.98, 0.8, 0.91}
\definecolor{copperrose}{rgb}{0.6, 0.4, 0.4}

\usepackage{etoolbox}

\usepackage{algorithm}
\usepackage{algpseudocode}

\algnewcommand{\Input}[1]{\State \textbf{Input:} #1}
\algnewcommand{\Output}[1]{\State \textbf{Output:} #1}
\AtBeginDocument{\everymath{\small} \everydisplay{\small}}
\newcommand{\sizefont}{\small}

\makeatletter
\patchcmd{\@maketitle}
{\centering}
{\centering
	\begin{minipage}{\textwidth}
		R. Bomfin, M. Mezzavilla, S. Rangan and M. Chafii, ``Near Field Multi-Band Localization: CRB, Efficient Estimator, and Threshold SNR,'' Submitted to \textit{IEEE Transactions on Wireless Communications}, March 2026.
	\end{minipage}\par\vspace{0.8em}
}
{}{}
\makeatother

\begin{document}
\title{Near Field Multi-Band Localization: CRB, Efficient Estimator, and Threshold SNR}
%(NYU) Abu Dhabi, 129188, UAE (email: roberto.bomfin@nyu.edu).}
%\thanks{Marwa Chafii is with Engineering Division, New York University (NYU)
%Abu Dhabi, 129188, UAE and NYU WIRELESS, NYU Tandon School of
%Engineering, Brooklyn, 11201, NY, USA (email: marwa.chafii@nyu.edu).}}
\author{Roberto Bomfin, Marco Mezzavilla, Sundeep Rangan, and  Marwa Chafii
\thanks{Roberto Bomfin is with the Engineering Division, New York University (NYU) Abu Dhabi, 129188, UAE (email: roberto.bomfin@nyu.edu).}
\thanks{Marco Mezzavilla is with the Dipartimento di Elettronica, Informazione e Bioingegneria (DEIB), Politecnico di Milano, Milan, Italy (email: marco.mezzavilla@polimi.it).}
\thanks{Sundeep Rangan is with NYU WIRELESS, NYU Tandon School of Engineering, Brooklyn, 11201, NY, USA (email: ar7655@nyu.edu, srangan@nyu.edu).}
\thanks{Marwa Chafii is with the Engineering Division, New York University (NYU) Abu Dhabi, 129188, UAE, and with NYU WIRELESS, NYU Tandon School of
Engineering, Brooklyn, 11201, NY, USA (email: marwa.chafii@nyu.edu).}}
\vspace{-0.5cm}
\maketitle

\begin{abstract}
This paper presents a theoretical framework for multi-band localization for a single-path single-input multiple-output (SIMO) system. 
We derive closed–form Cramér–Rao bounds (CRBs) for angle-of-arrival (AoA) and distance for uniform linear arrays (ULAs), and an intermediate matrix-form formulation for arbitrary array shapes. 
We also develop benchmark single- and multi-band maximum-likelihood (ML) estimators for AoA-Distance, leveraging a structured Levenberg-Marquardt (LM) refinement procedure.
A key contribution is an analytical characterization of the threshold SNR (TSNR) for the proposed estimators.
This is the SNR threshold at which the estimator transitions from ``off the chart'' to CRB-approaching performance, for both TDoA and distance estimation. 
Numerical simulations confirm that the proposed single- and multi-band estimators achieve the CRB at SNRs above the predicted TSNR, and that multi-band processing simultaneously improves estimation accuracy and reduces SNR requirements. The resulting framework provides a rigorous foundation for next-generation multi-band localization and can be readily extended to elevation estimation, distributed arrays, and multi-path environments.

\end{abstract}
\begin{IEEEkeywords}
Localization, Multi-band, Time-difference-of-arrival (TDoA), Cramér–Rao Bound (CRB), MIMO, Maximum-Likelihood Estimation
\end{IEEEkeywords}

\section{Introduction}
Wireless localization is expected to become a defining capability of next-generation 6G networks \cite{Stylianos,Henk5GPart1}.
High-precision positioning is essential for emerging applications such as autonomous robots, extended-reality (XR), industrial automation, and integrated sensing and communication (ISAC) systems \cite{chafii2023twelve,LiuJSAC}.
With modern systems moving toward higher frequencies, larger antenna arrays, new spectrum, and joint sensing–communication operation \cite{kang2024cellular}, both industry and academia are investigating how these new resources can be exploited to achieve centimeter-level localization.

Multi-band processing across wide and heterogeneous spectral resources is another promising direction.
The recently proposed Frequency Range~3 (FR3), spanning approximately 7 to 24,GHz, offers an attractive compromise between FR1 and FR2 with improved coverage, penetration, and spatial resolution \cite{kang2024cellular,mezzavilla2024frequency}.
For localization, this new band structure creates opportunities for resource optimization and robustness, as multi-band diversity provides additional degrees of freedom for balancing accuracy, reliability, and SNR constraints.

Traditionally, most 5G and mmWave localization methods rely on single-band operation and far-field assumptions \cite{shahmansoori2018position,lota2022mmwave,ruble2018wireless,nazari2023mmwave}.
Motivated by FR3 and future heterogeneous-spectrum deployments, this work investigates localization limits in a more general multi-band and near-field setting.

\subsection{Related Works}

A wide range of MIMO-based localization methods have been explored in the literature \cite{Stylianos,Nasir,shahmansoori2018position,lota2022mmwave,ruble2018wireless,nazari2023mmwave,lin2018indoor,chen2022joint,garcia2017direct,taponecco2011joint,Yang,kazaz2022delay,Raviv}.
Common approaches rely on estimating geometric parameters such as AoA, ToA, or TDoA from channel state information (CSI), typically using pilot-based signaling schemes.

Localization in mmWave and massive-MIMO systems has received significant attention due to the availability of large bandwidth and high angular resolution.
Works such as \cite{shahmansoori2018position} derive CRBs for position and rotation estimation under far-field assumptions and single-band operation.
Extensions to full 6D localization have been studied in \cite{nazari2023mmwave}, using non-convex optimization frameworks and providing CRBs for the estimated parameters.
Vehicular applications leveraging TDoA, \ac{FDoA}, or hybrid beamforming have been investigated in \cite{lota2022mmwave,Henk5G}.
Two-step ToA/AoA approaches under LoS and NLoS conditions are considered in \cite{ruble2018wireless}, though these methods require strict synchronization between transmitter and receiver.
Another relevant aspect is the use of distributed MIMO systems for localization, \cite{garcia2017direct,Vladimir}. In \cite{garcia2017direct}, the authors proposed a direct localization method by jointly processing the observations of different base stations.

Systems based on \ac{RSSI} or hybrid RSSI-AoA approaches provide alternatives when propagation models can accurately capture path loss.
Examples include \cite{lin2018indoor,ThomasRSSI,MarcoRSSI}, where hybrid angle–power models require specific array structures or environmental regularity.
Such approaches are often limited by the unpredictable nature of path loss in practical environments.

Multi-band localization has been explored in \cite{kazaz2022delay, Raviv}. In \cite{kazaz2022delay}, the authors focused on a distributed MIMO system where several known anchors perform TDoA-based processing. A super-resolution delay estimation method was derived, as well as CRB expressions. In \cite{Raviv}, the authors focused on incoherent multi-band processing by proposing an FR3 beamformer that capitalizes on the different analog front-ends and spectral behavior associated with each sub-band.

Localization algorithms generally fall into two categories: direct and intermediate-parameter approaches.
Direct approaches process all received signals jointly to infer the position without explicitly estimating intermediate parameters such as delay or AoA \cite{garcia2017direct,Vladimir,Yang}.
In contrast, intermediate-parameter methods first estimate quantities that admit a sparse or harmonic structure (e.g., delay, AoA, Doppler), typically using tools such as over-sampled \ac{FFT}, ESPRIT, \ac{GAMP}, \ac{OMP}, or RIMAX \cite{Henk5GPart1,kazaz2022delay,RimaxThesis,Bomfin_UW,Bomfin_DMC_globecom}.
The resulting geometric parameters are then mapped to range or position through appropriate transformation functions.

\subsection{This Work}\label{subsec:intro_this_work}

Despite substantial recent progress spanning 6D localization \cite{nazari2023mmwave}, estimation in multipath environments \cite{garcia2017direct,ruble2018wireless}, and a broad body of CRB analyses \cite{shahmansoori2018position,nazari2023mmwave,RimaxThesis}, we identify fundamental gaps even in the basic single-path 2D (azimuth) \ac{SIMO} localization model. In particular, to the best of our knowledge, prior work does not provide an explicit condition on the \ac{SNR} under which the distance CRB is attainable. This threshold characterization is critical for system design, since modest changes in waveform, bandwidth, or array geometry can otherwise necessitate extensive new simulation campaigns. In addition, many existing CRB derivations are presented only in compact matrix form: while convenient for numerical evaluation, they offer limited analytical insight into how key system parameters drive localization accuracy. 

Finally, most localization frameworks assume single-band operation, leaving the benefits and design trade-offs of multi-band processing largely unexplored. 
Unlike \cite{kazaz2022delay} that develops TDoA-based processing for distributed MIMO, in this work, we consider a single receiver at near-field with an antenna array.
And our work differs from \cite{Raviv} by considering coherent processing, typical in pilot-based processing schemes.

Motivated by these gaps, this work develops a comprehensive analytical framework that: (i) derives closed-form AoA–distance CRBs for \ac{ULA} systems; (ii) proposes an asymptotic \ac{ML} estimator for joint parameter estimation; and (iii) obtains closed-form \ac{TSNR} conditions that quantify the SNR thresholds separating unreliable, “off-the-chart” behavior from CRB-approaching performance. The framework further extends naturally to multi-band operation with independently parameterized sub-bands, yielding a unified treatment of performance and design trade-offs across bands. Finally, because the underlying model is readily extensible to 3D localization, distributed (cell-free) MIMO architectures, and multipath channels, the proposed analysis provides a versatile theoretical tool for guiding the design of a broad class of localization systems.

The main contributions of this paper are as follows:
\begin{itemize}
\item We introduce a multi-band \ac{SIMO} TDoA-based localization model that accommodates independently parameterized sub-bands, allowing arbitrary sub-carrier spacing and number of sub-carriers within each sub-band.
\item We derive a general, matrix-form CRB expression applicable to arbitrary antenna array geometries.
\item Specializing to \acp{ULA}, we obtain tractable approximate closed-form \acp{CRB} for joint AoA–distance estimation, providing insightful expressions with respect to system parameters.
\item We develop asymptotic \ac{ML} benchmark estimators for both single-band and multi-band operation, and show that they attain the corresponding \acp{CRB} in the high-\ac{SNR} regime.
\item We derive closed-form single-band and multi-band \ac{TSNR} expressions for TDoA and distance estimation, characterizing the SNR thresholds for the onset of CRB-approaching performance.
\end{itemize}

\subsection{Notation}
Scalars, column vectors, and matrices are denoted by lower-case italic, lower-case bold, and upper-case bold letters, respectively (e.g., $z$, $\mathbf{z}$, $\mathbf{Z}$). The $m$th element of $\mathbf{z}$ is $z_m$, and an $M$-dimensional vector is written as $\mathbf{z} = \{z_m\}_{m=0}^{M-1}$. The operators $\mathbf{Z}^{\rm T}$ and $\mathbf{Z}^{\rm H}$ denote the transpose and conjugate transpose of $\mathbf{Z}$.
The notation $\mathbf{Z} = \mathrm{diag}(\mathbf{z})$ constructs a diagonal matrix with diagonal entries taken from $\mathbf{z}$.
$\| \cdot \| $ is the Euclidean norm of a vector.
$\Re(\cdot)$ extracts the real part of units input matrix.
The identity matrix of size $M$ is $\mathbf{I}_M$, while $\mathbf{1}_M$ is the $M$-dimensional all-ones column vector. The zero vector and matrix of sizes $M$ and $M \times N$ are denoted by $\mathbf{0}_M$ and $\mathbf{0}_{M,N}$, respectively.
The Khatri–Rao product of matrices $\mathbf{B}$ and $\mathbf{A}$ is defined as
$
	\mathbf{B} \Diamond \mathbf{A} = 
	\begin{bmatrix}
		\mathbf{b}_0 \otimes \mathbf{a}_0\!\! & \!\!
		\mathbf{b}_1 \otimes \mathbf{a}_1 \!\!& \!\!
		\cdots\!\! & \!\!
		\mathbf{b}_{M-1} \otimes \mathbf{a}_{M-1}
	\end{bmatrix}
$,
where $\mathbf{a}_i$ and $\mathbf{b}_i$ denote the $i$th columns of $\mathbf{A}$ and $\mathbf{B}$, respectively, 
and $\otimes$ represents the Kronecker product.
For a given vector model $\mathbf{z}(\mathbf{v})$ as a function of the parameter vector $\mathbf{v}$, the partial derivative is written as $ \mathbf{z}^{(\mathbf{v})} = \partial \mathbf{z} / \partial \mathbf{v}^{\rm T}$.
When convenient, we use the Jacobian notation $\mathbf{D}(\mathbf{v}) = \mathbf{z}^{(\mathbf{v})}$, yielding the \ac{FIM} and \ac{CRB} of the parameter vector $\mathbf{v}$, $\mathbf{J}(\mathbf{v})
= 2/\sigma^2 
\Re\!\left( \mathbf{D}(\mathbf{v})^{\rm H} 
\mathbf{D}(\mathbf{v}) \right)$ and $\text{CRB}_{\mathbf{v}} = \mathbf{J}(\mathbf{v})^{-1}$, respectively, for the system under \ac{AWGN} with noise power $\sigma^2$.

\subsection{Organization}
The remainder of this paper is organized as follows.
In Section~\ref{sec:system_model_single_band}, we define the single-band TDoA and AoA-Distance system models.
In Section~\ref{sec:crb_R_phi}, we derive the Jacobian matrix and matrix-form CRB expressions for AoA and Distance. 
In Section~\ref{sec:crb_R_phi_approx}, we derive a closed-form expression for the TDoA CRB, the intermediate AoA-Distance CRB expression for arbitrary array shapes, and the closed-form CRBs for AoA and distance for ULA.
In Section~\ref{sec:estimator_R_phi}, we derive the Algorithm for AoA and distance estimation, as well as the distance WNSR.
In Section~\ref{sec:system_model_multi_band}, we generalize the results to multi-band.
In Section~\ref{sec:results}, we provide numerical results to validate the theoretical work.
Finally, Section~\ref{sec:conclusion} concludes the work.

\section{Single-Band System Model}\label{sec:system_model_single_band}
%
%As outlined in Subsection~\ref{subsec:intro_this_work}, 
%In the following, we present the azimuth single-path \ac{SIMO} model.
%In Subsection~\ref{subsec:geometry}, we introduce the geometry-related quantities.
%In Subsection~\ref{subsec:los_simo_model}, we define the signal system.
%Lastly, in Subsection~\ref{subsec:aoa_distance_model}, we define the functional relation between the TDoA and AoA-Distance models.
%In Subsection~\ref{subsec:geometry}, we introduce the quantities related to the system geometry such as distance, AoA, antenna elements delays and TDoAs.
%In Subsection~\ref{subsec:los_simo_model}, we define the signal system including carrier frequency, sub-carrier spacing and number of sub-carriers.
%Lastly, in Subsection~\ref{subsec:aoa_distance_model}, we define the functional relation between the TDoA and AoA-Distance models.
%
\subsection{Enviromental Model}\label{subsec:geometry}
Consider the system model illustrated in Fig.~\ref{fig:system_model}. 
The receive \ac{ULA} consists of $M$ antenna elements uniformly distributed along the $\mathrm{x}$-axis, with an inter-element spacing of $\delta_{\mathrm{x}}$. 
The Cartesian coordinates of the $m$th antenna element are denoted by $(x_m, y_m)$. 
The transmitter position is given by the coordinates $(x_{\mathrm{T}}, y_{\mathrm{T}})$, where $x_{\mathrm{T}} = R \cos \phi$ and $y_{\mathrm{T}} = R \sin \phi \ge 0$ are dependent upon the \ac{AoA}, $\phi$, and distance, $R$.
The AoA is defined within the interval $[0, \pi]$ under the assumption $y_{\mathrm{T}}\ge 0$, which eliminates the ambiguity between positions $\pm y_{\mathrm{T}}$. 
The distance between the transmitter and the $m$th receive antenna is $d_m = \sqrt{(R \cos \phi - x_m)^2 + (R \sin \phi - y_m)^2}$
%\begin{equation}
%	d_m = \sqrt{(R \cos \phi - x_m)^2 + (R \sin \phi - y_m)^2},
%\end{equation}
%
with corresponding propagation delay $	\tau_m = d_m/c$,
%
%\begin{equation}
%	\tau_m = d_m/c,
%\end{equation}
%
where $c$ denotes the speed of light.

\begin{figure}[t!]
	\tikzset{every picture/.style={line width=0.75pt}} %set default line width to 0.75pt        

\begin{tikzpicture}[x=0.75pt,y=0.75pt,yscale=-1,xscale=1]
	%uncomment if require: \path (0,306); %set diagram left start at 0, and has height of 306
	
	%Straight Lines [id:da6125803365037646] 
	\draw [line width=0.75]  [dash pattern={on 4.5pt off 4.5pt}]  (172.57,103.91) -- (261.07,195.41) ;
	\draw [shift={(261.07,195.41)}, rotate = 225.95] [color={rgb, 255:red, 0; green, 0; blue, 0 }  ][line width=0.75]    (0,7.83) -- (0,-7.83)   ;
	\draw [shift={(172.57,103.91)}, rotate = 225.95] [color={rgb, 255:red, 0; green, 0; blue, 0 }  ][line width=0.75]    (0,7.83) -- (0,-7.83)   ;
	%Shape: Arc [id:dp2396918311299261] 
	\draw  [draw opacity=0] (243.97,176.75) .. controls (246.46,175.89) and (249.14,175.42) .. (251.93,175.4) .. controls (264.55,175.34) and (274.93,184.68) .. (275.92,196.59) -- (252.04,198.45) -- cycle ; \draw    (243.97,176.75) .. controls (246.46,175.89) and (249.14,175.42) .. (251.93,175.4) .. controls (264.55,175.34) and (274.93,184.68) .. (275.92,196.59) ;  
	%Straight Lines [id:da19717676801627337] 
	\draw [line width=0.75]  [dash pattern={on 4.5pt off 4.5pt}]  (231,187) -- (260.33,186.75) ;
	\draw [shift={(260.33,186.75)}, rotate = 179.51] [color={rgb, 255:red, 0; green, 0; blue, 0 }  ][line width=0.75]    (0,3.91) -- (0,-3.91)   ;
	\draw [shift={(231,187)}, rotate = 179.51] [color={rgb, 255:red, 0; green, 0; blue, 0 }  ][line width=0.75]    (0,3.91) -- (0,-3.91)   ;
	%Shape: Circle [id:dp4010771151150866] 
	\draw  [fill={rgb, 255:red, 0; green, 0; blue, 0 }  ,fill opacity=1 ] (257.43,195.5) .. controls (257.43,193.57) and (259,192) .. (260.93,192) .. controls (262.86,192) and (264.43,193.57) .. (264.43,195.5) .. controls (264.43,197.43) and (262.86,199) .. (260.93,199) .. controls (259,199) and (257.43,197.43) .. (257.43,195.5) -- cycle ;
	%Shape: Circle [id:dp22821974777963971] 
	\draw  [fill={rgb, 255:red, 0; green, 0; blue, 0 }  ,fill opacity=1 ] (286.74,195.5) .. controls (286.74,193.57) and (288.31,192) .. (290.24,192) .. controls (292.17,192) and (293.74,193.57) .. (293.74,195.5) .. controls (293.74,197.43) and (292.17,199) .. (290.24,199) .. controls (288.31,199) and (286.74,197.43) .. (286.74,195.5) -- cycle ;
	%Shape: Circle [id:dp010021492054048786] 
	\draw  [fill={rgb, 255:red, 0; green, 0; blue, 0 }  ,fill opacity=1 ] (324.36,195.5) .. controls (324.36,193.57) and (325.92,192) .. (327.86,192) .. controls (329.79,192) and (331.36,193.57) .. (331.36,195.5) .. controls (331.36,197.43) and (329.79,199) .. (327.86,199) .. controls (325.92,199) and (324.36,197.43) .. (324.36,195.5) -- cycle ;
	%Shape: Circle [id:dp43306108938934984] 
	\draw  [fill={rgb, 255:red, 0; green, 0; blue, 0 }  ,fill opacity=1 ] (228.12,195.5) .. controls (228.12,193.57) and (229.69,192) .. (231.62,192) .. controls (233.55,192) and (235.12,193.57) .. (235.12,195.5) .. controls (235.12,197.43) and (233.55,199) .. (231.62,199) .. controls (229.69,199) and (228.12,197.43) .. (228.12,195.5) -- cycle ;
	%Shape: Axis 2D [id:dp7127056371423113] 
	\draw  (128.86,185.31) -- (146.57,185.31)(130.63,169.41) -- (130.63,187.07) (139.57,180.31) -- (146.57,185.31) -- (139.57,190.31) (125.63,176.41) -- (130.63,169.41) -- (135.63,176.41)  ;
	%Shape: Circle [id:dp25066327393842514] 
	\draw  [fill={rgb, 255:red, 0; green, 0; blue, 0 }  ,fill opacity=1 ] (169.36,104.07) .. controls (169.36,102.14) and (170.92,100.57) .. (172.86,100.57) .. controls (174.79,100.57) and (176.36,102.14) .. (176.36,104.07) .. controls (176.36,106.01) and (174.79,107.57) .. (172.86,107.57) .. controls (170.92,107.57) and (169.36,106.01) .. (169.36,104.07) -- cycle ;
	%Shape: Circle [id:dp9725330610907481] 
	\draw  [fill={rgb, 255:red, 0; green, 0; blue, 0 }  ,fill opacity=1 ] (186.5,195.5) .. controls (186.5,193.57) and (188.07,192) .. (190,192) .. controls (191.93,192) and (193.5,193.57) .. (193.5,195.5) .. controls (193.5,197.43) and (191.93,199) .. (190,199) .. controls (188.07,199) and (186.5,197.43) .. (186.5,195.5) -- cycle ;
	
	% Text Node
	\draw (345,187) node [anchor=north west][inner sep=0.75pt]  [font=\normalsize] [align=left] {Receive Array};
	% Text Node
	\draw (221.67,133.4) node [anchor=north west][inner sep=0.75pt]    {$R$};
	% Text Node
	\draw (183.83,97.33) node [anchor=north west][inner sep=0.75pt]  [font=\small] [align=left] {Trasmit Antenna at $(x_{\rm T},y_{\rm T})= (R \cos \phi, R \sin \phi)$};
	% Text Node
	\draw (266,160.07) node [anchor=north west][inner sep=0.75pt]    {$\phi $};
	% Text Node
	\draw (298.67,192) node [anchor=north west][inner sep=0.75pt]  [font=\normalsize]  {$\cdots $};
	% Text Node
	\draw (128.47,201.4) node [anchor=north west][inner sep=0.75pt]  [font=\normalsize]  {$m=-M/2$};
	% Text Node
	\draw (218.27,201.4) node [anchor=north west][inner sep=0.75pt]  [font=\normalsize]  {$-1$};
	% Text Node
	\draw (254.87,201.4) node [anchor=north west][inner sep=0.75pt]  [font=\normalsize]  {$0$};
	% Text Node
	\draw (314.27,201.4) node [anchor=north west][inner sep=0.75pt]  [font=\normalsize]  {$M/2-1$};
	% Text Node
	\draw (236.93,189.33) node [anchor=north west][inner sep=0.75pt]    {$\delta _{x}$};
	% Text Node
	\draw (149.6,176.39) node [anchor=north west][inner sep=0.75pt]    {$x$};
	% Text Node
	\draw (126.67,147.86) node [anchor=north west][inner sep=0.75pt]    {$y$};
	% Text Node
	\draw (283.87,201.4) node [anchor=north west][inner sep=0.75pt]  [font=\normalsize]  {$1$};
	% Text Node
	\draw (199.67,192) node [anchor=north west][inner sep=0.75pt]  [font=\normalsize]  {$\cdots $};

\end{tikzpicture}
	\vspace{-1.0cm}
	\caption{System model. Single input-multiple output (SIMO). The antenna array is distributed along the $\rm x$ axis.}
	\label{fig:system_model}
    \vspace{-0.5cm}
\end{figure}

%In this work, we tackle the localization problem via \ac{TDoA} processing that does not require absolute time synchronization between the transmitter and receiver, which is hard to obtain in practice.
%\subsection{Time Difference of Arrival (TDoA) Model}\label{subsec:tdoa_model}
%
Since absolute delay estimation requires TX–RX time synchronization, we instead work with TDoAs, which eliminate the unknown transmit time. To overcome this limitation, we tackle the localization problem via \ac{TDoA} processing.
The TDoA parameter is $\delta_m = \tau_m - \tau_0$,
%
%\begin{equation}
%	\delta_m = \tau_m - \tau_0,
%	\label{eq:tdoa_m}
%\end{equation}
%
which represents the relative delay between the $m$th antenna and the reference ($0$th) antenna. 
The TDoA parameter vector is 
\begin{equation}
\boldsymbol{\delta} = \left[\delta_1 \,\, \cdots \,\, \delta_{M-1}\right]^{\mathrm{T}} \in \mathbb{R}^{M-1}.
\label{eq:delta}
\end{equation}

\subsection{LoS SIMO Signal Model}\label{subsec:los_simo_model}
We focus on a single dominant LoS component and treat residual multipath as negligible (or absorbed into noise), which is appropriate when secondary paths are sufficiently weaker and/or separable in delay/angle. Also, we consider a \ac{SIMO} system transmitting a constant-amplitude signal in the frequency domain. 
Let
\begin{equation}
\begin{aligned}
	&\mathbf{a}(\tau) = \{\exp(-j 2 \pi \tau f_0 (n - (N-1)/2))\}_{n=0}^{N-1}
    \\ 
     \text{and } &b(\tau) = \exp(-j 2 \pi \tau f_{\mathrm{c}})	
    \end{aligned}\label{eq:ab}
\end{equation}
%and
%
%\begin{equation}
%	b(\tau) = \exp(-j 2 \pi \tau f_{\mathrm{c}})
%	\label{eq:b}
%\end{equation}
%
denote the frequency-domain basis function and the systematic phase rotation, respectively. The quantities $N$ and $f_0$ represent the subcarrier spacing and the number of subcarriers, respectively. % Note that we use a baseband-centered subcarrier index $n-(N-1)/2$.
The received signal at the $m$-th antenna element is then given by
\begin{equation}
	\mathbf{s}_m = \mathbf{a}(\tau_0 + \delta_m)\, b(\delta_m)\, \gamma,
	\label{eq:s_m}
\end{equation}
where $\mathbf{s}_m\in\mathbb{C}^{N}$ and $\gamma$ is the complex signal gain, whose phase is independent of $m$. 
In~\eqref{eq:s_m}, note that $b(\delta_m)$ does not include $\tau_0$, reflecting the fact that it is already absorbed into $\gamma = \tilde{\gamma}\, b(\tau_0)$. 
This modeling choice avoids the non-identifiability between the phase of $\tilde{\gamma}$ and $b(\tau_0)$. 
Let
\begin{equation}
		\boldsymbol{\theta}  = [\, \tau_0 \,\, \delta_1 \,\, \cdots \,\, \delta_{M-1} \,\, \gamma^{\mathrm{r}} \,\, \gamma^{\mathrm{i}} \,]^{\mathrm{T}} \in \mathbb{R}^{M+2}
\end{equation}
be the parameter vector containing the TDoAs and channel gain.
A convenient and compact linear representation of the system is given by
\begin{equation}
	\begin{aligned}
		\mathbf{s}(\boldsymbol{\theta}) 
		= [ \mathbf{s}_0^{\mathrm{T}} \,\,  \cdots \,\, \mathbf{s}_{M-1}^{\mathrm{T}} ]^{\mathrm{T}} = \big( \mathbf{B}(\boldsymbol{\delta}_0) \Diamond \mathbf{A}(\tau_0 + \boldsymbol{\delta}_0^{\mathrm{T}}) \big) \mathbf{1}_M \, \gamma
	\end{aligned}
	\label{eq:s_theta}
\end{equation}
where
\begin{equation}
	\boldsymbol{\delta}_0 = [0 \,\, \boldsymbol{\delta}^{\mathrm{T}}]^{\mathrm{T}} \in \mathbb{R}^{M}
\end{equation}
is the $M$-size TDoA vector with the first element $(m=0)$ being zero as the TDoA is computed with respect to the first antenna.
The frequency-domain basis function matrix of size $\mathbf{A}(\tau_0 + \boldsymbol{\delta}_0^{\mathrm{T}})\in \mathbb{C}^{N \times  M}$ is
\begin{equation}
	\mathbf{A}(\tau_0 + \boldsymbol{\delta}_0^{\mathrm{T}}) = \left[\mathbf{a}(\tau_0) \,\, \mathbf{a}(\tau_0  +  \delta_1) \,\, \cdots \,\,  \mathbf{a}(\tau_0  +  \delta_{M  -1})\right] 
\end{equation}
and stacks the vectors $\mathbf{a}(\cdot)$ in \eqref{eq:ab} along columns. The diagonal matrix
\begin{equation}
\mathbf{B}(\boldsymbol{\delta}_0)= \mathrm{diag}\big(\{b(\delta_m)\}_{m=0}^{M-1}\big) \in \mathbb{C}^{M \times M}
\end{equation}
is constructed with the elements of ${b}(\cdot)$ in \eqref{eq:ab}. 
Since $\mathbf{B}(\boldsymbol{\delta}_0)$ is diagonal, the Khatri--Rao product compactly represents the stacking of the per-antenna vectors $\mathbf{a}(\tau_0+\delta_m)b(\delta_m)$ in a single expression. This structure is used for convenience as it allows us to compactly write Jacobian and CRB expressions for different parameters of interest.

The received signal under \ac{AWGN} is given by
\begin{equation}
	\mathbf{y} = \mathbf{s}(\boldsymbol{\theta}) + \mathbf{w},
	\label{eq:y}
\end{equation}
\noindent where $\mathbf{y}\in\mathbb{C}^{MN}$, $\mathbf{w}\in\mathbb{C}^{MN}$, $\mathbf{w}\sim \mathcal{CN}(\mathbf{0},\sigma^2\mathbf{I})$ and $\sigma^2$ is the noise power.

\subsection{AoA-Distance Model}\label{subsec:aoa_distance_model}
The TDoA is related to the parameters $R$ and $\phi$ as
\begin{equation}
	\delta_m 
	= \frac{1}{c} 
	\sqrt{(R \cos \phi - x_m)^2 + (R \sin \phi - y_m)^2} 
	- \frac{R}{c}.
	\label{eq:delta_R_pi}
\end{equation}
The TDoA parameter vector $\boldsymbol{\delta}$ in~\eqref{eq:delta}, 
and consequently $\boldsymbol{\theta}$, 
can be expressed as a function of $[R \,\, \phi]^{\mathrm{T}}$ through~\eqref{eq:delta_R_pi}.
For completeness, we define an alternative parameter vector
\begin{equation}
	\boldsymbol{\upsilon} = 
	[\, \tau_0 \,\, \phi \,\, R \,\, \gamma^{\mathrm{r}} \,\, \gamma^{\mathrm{i}} \,]^{\mathrm{T}}.
\end{equation}
When convenient, the TDoA dependence on $R$ and $\phi$ according to \eqref{eq:delta_R_pi} is made explicit using  $\boldsymbol{\theta}(\boldsymbol{\upsilon})$. Note that we use $\boldsymbol{\theta}$ for the TDoA parameterization and $\boldsymbol{\upsilon}$ for the physically meaningful $(\phi,R)$ parameterization.

%Lastly, we note that there is no relation between $\tau_0$ and $R$ due to the absence of absolute time synchronization between the transmitter and receiver.
%
\section{Matrix-form CRB of $\phi$ and $R$}\label{sec:crb_R_phi}
The goal of this section is to derive the matrix-form CRB for the parameters $\phi$ and $R$ under the model presented in Section~\ref{sec:system_model_single_band}. 
Our approach begins by formulating the Jacobian matrix of $\boldsymbol{\theta}$ corresponding to the TDoA-based model in Subsection~\ref{subsec:D_theta}. 
We then demonstrate how this formulation is transformed into the $\boldsymbol{\upsilon}$ (AoA-Distance) representation in Subsections~\ref{subsec:D_upsilon} and~\ref{subsec:crb_phi_R_matrix}.

\subsection{Jacobian of $\boldsymbol{\theta} = [ \tau_0 \,\, \delta_1 \,\, \cdots \,\, \delta_{M-1} \,\, \gamma^{\rm r} \,\, \gamma^{\rm i}]^{\rm T}$}\label{subsec:D_theta}
From the model in \eqref{eq:s_theta}, the Jacobian of $\boldsymbol{\theta}$ is given by
\begin{equation}
	\mathbf{D}(\boldsymbol{\theta}) 
	= \frac{\partial \mathbf{s}(\boldsymbol{\theta})}{\partial \boldsymbol{\theta}^{\rm T}} 
	= \big[\, \mathbf{S}^{(\tau_0)} \,\, \mathbf{S}^{(\boldsymbol{\delta})} \,\, \mathbf{S}^{(\gamma)} \,\big],
	\label{eq:D_theta}
\end{equation}

\noindent where $\mathbf{D}(\boldsymbol{\theta})\in\mathbb{C}^{MN\times(M+2)}$, 
$\mathbf{S}^{(\tau_0)}\in\mathbb{C}^{MN\times 1}$, 
$\mathbf{S}^{(\boldsymbol{\delta})}\in\mathbb{C}^{MN\times(M-1)}$, and 
$\mathbf{S}^{(\gamma)}\in\mathbb{C}^{MN\times 2}$, and  the partial derivatives with respect to $\tau_0$, $\boldsymbol{\delta}$, and $[\gamma^{\rm r} \,\, \gamma^{\rm i}]^{\rm T}$ are expressed as
\begin{equation}
	\begin{aligned}
		\mathbf{S}^{(\tau_0)} 
		&\!=\! \frac{\partial \mathbf{s}(\boldsymbol{\theta})}{\partial \tau_0} 
		\!=\! \big( {\mathbf{B}}(\boldsymbol{\delta}_0) 
		\lozenge 
		(\boldsymbol{\Xi}(f_0, 0)\mathbf{A}(\tau_0 + \boldsymbol{\delta}_0^{\rm T})) \big) 
		\mathbf{1}_M\, \gamma, \\[3pt]
		\mathbf{S}^{(\boldsymbol{\delta})} 
		&\!=\! \frac{\partial \mathbf{s}(\boldsymbol{\theta})}{\partial \boldsymbol{\delta}^{\rm T}} 
		\!= \!\big( {\mathbf{B}}'(\boldsymbol{\delta}) 
		\lozenge 
		(\boldsymbol{\Xi}(f_0, f_{\rm c})\mathbf{A}(\tau_0 + \boldsymbol{\delta}^{\rm T})) \big)\gamma, \\[3pt]
		\text{and }\mathbf{S}^{(\gamma)} 
		&\!=\! \frac{\partial \mathbf{s}(\boldsymbol{\theta})}{\partial [\gamma^{\rm r} \,\, \gamma^{\rm i}]} 
		\!=\! \big( {\mathbf{B}}(\boldsymbol{\delta}_0) 
		\lozenge 
		\mathbf{A}(\tau_0 + \boldsymbol{\delta}_0^{\rm T}) \big) 
		[\,\mathbf{1}_M \,\, j\mathbf{1}_M\,],
	\end{aligned}
	\label{eq:S_delta_derivatives}
\end{equation}
where the following auxiliary variables are defined for convenience: $\boldsymbol{\Xi}(f_0, f_{\rm c}) 
= -j 2\pi (\grave{\mathbf{N}} f_0 + \mathbf{I} f_{\rm c}) 
\in \mathbb{C}^{N \times N}$, ${\mathbf{B}}'(\boldsymbol{\delta}) 
= 
\begin{bmatrix} 
	\mathbf{0}_{M-1} & 
	{\mathbf{B}}(\boldsymbol{\delta})
\end{bmatrix}^{\rm T}
\in \mathbb{C}^{M \times (M-1)}$, and $\grave{\mathbf{N}} 
= {\rm diag}\!\left( [0\,\, 1 \cdots N \!-\!1]^{\rm T} - \tfrac{N-1}{2} \right) 
\in \mathbb{Q}^{N \times N}$.
%
%\begin{equation}
%	\begin{aligned}
%		\grave{\mathbf{N}} 
%		&= {\rm diag}\!\left( [0\,\, 1 \cdots N \!-\!1]^{\rm T} - \tfrac{N-1}{2} \right) 
%		\in \mathbb{Q}^{N \times N}, \\[3pt]
%		\boldsymbol{\Xi}(f_0, f_{\rm c}) 
%		&= -j 2\pi (\grave{\mathbf{N}} f_0 + \mathbf{I} f_{\rm c}) 
%		\in \mathbb{C}^{N \times N}, \\[3pt]
%		{\mathbf{B}}'(\boldsymbol{\delta}) 
%		&= 
%		\begin{bmatrix} 
%			\mathbf{0}_{M-1}^{\rm T} \\[2pt]
%			{\mathbf{B}}(\boldsymbol{\delta})
%		\end{bmatrix} 
%		\in \mathbb{C}^{M \times (M-1)}.
%	\end{aligned}
%%	\label{eq:aux_var2}
%\end{equation}
%

The diagonal matrix $\boldsymbol{\Xi}(f_0, f_{\rm c})$ is obtained using the product rule and the complex exponential definitions of $a(\cdot)$ and $b(\cdot)$ in \eqref{eq:ab}. 
The matrix ${\mathbf{B}}'(\boldsymbol{\delta})$ is introduced as the zero-padded version of ${\mathbf{B}}(\boldsymbol{\delta})$ (i.e., excluding $\tau_0$), which ensures that $\mathbf{S}^{(\boldsymbol{\delta})}\in\mathbb{C}^{MN\times(M-1)}$ and that the first $N$ entries (corresponding to the reference antenna $m=0$) are zero.

\subsection{Jacobian of ${\boldsymbol{\upsilon}}$}\label{subsec:D_upsilon}
Due to the relation between $[\phi \,\, R]^{\mathrm{T}}$ and $\delta_m$ defined in \eqref{eq:delta_R_pi}, the Jacobian of $\boldsymbol{\upsilon}$ can be expressed from $\mathbf{D}(\boldsymbol{\theta})$ in \eqref{eq:D_theta} using the chain rule as
\begin{equation}
	\mathbf{D}(\boldsymbol{\upsilon}) 
	=  \mathbf{D}(\boldsymbol{\theta}(\boldsymbol{\upsilon})) \boldsymbol{\theta}^{(\boldsymbol{\upsilon})}
    = [\, \mathbf{S}^{(\tau_0)} \,\, \mathbf{S}^{(\phi)} \,\, \mathbf{S}^{(R)} \,\, \mathbf{S}^{(\gamma)} \,],
	\label{eq:D_upsilon2}
\end{equation}
where $\boldsymbol{\theta}^{(\boldsymbol{\upsilon})} \in \mathbb{R}^{(M+2) \times 5}$ is given by
\begin{equation}
	\boldsymbol{\theta}^{(\boldsymbol{\upsilon})}
	\!	= 	\!\frac{\partial \boldsymbol{\theta}(\boldsymbol{\upsilon})}{\partial \boldsymbol{\upsilon}^{\rm T}}
		\!= 	\!
	\begin{bmatrix}
		\begin{bmatrix}
			1 \\[2pt] \mathbf{0}_{M-1}
		\end{bmatrix} 
	\!	& \!
		\begin{bmatrix}
			0 \\[2pt] \boldsymbol{\delta}^{(\phi)} 
		\end{bmatrix} 		
	\!	& 	\!	
		\begin{bmatrix}
			0 \\[2pt] \boldsymbol{\delta}^{(R)} 
		\end{bmatrix} 	
	\!	&\! \mathbf{0}_{M,2} \\[2pt]
		\mathbf{0}_2 & \mathbf{0}_2 & \mathbf{0}_2 & \mathbf{I}_2
	\end{bmatrix}.
\end{equation}
The partial-derivative vectors are defined as 
$\boldsymbol{\delta}^{(R)} = [\delta_1^{(R)} \,\, \cdots \,\, \delta_{M-1}^{(R)}]^{\rm T}\in\mathbb{R}^{M-1}$
and 
$\boldsymbol{\delta}^{(\phi)} = [\delta_1^{(\phi)} \,\, \cdots \,\, \delta_{M-1}^{(\phi)}]^{\rm T}\in\mathbb{R}^{M-1}$, 
%
%\begin{equation}
%	\begin{aligned}
%		\boldsymbol{\delta}^{(R)} 
%		&= \frac{\partial \boldsymbol{\delta}}{\partial R} 
%		= [\delta_1^{(R)} \,\, \cdots \,\, \delta_{M-1}^{(R)}]^{\rm T} 
%		\in \mathbb{R}^{M-1}, \\[3pt]
%		\boldsymbol{\delta}^{(\phi)} 
%		&= \frac{\partial \boldsymbol{\delta}}{\partial \phi} 
%		= [\delta_1^{(\phi)} \,\, \cdots \,\, \delta_{M-1}^{(\phi)}]^{\rm T} 
%		\in \mathbb{R}^{M-1},
%	\end{aligned}
%\end{equation}
%
and $\delta_m^{(R)} = \frac{\partial \delta_m}{\partial R}$ and $\delta_m^{(\phi)} = \frac{\partial \delta_m}{\partial \phi}$ are given by
	\begin{equation}
		\begin{aligned}
			\delta_m^{(R)} 
			&\!	= \!\frac{(R\cos \phi \!-\! x_m)\cos \phi \!+ \!(R\sin\phi \!-\! y_m)\sin\phi }
			{c\sqrt{(R \cos\phi-x_m)^2 \!+\! (R\sin\phi - y_m)^2}} 
		\!	- \! \frac{1}{c} \\[4pt]
		%	\stackrel{R\to \infty}{=} 0, \\[4pt]
			\text{and } \delta_m^{(\phi)} 
			& \!= \!\frac{R (x_m\sin \phi - y_m \cos \phi)} 
			{c\sqrt{(R \cos \phi-x_m)^2 + (R\sin \phi - y_m)^2}} 
			%\stackrel{R\to \infty}{=} \frac{1}{c}(x_m\sin \phi - y_m\cos \phi).
		\end{aligned}
		\label{eq:partial_delta_m}
	\end{equation}
using the TDoA mapping of \eqref{eq:delta_R_pi}. 
%Notice that, in the far-field regime, $\delta_m^{(R)}$ becomes independent of $R$ in the limit $R \to \infty$, implying that the model no longer carries information about $R$, and hence the distance cannot be estimated. 
Since $R$ and $\phi$ affect only $\boldsymbol{\delta}$, $\mathbf{D}(\boldsymbol{\upsilon})$ can be written compactly as the right-hand side of \eqref{eq:D_upsilon2} using the quantities $\mathbf{S}^{(R)} 
		= \frac{\partial \mathbf{s}(\boldsymbol{\theta}(\boldsymbol{\upsilon}))}{\partial R} 
		= \mathbf{S}^{(\boldsymbol{\delta})}\boldsymbol{\delta}^{(R)}$ and $\mathbf{S}^{(\phi)} 
		= \frac{\partial \mathbf{s}(\boldsymbol{\theta}(\boldsymbol{\upsilon}))}{\partial \phi} 
		= \mathbf{S}^{(\boldsymbol{\delta})}\boldsymbol{\delta}^{(\phi)}$, where the chain rule was used.
%
\begin{comment}
\begin{equation}
	\mathbf{D}(\boldsymbol{\upsilon}) 
	= [\, \mathbf{S}^{(\tau_0)} \,\, \mathbf{S}^{(R)} \,\, \mathbf{S}^{(\phi)} \,\, \mathbf{S}^{(\gamma)} \,],
	%\label{eq:D_upsilon}
\end{equation}
%
where
%
\begin{equation}
	\begin{aligned}
		\mathbf{S}^{(R)} 
		&= \frac{\partial \mathbf{s}(\boldsymbol{\theta}(\boldsymbol{\upsilon}))}{\partial R} 
		= \mathbf{S}^{(\boldsymbol{\delta})}\boldsymbol{\delta}^{(R)} 
		\in \mathbb{C}^{MN}, \\[3pt]
		\mathbf{S}^{(\phi)} 
		&= \frac{\partial \mathbf{s}(\boldsymbol{\theta}(\boldsymbol{\upsilon}))}{\partial \phi} 
		= \mathbf{S}^{(\boldsymbol{\delta})}\boldsymbol{\delta}^{(\phi)} 
		\in \mathbb{C}^{MN}.
	\end{aligned}
	%\label{eq:S_R_phi}
\end{equation}
\end{comment}
%
%To provide some intuition, here we note that the structure of $\mathbf{S}^{(\boldsymbol{\delta})}$ in \eqref{eq:S_delta_derivatives} contains only non zero elements in the appropriate rows associated with the $m$th TDoA.
%This happens ${\mathbf{B}}'(\boldsymbol{\delta})$ has a diagonal structure, therefore its Katri-Rao product with $\mathbf{\boldsymbol{\Xi}}(f_0, f_{\rm c}) \mathbf{A}(\tau_0 + \boldsymbol{\delta}^{\rm T})$ creases a larger matrix with a block-diagonal structure.
%Then, multiplying $\mathbf{S}^{(\boldsymbol{\delta})}$ with $\boldsymbol{\delta}^{(R)}$ or $\boldsymbol{\delta}^{(\phi)}$ guarantees that thet correct non zero rows matches the appropriate $m$th derivative.
%Then the Jacobian of $\boldsymbol{\upsilon}$ in \eqref{eq:D_upsilon2} can be written as 
%

%
\subsection{CRB of $\phi$ and $R$}\label{subsec:crb_phi_R_matrix}
Under circularly symmetric complex AWGN, $\mathbf{w}\sim\mathcal{CN}(\mathbf{0},\sigma^2\mathbf{I})$, the \ac{FIM} for $\boldsymbol{\upsilon}$ is 
\begin{equation}
	\mathbf{J}(\boldsymbol{\upsilon}) 
	= \frac{2}{\sigma^2} 
	\Re\!\left( \mathbf{D}(\boldsymbol{\upsilon})^{\rm H} 
	\mathbf{D}(\boldsymbol{\upsilon}) \right),
	\label{eq:J_upsilon}
\end{equation}
and the \ac{CRB} for $R$ and $\phi$ are obtained as
\begin{equation}
	{\rm CRB}_{\phi} = [\mathbf{J}(\boldsymbol{\upsilon})^{-1}]_{2,2}	\quad \text{and}	\quad 
    {\rm CRB}_{R} = [\mathbf{J}(\boldsymbol{\upsilon})^{-1}]_{3,3}.
	\label{eq:crb_R_phi_no_cf}
\end{equation}
Although the \ac{CRB} in \eqref{eq:crb_R_phi_no_cf} does not provide immediate analytical insight due to its matrix-form definition, it remains practically useful. 
For example, the \ac{CRB} can be computed numerically to benchmark the performance of an estimator. 
Moreover, the Jacobian and \ac{FIM} associated with $\boldsymbol{\theta}$ and $\boldsymbol{\upsilon}$ are employed in the implementation of the estimators of $R$ and $\phi$. 
Finally, since \eqref{eq:delta_R_pi} and \eqref{eq:partial_delta_m} impose no restriction on the array geometry, the expression in \eqref{eq:crb_R_phi_no_cf} is general and applicable to arbitrary array configurations.

\section{Closed-form CRB of $\delta_m$, $R$ and $\phi$}\label{sec:crb_R_phi_approx}
In this section, we provide closed-form CRB expressions for the estimation of $\delta_m$, $R$ and $\phi$.
We start by writing the modified delay model in Subsection \ref{subsec:modified_model}, which has a convenient structure without altering the CRB of the desired variables.
Then, via parameter transformation, we provide the TDoA CRB in Subsection \ref{subsec:CRB_tdoa}.
Using the same approach, we provide an intermediate and closed-form CRB expressions for distance and AoA in Subsection~\ref{subsec:CRBs_phi_R}.
\subsection{Modified Delay Model with $\gamma = \tilde{\gamma}\, b(\tau_0)$ }\label{subsec:modified_model}
%
%The first attempt to deriving the close-form CRB of $R$ and $\phi$ is to resolve the expressions of \eqref{eq:crb_R_phi_no_cf} analytically.
%Upon manipulating this expressions algebraically, we learned the factor preventing the closed-form derivation is the model asymmetry between $\tau_0$ and $\delta_m$ concerning the component $b(\delta_m)$ in \eqref{eq:s_m}.
%We recall that the random channel phase introduced by the delay $\exp(-j 2 \pi \tau_0 f_{\mathrm{c}})$ is absorbed into $\gamma = \tilde{\gamma}\, b(\tau_0)$ because this initial phase is indistinguishable from system random phase $\tilde{\gamma}$.
%This means that the phase rotation due to $f_{\rm c}$ is only capture by the TDoAs, $\delta_m$, and not the delay, $\tau_0$, therefore creating the asymmetry between these two variables.
%This asymmetry is explicitly captured in \eqref{eq:S_delta_derivatives} where $\mathbf{S}^{(\boldsymbol{\delta})}$ has $\mathbf{\boldsymbol{\Xi}}(f_0, f_{\mathrm{c}})$ in its definition, while  $\mathbf{S}^{(\tau_0)}$ has $\mathbf{\boldsymbol{\Xi}}(f_0, 0)$.

It is convenient to modify the model in \eqref{eq:s_m} by writing $\gamma~=~\tilde{\gamma}\, b(\tau_0)$ yielding
\begin{equation}
	\tilde{\mathbf{s}}_m = \mathbf{a}(\tau_0 + \delta_m)\, b(\tau_0 + \delta_m)\,\tilde{\gamma}.
	\label{eq:s_m_tilde}
\end{equation}
\noindent Note that $|b(\tau_0)|=1$, hence $|\gamma|=|\tilde{\gamma}|$.
%Compared to the original model in \eqref{eq:s_m}, we note that $\tilde{\gamma} = {\gamma}b(-\tau_0)$, implying that $\angle\tilde{\gamma} = \angle{\gamma} - 2 \pi \tau_0 f_{\mathrm{c}}$. 
The above model is a transformation of \eqref{eq:s_m} via $\tilde{\gamma} = {\gamma}b(-\tau_0)$, which couples the phase of $\tilde{\gamma}$ with $\tau_0$. 
Since $\tilde{\gamma}=\gamma b(-\tau_0)$ is a one-to-one reparametrization that leaves $\boldsymbol{\delta}$ unchanged, the induced \ac{CRB} for $\boldsymbol{\delta}$ (and therefore for $R$ and $\phi$ via \eqref{eq:delta_R_pi}) is invariant under this transformation. Therefore, the modified model can be used to compute the closed-form \ac{CRB} of $R$ and $\phi$.

%\subsection{FIM of delay parameter $\tilde{\boldsymbol{\theta}}_\tau$}
%
%When using the modified model in \eqref{eq:s_m_tilde}, 
It is more convenient to work with the delay parameters $\tilde{\boldsymbol{\theta}}_\tau = [\, \tau_0 \,\, \tau_1 \,\, \cdots \,\, \tau_{M-1} \,\, \gamma^{\mathrm{r}} \,\, \gamma^{\mathrm{i}} \,]^{\mathrm{T}}$ and obtain the TDoA, distance, and angle expressions through parameter transformation. 
The modified delay-based version of \eqref{eq:s_theta} is
\begin{equation}
	\tilde{\mathbf{s}}(\boldsymbol{\theta}) 
	= \big( {\mathbf{B}}(\boldsymbol{\tau}) \lozenge \mathbf{A}(\boldsymbol{\tau}^{\rm T}) \big) 
	\mathbf{1}_M \tilde{\gamma},
	\label{eq:s_theta_tilde}
\end{equation}
where $\boldsymbol{\tau}=  [\tau_0 \,\, \tau_1 \,\, \cdots \,\, \tau_{M-1} ]^{\rm T}$ is the delay vector. We assume $\mathbf{w}\sim\mathcal{CN}(\mathbf{0},\sigma^2\mathbf{I})$. 
Following the same steps as in Subsection~\ref{subsec:D_theta}, the Jacobian of $\tilde{\boldsymbol{\theta}}_\tau$ is $\mathbf{D}(\tilde{\boldsymbol{\theta}}_\tau) 	= \big[\, {\mathbf{S}}^{(\boldsymbol{\tau})} \,\, \mathbf{S}^{(\tilde{\gamma})} \, \big]$, where ${\mathbf{S}}^{(\boldsymbol{\tau})} 
		= \big( {\mathbf{B}}(\boldsymbol{\tau}) 
		\lozenge 
		(\boldsymbol{\Xi}(f_0, f_c)\, \mathbf{A}(\boldsymbol{\tau}^{\rm T})) \big) 
		\tilde{\gamma}$ 
        and
        $\mathbf{S}^{(\tilde{\gamma})} 
		= \big( {\mathbf{B}}(\boldsymbol{\tau}) 
		\lozenge 
		\mathbf{A}(\boldsymbol{\tau}^{\rm T}) \big) 
		\left[\mathbf{1}_M \,\, j\mathbf{1}_M\right]$.
%
\begin{comment}
\begin{equation}
	\mathbf{D}(\tilde{\boldsymbol{\theta}}_\tau) 
	= \big[\, {\mathbf{S}}^{(\boldsymbol{\tau})} \,\, \mathbf{S}^{(\tilde{\gamma})} \, \big],
\end{equation}
%
where
%
\begin{equation}
	\begin{aligned}
		{\mathbf{S}}^{(\boldsymbol{\tau})} 
		&= \big( {\mathbf{B}}(\boldsymbol{\tau}) 
		\lozenge 
		(\boldsymbol{\Xi}(f_0, f_c)\, \mathbf{a}(\boldsymbol{\tau}^{\rm T})) \big) 
		\tilde{\gamma}, \\[3pt]
		\mathbf{S}^{(\tilde{\gamma})} 
		&= \big( {\mathbf{B}}(\boldsymbol{\tau}) 
		\lozenge 
		\mathbf{A}(\boldsymbol{\tau}^{\rm T}) \big) 
		\left[\mathbf{1}_M \,\, j\mathbf{1}_M\right].
	\end{aligned}
	\label{eq:S_delta_derivatives_delay}
\end{equation}
\end{comment}
%
\begin{proposition}
The inverse \ac{FIM} of $\tilde{\boldsymbol{\theta}}_\tau$ has the structure
\begin{equation}
	\mathbf{J}(\tilde{\boldsymbol{\theta}}_\tau)^{-1} 
	= \frac{\sigma^2}{2} 
	\begin{bmatrix} 
		\mathbf{I}_M \frac{1}{\alpha} 
		+ \frac{\beta}{\alpha^2} \mathbf{1}_M \mathbf{1}_M^{\rm T} 
		& \cdots \\[3pt]
		\cdots & \ddots 
	\end{bmatrix},
	\label{eq:J_delay}
\end{equation}
where 
\begin{equation}
	\alpha 
= |\tilde{\gamma}|^2 (2\pi)^2 N 
\left( \frac{(N^2 - 1) f_0^2}{12} + f_{\rm c}^2 \right),
\label{eq:alpha}
\end{equation}
and $\beta$ is a scalar. 
%
%The variable $\beta$ and the lower-right rows and columns associated with the channel gain are omitted here for brevity.
\end{proposition}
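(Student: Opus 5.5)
The plan is to compute the Fisher information matrix $\mathbf{J}(\tilde{\boldsymbol{\theta}}_\tau)=\frac{2}{\sigma^2}\Re(\mathbf{D}^{\rm H}\mathbf{D})$ in $2\times 2$ block form: the $M\times M$ delay block, the $M\times 2$ delay--gain cross block, and the $2\times 2$ gain block. The upper-left block of the inverse then follows from the Schur complement and the Sherman--Morrison formula.

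The key structural fact comes from the Khatri--Rao product with the diagonal $\mathbf{B}(\boldsymbol{\tau})$. Because of it, the $m$th column of $\mathbf{S}^{(\boldsymbol{\tau})}$ is supported only on the $m$th length-$N$ block, where it equals $\boldsymbol{\Xi}(f_0,f_{\rm c})\mathbf{a}(\tau_m)b(\tau_m)\tilde{\gamma}$. Likewise, the columns of $\mathbf{S}^{(\tilde{\gamma})}$ stack $\mathbf{a}(\tau_m)b(\tau_m)$ and $j\mathbf{a}(\tau_m)b(\tau_m)$ over all $m$.

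\textbf{Step 1: the delay block.} Columns for distinct $m$ have disjoint support, so $\mathbf{S}^{(\boldsymbol{\tau})\rm H}\mathbf{S}^{(\boldsymbol{\tau})}$ is diagonal. Since $|a_n|=|b|=1$, each diagonal entry is
\[
|\tilde{\gamma}|^2(2\pi)^2\sum_{n}\Big(\big(n-\tfrac{N-1}{2}\big)f_0+f_{\rm c}\Big)^2 .
\]
The centered index sums to zero, so the cross term vanishes, and the sum of its squares is $N(N^2-1)/12$. This gives exactly $\alpha$ in \eqref{eq:alpha}, so the delay block of $\mathbf{J}$ is $\frac{2}{\sigma^2}\alpha\mathbf{I}_M$.

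\textbf{Step 2: the gain block.} Directly, $\mathbf{S}^{(\tilde{\gamma})\rm H}\mathbf{S}^{(\tilde{\gamma})}$ has real part $MN\mathbf{I}_2$.

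\textbf{Step 3: the cross block.} For antenna $m$, the inner product is
\[
\tilde{\gamma}^*\sum_n j2\pi\big((n-\tfrac{N-1}{2})f_0+f_{\rm c}\big)[1\;\; j] = j2\pi Nf_{\rm c}\,\tilde{\gamma}^*[1\;\; j],
\]
again using the zero-sum centered index. This is independent of $m$, so the cross block equals $\frac{2}{\sigma^2}\mathbf{1}_M\mathbf{c}^{\rm T}$ with $\mathbf{c}=\Re\big(j2\pi Nf_{\rm c}\tilde{\gamma}^*[1\;\; j]\big)^{\rm T}$. Writing $z=j2\pi Nf_{\rm c}\tilde{\gamma}^*$, we have $\mathbf{c}=[\Re z,\,-\Im z]^{\rm T}$, hence $\|\mathbf{c}\|^2=|\tilde{\gamma}|^2(2\pi Nf_{\rm c})^2$.

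\textbf{Step 4: Schur complement and inversion.} The upper-left block of $\mathbf{J}^{-1}$ is the inverse of the Schur complement
\[
\frac{2}{\sigma^2}\Big(\alpha\mathbf{I}_M-\kappa\mathbf{1}_M\mathbf{1}_M^{\rm T}\Big),\qquad \kappa=\frac{\|\mathbf{c}\|^2}{MN}=\frac{|\tilde{\gamma}|^2(2\pi)^2Nf_{\rm c}^2}{M}.
\]
Sherman--Morrison gives
\[
\frac{\sigma^2}{2}\Big(\frac{1}{\alpha}\mathbf{I}_M+\frac{\kappa}{\alpha(\alpha-M\kappa)}\mathbf{1}_M\mathbf{1}_M^{\rm T}\Big),
\]
which is the claimed form with $\beta=\kappa\alpha/(\alpha-M\kappa)$. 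The remaining blocks, marked $\cdots$ in the statement, are not needed.

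The main obstacle is making sure the rank-one update is well defined, that is, $\alpha>M\kappa$. This holds because
\[
\alpha-M\kappa=|\tilde{\gamma}|^2(2\pi)^2N(N^2-1)f_0^2/12>0 \quad\text{for } N>1.
\]
This also exposes $\beta$ explicitly as a function of $f_{\rm c}$ and $f_0$. A secondary point is being careful with the real part in the cross term, which reduces to a norm identity.
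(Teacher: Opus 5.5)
Your proposal is correct and is essentially the paper's proof. It uses the same blocks $\mathbf{X}=\alpha\mathbf{I}_M$, $\mathbf{Y}=2\pi N f_{\rm c}\mathbf{1}_M[\tilde{\gamma}^{\rm i}\;\,-\tilde{\gamma}^{\rm r}]$ and $\mathbf{Z}=MN\mathbf{I}_2$, followed by block inversion, and your centered sum $N(N^2-1)/12$ is the right value, whereas the paper's Identity~2 misprints it as $N(N^2-1)/2$. The only difference is that you take the Schur complement of the gain block and apply Sherman--Morrison, while the paper writes the top-left block as $\mathbf{X}^{-1}+\mathbf{X}^{-1}\mathbf{Y}\mathbf{G}^{-1}\mathbf{Y}^{\rm T}\mathbf{X}^{-1}$ with $\mathbf{G}=\mathbf{Z}-\mathbf{Y}^{\rm T}\mathbf{X}^{-1}\mathbf{Y}$; the two agree by Woodbury, and your route additionally gives the explicit $\beta=\kappa\alpha/(\alpha-M\kappa)$ and the invertibility check $\alpha>M\kappa$, neither of which the paper states.
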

\begin{proof}
	See Appendix~\ref{app:inv_FIM_theta_delay} for details.
\end{proof}
The key feature of \eqref{eq:J_delay} lies in the top-left block of the inverse matrix, which contains the \ac{CRB} associated with the delay parameters. 
This block exhibits a convenient diagonal-plus-rank-one structure that enables a closed-form expression for the \ac{CRB} of the TDoAs, distance, and angle via parameter transformation.
%Compared to \eqref{eq:S_delta_derivatives} (TDoA model), the modified model in \eqref{eq:S_delta_derivatives_delay} (delay model) exhibits a symmetric structure. 
%In the TDoA model, $\tau_0$ acts as a special case, making the closed-form expression of $\mathbf{J}(\boldsymbol{\theta})^{-1}$ more difficult to obtain. 
%In contrast, the delay model allows for the closed-form derivation of $\mathbf{J}(\tilde{\boldsymbol{\theta}}_\tau)^{-1}$ in \eqref{eq:J_delay}. 
%As discussed in Subsection~\ref{subsec:modified_model}, the modified model does not alter the \ac{CRB} of the TDoAs, implying that the \ac{CRB} of the TDoAs, distance, and angle can be obtained through parameter transformation from \eqref{eq:J_delay}.

\subsection{Closed-form CRB of TDoA}\label{subsec:CRB_tdoa}
\begin{proposition}
The \ac{CRB} for the TDoA is given by
\begin{equation}
	\begin{aligned}
		C(\delta_m) 
		&= \delta_m^{(\tilde{\boldsymbol{\theta}}_\tau)} 
		\mathbf{J}(\tilde{\boldsymbol{\theta}}_\tau)^{-1} 
		(\delta_m^{(\tilde{\boldsymbol{\theta}}_\tau)})^{\rm T} 
		= \frac{\sigma^2}{\alpha} \\[3pt]
		&= 
		\frac{\sigma^2}
		{|\gamma|^2 (2\pi)^2 N} 
		\cdot 
		\frac{12}{f_0^2 (N^2-1) + 12 f_{\rm c}^2},
	\end{aligned}
	\label{eq:C_delta}
\end{equation}
where $\delta_m^{(\tilde{\boldsymbol{\theta}}_\tau)} = \frac{\partial \delta_m}{\partial \tilde{\boldsymbol{\theta}}_\tau^{\rm T}} 
= \mathbf{e}_m^{\rm T} - \mathbf{e}_0^{\rm T} $, and $\mathbf{e}_m = [\,0,\, \ldots,\, 0,\, 1,\, 0,\, \ldots,\, 0\,]^{\rm T} \in \mathbb{R}^{(M+2)}$ denotes the $m$-th canonical basis (or unit) vector of dimension $M+2$, i.e., a vector whose $n$th element is $\{\mathbf{e}_m\}_n = 1$ if $n=m$ and zero otherwise.
%
%\[
%\{\mathbf{e}_m\}_n = 
%\begin{cases}
%	1, & n = m, \\[3pt]
%	0, & n \neq m.
%\end{cases}
%\]
%
%For completeness, recall that $\delta_m = \tau_m - \tau_0$, which implies that 
%$\delta_m^{(\tilde{\boldsymbol{\theta}}_\tau)} 
%= [-1,\, 0,\, \cdots,\, 1,\, \cdots]$ 
%has $-1$ in the first position (corresponding to $\tau_0$) and $1$ in the $m$th position.
\end{proposition}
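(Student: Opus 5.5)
The plan is to apply the standard CRB transformation rule to the scalar function $\delta_m(\tilde{\boldsymbol{\theta}}_\tau)=\tau_m-\tau_0$, and then use the diagonal-plus-rank-one structure of Proposition~1 (equation \eqref{eq:J_delay}).

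\emph{Step 1: the CRB for a function of the parameters.} For a differentiable scalar function $g(\tilde{\boldsymbol{\theta}}_\tau)$, the CRB is $g^{(\tilde{\boldsymbol{\theta}}_\tau)}\mathbf{J}(\tilde{\boldsymbol{\theta}}_\tau)^{-1}(g^{(\tilde{\boldsymbol{\theta}}_\tau)})^{\rm T}$. Here $\delta_m=\tau_m-\tau_0$ is linear in $\tilde{\boldsymbol{\theta}}_\tau$, so its gradient is the constant row $\mathbf{e}_m^{\rm T}-\mathbf{e}_0^{\rm T}$. This vector is supported only on the delay coordinates (the first $M$ entries) and is zero on $(\gamma^{\rm r},\gamma^{\rm i})$.

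Section~\ref{subsec:modified_model} already argues that the reparametrization $\tilde{\gamma}=\gamma b(-\tau_0)$ is one-to-one and leaves $\boldsymbol{\delta}$ unchanged. Hence the CRB of $\delta_m$ computed in the modified model equals the one in the original model \eqref{eq:s_theta}.

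\emph{Step 2: contract with the inverse FIM.} Because the gradient vanishes on the gain coordinates, the quadratic form only involves the top-left $M\times M$ block of \eqref{eq:J_delay}. That block is $\frac{\sigma^2}{2}\big(\frac{1}{\alpha}\mathbf{I}_M+\frac{\beta}{\alpha^2}\mathbf{1}_M\mathbf{1}_M^{\rm T}\big)$. Write $\mathbf{u}=\mathbf{e}_m-\mathbf{e}_0$, now viewed as a vector in $\mathbb{R}^M$. Then:
\begin{itemize}
\item $\mathbf{u}^{\rm T}\mathbf{1}_M=0$, so the rank-one term contributes nothing and the unknown scalar $\beta$ drops out;
\item $\mathbf{u}^{\rm T}\mathbf{u}=2$ for $m\neq 0$.
\end{itemize}
The CRB is therefore $\frac{\sigma^2}{2}\cdot\frac{2}{\alpha}=\frac{\sigma^2}{\alpha}$.

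\emph{Step 3: substitute $\alpha$.} Insert $\alpha$ from \eqref{eq:alpha} and use $|\tilde{\gamma}|=|\gamma|$. Factoring $\frac{(N^2-1)f_0^2}{12}+f_{\rm c}^2=\frac{f_0^2(N^2-1)+12f_{\rm c}^2}{12}$ gives the stated closed form in \eqref{eq:C_delta}.

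\emph{Main obstacle.} The step needing the most care is justifying that only the top-left block matters. This is immediate once one notes that the gradient has zero gain components, so the off-diagonal blocks (the ``$\cdots$'' entries of \eqref{eq:J_delay}) never enter the quadratic form. Everything else is bookkeeping given Proposition~1.
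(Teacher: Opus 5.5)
Your proposal is correct and follows the paper's own argument. The paper likewise contracts the diagonal-plus-rank-one top-left block of \eqref{eq:J_delay} with $\mathbf{e}_m-\mathbf{e}_0$, reduced without loss of generality to the $2\times 2$ block and the vector $[-1\;\,1]$, so the $\beta/\alpha^2$ term cancels and the $1/\alpha$ term is doubled to give $\sigma^2/\alpha$. Your remarks that the gradient vanishes on the gain coordinates (so the off-diagonal blocks never enter) and that $m\neq 0$ is required just make explicit what the paper's ``without loss of generality'' step leaves implicit.
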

\begin{proof}
	Using \eqref{eq:J_delay}, the transformation in \eqref{eq:C_delta} can, without loss of generality, be written as
	\begin{equation}
		\begin{aligned}
			C(\delta_m) 
			&= 
			\begin{bmatrix}
				-1 & 1
			\end{bmatrix}
			\!\left(
			\frac{1}{\alpha}
			\begin{bmatrix}
				1 & 0 \\[2pt] 0 & 1
			\end{bmatrix}
			+
			\frac{\beta}{\alpha^2}
			\begin{bmatrix}
				1 & 1 \\[2pt] 1 & 1
			\end{bmatrix}
			\right)
			\!
			\begin{bmatrix}
				-1 \\[2pt] 1
			\end{bmatrix}
%			\\[3pt]
%			&= \frac{\sigma^2}{2}\cdot\frac{2}{\alpha}, 
%			\qquad \forall\, m.
		\end{aligned}
		\label{eq:prop2_proof}
	\end{equation}
	 $\forall\, m$. The term associated with $\beta / \alpha^2$ cancels because the left and right vectors have opposite signs, while the $1/\alpha$ term remains and is scaled by $2$, since
	$[-1 \,\, 1] [-1 \,\, 1]^{\rm T} = 2$.
\end{proof}

The structure of the TDoA CRB in \eqref{eq:C_delta} is very insightful. 
The parameter $\alpha$ in \eqref{eq:alpha} captures how signal parameters affect the CRB. 
Moreover, it shows how enforcing the carrier-frequency constraint through $b(\tau)$ in \eqref{eq:ab} significantly reduces the CRB compared to the case where the constraint is ignored (e.g., setting $f_{\rm c}=0$). Nonetheless, the simplified model with $f_{\rm c}=0$ remains useful for the CRB achievability analysis in Section~\ref{sec:estimator_R_phi}.

\subsection{CRBs of $\phi$ and $R$}\label{subsec:CRBs_phi_R}
%
%In this subsection, we first derive a intermediate matrix-form \ac{CRB} expression for $\phi$ and $R$, which is further computed in closed-form for ULA.
%First, we write the Jacobian of ${\tilde{\boldsymbol{\upsilon}}} = [\,\tau_0 \,\, \phi \,\, R \,\, \tilde{\gamma}^{\rm r} \,\, \tilde{\gamma}^{\rm i}\,]^{\rm T}$ as
In this subsection, we first derive an intermediate matrix-form \ac{CRB} expression for $\phi$ and $R$, which is later specialized to the ULA case. The Jacobian of 
$\tilde{\boldsymbol{\upsilon}} = [\,\tau_0 \,\, \phi \,\, R \,\, \tilde{\gamma}^{\rm r} \,\, \tilde{\gamma}^{\rm i}\,]^{\rm T}$
is written as
\begin{equation}
	\mathbf{D}(\tilde{\boldsymbol{\upsilon}}) 
	= \mathbf{D}(\tilde{\boldsymbol{\theta}}_\tau(\boldsymbol{\upsilon}))\, \tilde{\boldsymbol{\theta}}_\tau^{(\tilde{\boldsymbol{\upsilon}})},
	\label{eq:D_upsilon_app}
\end{equation}
%
%where $\tilde{\boldsymbol{\theta}}_\tau^{(\tilde{\boldsymbol{\upsilon}})} \in \mathbb{R}^{(M+2) \times 5}$ is given by
where $\tilde{\boldsymbol{\theta}}_\tau^{(\tilde{\boldsymbol{\upsilon}})} \in \mathbb{R}^{(M+2)\times 5}$ is 
\begin{equation}
	\tilde{\boldsymbol{\theta}}_\tau^{(\tilde{\boldsymbol{\upsilon}})} 
	= \frac{\partial \tilde{\boldsymbol{\theta}}_\tau(\boldsymbol{\upsilon})}{\partial \tilde{\boldsymbol{\upsilon}}^{\rm T}}
	= 
	\begin{bmatrix}
		\mathbf{U} & \mathbf{0}_{M,2} \\[2pt]
		\mathbf{0}_{3,2} & \mathbf{I}_2
	\end{bmatrix},
	\label{eq:theta_upsilon}
\end{equation}
\begin{equation}
	\text{and }\mathbf{U} 
	= [\,\mathbf{1}_M \,\, \boldsymbol{\delta}_0^{(\phi)} \,\, \boldsymbol{\delta}_0^{(R)}\,] 
	\in \mathbb{R}^{M \times 3}
\end{equation}
%
%is the sub-matrix that converts the delay vector to the parameters $[\,\tau_0 \,\, \phi \,\, R\,]^{\rm T}$. 
%This follows from the fact that 
%$\tilde{\boldsymbol{\theta}}_\tau 
%= [\,(\delta_0 + \tau_0) \,\, (\delta_1 + \tau_0) \,\, \cdots \,\, (\delta_{M-1} + \tau_0) \,\, \tilde{\gamma}^{\rm r} \,\, \tilde{\gamma}^{\rm i}\,]^{\rm T}$.
%Lastly, the angle and Distance-related partial derivative vectors of size $M$, with zero in their first element, are  $\boldsymbol{\delta}_0^{(\phi)} = [0\,\, (\boldsymbol{\delta}^{(\phi)})^{\rm T}]^{\rm T}$ and $\boldsymbol{\delta}_0^{(R)} = [0\,\, (\boldsymbol{\delta}^{(R)})^{\rm T}]^{\rm T}$.
%This reflects the fact an equivalent way of writing the TDoA model is to set $\delta_0 = 0$.
maps the delay vector to the parameters $[\,\tau_0 \,\, \phi \,\, R\,]^{\rm T}$.  
This follows from 
$\tilde{\boldsymbol{\theta}}_\tau 
= [\,(\delta_0+\tau_0)\,\,\cdots\,\,(\delta_{M-1}+\tau_0)\,\,\tilde{\gamma}^{\rm r}\,\,\tilde{\gamma}^{\rm i}\,]^{\rm T}$.
The derivative vectors $\boldsymbol{\delta}_0^{(\phi)} = [0\,\,(\boldsymbol{\delta}^{(\phi)})^{\rm T}]^{\rm T}$ and $\boldsymbol{\delta}_0^{(R)} = [0\,\,(\boldsymbol{\delta}^{(R)})^{\rm T}]^{\rm T}$ have zeros in their first entry, reflecting the equivalent representation $\delta_0 = 0$.

Combining \eqref{eq:J_delay}, \eqref{eq:D_upsilon_app}--\eqref{eq:theta_upsilon}, \eqref{eq:J_theta_delay_app}, the identities in Appendix~\ref{app:inv_FIM_theta_delay}, and \eqref{eq:D_upsilon_app}--\eqref{eq:theta_upsilon}, the inverse FIM takes the form

\begin{equation}
	\mathbf{J}(\tilde{\boldsymbol{\upsilon}})^{-1} 
	= \frac{\sigma^2}{2} 
	\begin{bmatrix}
		\frac{1}{\alpha}(\mathbf{U}^{\rm T}\mathbf{U})^{-1} 
		+ \frac{\beta}{\alpha^2}\mathbf{W} 
		& \cdots \\[3pt]
		\cdots & \ddots
	\end{bmatrix},
	\label{eq:J_theta_upsilon}
\end{equation}
where $
\mathbf{W} 
= (\mathbf{U}^{\rm T}\mathbf{U})^{-1} 
\mathbf{U}^{\rm T}\mathbf{1}_M 
\mathbf{1}_M^{\rm T}\mathbf{U} 
(\mathbf{U}^{\rm T}\mathbf{U})^{-1}$. 
This structure enables a direct extraction of the CRBs for $\phi$ and $R$.
\begin{proposition}
The \ac{CRB} values of $\phi$ and $R$ are 
\begin{equation}
	\begin{aligned}
		C(\phi) 
		&= [\,\mathbf{J}(\upsilon)^{-1}\,]_{2,2} 
		= \frac{\sigma^2}{2\alpha} 
		[(\mathbf{U}^{\rm T}\mathbf{U})^{-1}]_{2,2} \\[2pt]
		\text{and } C(R) 
		&= [\,\mathbf{J}(\upsilon)^{-1}\,]_{3,3} 
		= \frac{\sigma^2}{2\alpha} 
		[(\mathbf{U}^{\rm T}\mathbf{U})^{-1}]_{3,3}.
	\end{aligned}
	\label{eq:CRB_phi_R}
\end{equation}
\end{proposition}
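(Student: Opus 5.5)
The plan is to read off the CRBs from the top-left $3\times3$ block of \eqref{eq:J_theta_upsilon} and show that the rank-one term $\frac{\beta}{\alpha^2}\mathbf{W}$ contributes nothing to the $(2,2)$ and $(3,3)$ entries. By \eqref{eq:theta_upsilon}, the map $\tilde{\boldsymbol{\upsilon}}\mapsto\tilde{\boldsymbol{\theta}}_\tau$ is affine in the delay block at first order, with Jacobian $\mathbf{U}$, and the identity on the gain block. Hence $\mathbf{J}(\tilde{\boldsymbol{\upsilon}}) = (\tilde{\boldsymbol{\theta}}_\tau^{(\tilde{\boldsymbol{\upsilon}})})^{\rm T}\mathbf{J}(\tilde{\boldsymbol{\theta}}_\tau)\,\tilde{\boldsymbol{\theta}}_\tau^{(\tilde{\boldsymbol{\upsilon}})}$. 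Inverting this using the block structure from the first proposition and Appendix~\ref{app:inv_FIM_theta_delay} gives \eqref{eq:J_theta_upsilon}, which I take as given.

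Next I would justify that the CRB of $(\phi,R)$ is the same in the modified model $\tilde{\boldsymbol{\upsilon}}$ as in the original $\boldsymbol{\upsilon}$. The argument of Subsection~\ref{subsec:modified_model} applies: $\tilde\gamma=\gamma b(-\tau_0)$ is a bijective reparametrization that touches only $\tau_0$ and the gain. The Jacobian between $\boldsymbol{\upsilon}$ and $\tilde{\boldsymbol{\upsilon}}$ is therefore identity on the $\phi,R$ coordinates and block-triangular elsewhere. As a result, $[\mathbf{J}(\boldsymbol{\upsilon})^{-1}]_{k,k}=[\mathbf{J}(\tilde{\boldsymbol{\upsilon}})^{-1}]_{k,k}$ for $k=2,3$.

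The key step is to show that $\mathbf{W}$ has zero entries at positions $(2,2)$ and $(3,3)$. Since $\mathbf{1}_M$ is the first column of $\mathbf{U}$, we have $\mathbf{U}\mathbf{e}_1=\mathbf{1}_M$, with $\mathbf{e}_1\in\mathbb{R}^3$. Then
\begin{equation*}
(\mathbf{U}^{\rm T}\mathbf{U})^{-1}\mathbf{U}^{\rm T}\mathbf{1}_M=(\mathbf{U}^{\rm T}\mathbf{U})^{-1}\mathbf{U}^{\rm T}\mathbf{U}\mathbf{e}_1=\mathbf{e}_1,
\end{equation*}
so $\mathbf{W}=\mathbf{e}_1\mathbf{e}_1^{\rm T}$. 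Its only nonzero entry is the $(1,1)$ position, which corresponds to $\tau_0$. Therefore the $\beta$ term affects only the CRB of $\tau_0$, exactly as it cancelled in the TDoA proof of \eqref{eq:C_delta}. What remains is $\frac{\sigma^2}{2\alpha}[(\mathbf{U}^{\rm T}\mathbf{U})^{-1}]_{k,k}$, which is \eqref{eq:CRB_phi_R}.

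The main obstacle is confirming that \eqref{eq:J_theta_upsilon} really holds, specifically that the cross terms between the delay and gain blocks do not leak into the $3\times3$ block after the transformation. I would verify this via the Schur complement. If $\mathbf{J}(\tilde{\boldsymbol{\theta}}_\tau)$ has delay block $\frac{2}{\sigma^2}(\alpha\mathbf{I}_M - c\,\mathbf{1}_M\mathbf{1}_M^{\rm T})$ after eliminating the gain, then the transformed Schur complement is $\frac{2}{\sigma^2}(\alpha\mathbf{U}^{\rm T}\mathbf{U}-c\,\mathbf{U}^{\rm T}\mathbf{1}_M\mathbf{1}_M^{\rm T}\mathbf{U})$. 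Its inverse, by Sherman--Morrison, is $\frac{\sigma^2}{2}\bigl(\frac1\alpha(\mathbf{U}^{\rm T}\mathbf{U})^{-1}+\frac{\beta}{\alpha^2}\mathbf{W}\bigr)$, with $\beta=c/(1-cM/\alpha)$ because $\mathbf{1}_M^{\rm T}\mathbf{U}(\mathbf{U}^{\rm T}\mathbf{U})^{-1}\mathbf{U}^{\rm T}\mathbf{1}_M=M$. This step also needs $\mathbf{U}$ to have full column rank, which holds for near-field geometries where $\boldsymbol{\delta}_0^{(\phi)}$, $\boldsymbol{\delta}_0^{(R)}$ and $\mathbf{1}_M$ are linearly independent.
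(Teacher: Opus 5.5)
Your proof is correct, and its core step is exactly the paper's argument: using $\mathbf{U}\mathbf{e}_1=\mathbf{1}_M$ to get $\mathbf{W}=\mathbf{e}_1\mathbf{e}_1^{\rm T}$, so the $\beta$ term affects only the $\tau_0$ entry. Two further pieces are correct additions that the paper leaves implicit: your Schur-complement and Sherman--Morrison check of \eqref{eq:J_theta_upsilon}, which gives $\beta=c/(1-cM/\alpha)$ (agreeing with the paper's $\beta$) and makes explicit that $\mathbf{U}$ needs full column rank, and your explicit argument that the $(\phi,R)$ entries are the same under $\boldsymbol{\upsilon}$ and $\tilde{\boldsymbol{\upsilon}}$.
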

\begin{proof}
	Let $\mathbf{e}_0 = [\,1 \,\, 0 \,\, 0\,]^{\rm T}$ denote the first canonical (unit) vector in $\mathbb{R}^3$. 
	Because the first column of $\mathbf{U}$ is $\mathbf{1}_M$, we can write $\mathbf{U}\mathbf{e}_0 = \mathbf{1}_M$. 
	Substituting this into the definition of $\mathbf{W}$ gives 
	$\mathbf{W} = \mathbf{e}_0 \mathbf{e}_0^{\rm T}$, 
	such that $[\mathbf{W}]_{2,2} = [\mathbf{W}]_{3,3} = 0$.
\end{proof}
The CRBs in \eqref{eq:CRB_phi_R} cleanly separate signal-dependent terms (sub-carrier spacing, center frequency, and number of sub-carriers) through $\alpha$ from array-dependent terms (geometry, spacing, and size) through $\mathbf{U}$.  
This modularity enables separate optimization of waveform and array design.  
The expression also supports arbitrary array geometries through $\mathbf{U}$, which is useful for unconventional designs such as movable or irregular antenna layouts.

Next, we derive approximate closed-form expressions for the ULA, explicitly showing the dependence on array parameters.
	
\begin{proposition}
The approximate \ac{CRB} expressions for the ULA are
\begin{equation}
	\begin{aligned}
		C_{\rm ULA}(R) 
		 = \frac{\sigma^2}{2\alpha} 
		\left( \frac{2 R^2 c}{\delta_{\rm x}^2 \sin^2 \phi} \right)^2 \!\!
		\frac{180}{M (M^2 - 1)(M^2 - 4)},
	\end{aligned}
	\label{eq:C_R_ULA}
\end{equation}
\noindent The units are consistent since $\alpha$ scales as $|\tilde{\gamma}|^2\,\mathrm{Hz}^2$, while the geometric terms originate from $\partial \delta_m/\partial R$ which includes $1/c$.
\begin{equation}
	\begin{aligned}
		\text{and } C_{\rm ULA}(\phi) 
		= \frac{\sigma^2}{2\alpha} 
		\left( \frac{c}{\delta_{\rm x} \sin \phi} \right)^2 
		\frac{12}{M (M^2 - 1)}.
	\end{aligned}
	\label{eq:C_phi_ULA}
\end{equation}
\end{proposition}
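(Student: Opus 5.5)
We start from the intermediate result \eqref{eq:CRB_phi_R}, which gives $C(\phi)=\frac{\sigma^2}{2\alpha}[(\mathbf{U}^{\rm T}\mathbf{U})^{-1}]_{2,2}$ and $C(R)=\frac{\sigma^2}{2\alpha}[(\mathbf{U}^{\rm T}\mathbf{U})^{-1}]_{3,3}$. The task is therefore purely geometric: obtain tractable expressions for the columns $\boldsymbol{\delta}_0^{(\phi)}$ and $\boldsymbol{\delta}_0^{(R)}$ of $\mathbf{U}$ for the ULA, and invert the $3\times3$ Gram matrix $\mathbf{U}^{\rm T}\mathbf{U}$.

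\textbf{Step 1: Fresnel expansion of the derivatives.} For the ULA we set $x_m=m\delta_{\rm x}$ and $y_m=0$, with $m=-M/2,\dots,M/2-1$ and reference element $m=0$. We expand \eqref{eq:partial_delta_m} to second order in $m\delta_{\rm x}/R$, which is the usual near-field (Fresnel) regime with $R\gg M\delta_{\rm x}$. Using $d_m\approx R-m\delta_{\rm x}\cos\phi+\frac{m^2\delta_{\rm x}^2\sin^2\phi}{2R}$, we obtain
\[
\delta_m^{(\phi)}\approx \frac{m\delta_{\rm x}\sin\phi}{c},\qquad \delta_m^{(R)}\approx -\frac{m^2\delta_{\rm x}^2\sin^2\phi}{2R^2c}.
\]
Hence, up to the scalings $\kappa_\phi=\delta_{\rm x}\sin\phi/c$ and $\kappa_R=-\delta_{\rm x}^2\sin^2\phi/(2R^2c)$, the columns of $\mathbf{U}$ are $\mathbf{1}_M$, $\mathbf{m}=\{m\}$ and $\mathbf{m}^2=\{m^2\}$. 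This step is the only approximation. Keeping only the $\phi$ term linear in $m$ suffices, since the higher-order corrections are $O(1/R)$ and do not affect the leading-order expression.

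\textbf{Step 2: Invert the Gram matrix in an orthogonal basis.} Write $\mathbf{U}=\mathbf{V}\,\mathrm{diag}(1,\kappa_\phi,\kappa_R)$, where $\mathbf{V}=[\mathbf{1}\;\mathbf{m}\;\mathbf{m}^2]$. Then
\[
[(\mathbf{U}^{\rm T}\mathbf{U})^{-1}]_{k,k}=\kappa_k^{-2}[(\mathbf{V}^{\rm T}\mathbf{V})^{-1}]_{k,k}.
\]
Each diagonal entry of the inverse Gram matrix equals the reciprocal of the squared norm of the residual of that column after projecting onto the span of the other columns.

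For a symmetric index set ($m=-(M-1)/2,\dots,(M-1)/2$), $\mathbf{m}$ is orthogonal to both $\mathbf{1}$ and $\mathbf{m}^2$. Its squared norm is $M(M^2-1)/12$, which gives $C_{\rm ULA}(\phi)$ in \eqref{eq:C_phi_ULA}.

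For $R$, the residual of $\mathbf{m}^2$ after removing its mean is the discrete second-order Legendre polynomial $m^2-\overline{m^2}$. Its squared norm is
\[
\sum m^4-\frac{\left(\sum m^2\right)^2}{M}=\frac{M(M^2-1)(M^2-4)}{180}.
\]
Dividing by $\kappa_R^2$ yields the factor $(2R^2c/(\delta_{\rm x}^2\sin^2\phi))^2$ in \eqref{eq:C_R_ULA}.

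\textbf{Step 3: Shift invariance of the indices.} The actual index set is offset by $1/2$ from the symmetric one. However, translating $m\mapsto m+s$ maps $\mathrm{span}\{\mathbf{1},\mathbf{m},\mathbf{m}^2\}$ to itself through an upper-triangular change of basis with unit diagonal. Such a change leaves the residual norms of the last column unchanged, which settles the $R$ entry exactly.

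For the $\phi$ entry, the residual of $\mathbf{m}$ against $\mathrm{span}\{\mathbf{1},\mathbf{m}^2\}$ does change under translation. We therefore compute it with the symmetric-index approximation and argue that the discrepancy is of lower order in $M$; alternatively, we fall back on the standard Fisher block reasoning.

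\textbf{Main obstacle.} The delicate point is justifying the approximations: the Fresnel truncation (which requires $R\gg M\delta_{\rm x}$) and the index-symmetrization for $\phi$. The closed-form power-sum evaluation of $\sum m^4-(\sum m^2)^2/M$ is routine but error-prone, and we would verify it against small cases such as $M=3$, where the expression evaluates to $3\cdot 8\cdot 5/180=2/3$.
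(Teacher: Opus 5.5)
Your proposal is correct and follows essentially the paper's route: the same second-order expansions $\delta_m^{(\phi)}\approx x_m\sin\phi/c$ and $\delta_m^{(R)}\approx -x_m^2\sin^2\phi/(2R^2c)$, decoupling of the $\phi$ column by the odd symmetry of a centered array, and the $(\tau_0,R)$ block, where your residual norm $\sum m^4-(\sum m^2)^2/M$ is the same Schur complement as the paper's $M/\det(\mathbf{U}_{\tau_0,R}^{\rm T}\mathbf{U}_{\tau_0,R})$ (up to the $\kappa_R^2$ scaling). Your Step~3 is unnecessary, because the paper's proof assumes the array is centered on the reference element (odd $M$, $x_m\in\{-\tfrac{M-1}{2},\dots,\tfrac{M-1}{2}\}\delta_{\rm x}$); with the offset indexing the $\phi$ entry becomes $\frac{12}{M(M^2-1)}\bigl(1+\frac{15}{M^2-4}\bigr)$, already a factor $2.25$ at $M=4$, so the ``lower order in $M$'' argument holds only asymptotically, and you should state the $\phi$ formula for the centered array.
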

\begin{proof}
	See Appendix~\ref{app:CRB_ULA} for details.
\end{proof}
%
%These bounds naturally reflect the transition between near- and far-field regimes.  
The CRB for the distance shows that the RMSE scales quadratically with $R$: doubling $R$ increases the RMSE by a factor of four for the same \ac{SNR}\footnote{In practice, SNR decreases with distance, so the degradation is larger. Path loss is omitted here to focus on the estimator structure.}.  
This behavior is expected, as sufficiently large $R$ places the transmitter in the far field, where distance becomes unidentifiable.
The antenna spacing has an opposite effect: doubling $\delta_{\rm x}$ reduces the RMSE by a factor of four.  
%Intestingly, this holds for any array geometry because scaling the antenna coordinates in \eqref{eq:partial_delta_m} is equivalent to scaling $R$.
The dependence on $1/\sin^2\phi$ indicates degraded performance at large $\phi$ for ULAs.  
This limitation can be mitigated by two-dimensional arrays spanning both the $x$- and $y$-axes, which is left to be generalized in future work.

The AoA CRB is independent of $R$ and also scales with $1/\sin^2\phi$.  
Its form closely matches the classical far-field result \cite{RimaxThesis}, with only a marginal improvement due to the $f_0$ term in $\alpha$, which is negligible because $f_{\rm c}$ dominates.  
A key advantage of the per-antenna model is its ability to resolve angular ambiguities caused by far-field grating lobes, which we leave for future investigation.

\section{Estimator and CRB Achievability}\label{sec:estimator_R_phi}
In this section, we use the expressions developed in Section~\ref{sec:system_model_single_band} to derive a benchmark estimator. 
Subsection~\ref{subsec:general_ML} introduces a general estimation framework under AWGN based on the iterative \ac{LM} method, which serves as the core of the proposed estimator. 
Subsection~\ref{subsec:3stage_algorithm} develops the three-stage estimator for the single-band AoA–distance model. 
Finally, Subsection~\ref{subsec:TSNRs} provides approximate SNR thresholds required for the estimator to achieve the TDoA and distance CRBs, respectively.
\subsection{General Maximum Likelihood Estimator under AWGN}\label{subsec:general_ML}
Consider the generic complex AWGN model
\begin{equation}
\mathbf{y}_\kappa = \mathbf{s}_\kappa(\boldsymbol{\kappa}) + \mathbf{w}, 
\qquad \mathbf{w}\sim\mathcal{CN}(\mathbf{0},\sigma^2\mathbf{I}_{N_\kappa}),
\label{eq:awgn_generic}
\end{equation}
where $\mathbf{y}_\kappa\in\mathbb{C}^{N_\kappa}$ and $\mathbf{s}_\kappa(\boldsymbol{\kappa})\in\mathbb{C}^{N_\kappa}$. The corresponding likelihood is
\begin{equation}
p(\mathbf{y}_\kappa \mid \boldsymbol{\kappa})
= \frac{1}{(\pi\sigma^2)^{N_\kappa}}
\exp\!\left(-\frac{\|\mathbf{y}_\kappa-\mathbf{s}_\kappa(\boldsymbol{\kappa})\|_2^2}{\sigma^2}\right).
\label{eq:likelihood_generic}
\end{equation}
where $\boldsymbol{\kappa}$ is a generic parameter vector, and $\mathbf{s}_{{\kappa}}(\cdot)$ determines the system model. The Jacobian matrix is $\mathbf{D}_\kappa(\boldsymbol{\kappa})=\partial \mathbf{s}_\kappa(\boldsymbol{\kappa})/\partial \boldsymbol{\kappa}^{\rm T}$. 
The maximum likelihood (ML) estimation of $\boldsymbol{\kappa}$ is $\hat{\boldsymbol{\kappa}} = \operatorname*{arg\,max}\limits_{\boldsymbol{\kappa}} p(\mathbf{y}_\kappa \mid \boldsymbol{\kappa})$.
%Several numerical methods have been proposed in the literature to solve problem in the form of~\eqref{eq:arg_max_kappa}. 
Assuming the CRB regularity conditions are satisfied, i.e.,  continuity of the model and compact parameter space, a suitable benchmark method to estimate $\hat{\boldsymbol{\kappa}}$ is the iterative LM algorithm described in~\cite[Table 5-3]{RimaxThesis}, which achieves the \ac{CRB} for multiple cases, e.g., the delay, \ac{AoD} and \ac{AoA} \cite{Bomfin_DMC_globecom}.

%The LM algorithm refines an initial estimate of the parameter vector by iteratively minimizing the error between the model and the observations.
%When not available, the initial estimates are typically obtained via search on a discrete subset of the parameter space.
%The LM method takes as input the observed data, $\mathbf{y}_{{\kappa}}$, an initial parameter estimate, $\hat{\boldsymbol{\kappa}}_{\rm ini}$, the model that relates the parameters to the observations, $\mathbf{s}_{{\kappa}}(\cdot)$, and the corresponding Jacobian of this model with respect to the parameters, $\mathbf{D}_{{\kappa}}(\cdot)$.

When an initial estimate $\hat{\boldsymbol{\kappa}}_{\rm ini}$ is available, the input-output relation of the LM algorithm can be written in general as
\begin{equation}
	\hat{\boldsymbol{\kappa}}_{\rm LM} = {\rm LM}(\mathbf{y}_{{\kappa}},\hat{\boldsymbol{\kappa}}^{\rm ini};\mathbf{s}_{{\kappa}}(\cdot),\mathbf{D}_{{\kappa}}(\cdot)).
\end{equation}
A detailed definition of the LM algorithm is described in~\cite[Table 5-3]{RimaxThesis}.
For brevity, we use the compact notation above for a general parameter $\boldsymbol{\kappa}$, which makes explicit the inputs $(\mathbf{y}_{{\kappa}},\hat{\boldsymbol{\kappa}}^{\rm ini})$, as well as the underlying model $(\mathbf{s}_{{\kappa}}(\cdot),\mathbf{D}_{{\kappa}}(\cdot))$. 

\begin{algorithm}[t!]
	\small 
	\caption{Single-Band 3-Stages LM Algorithm}
	\label{alg:single_band}
	\begin{algorithmic}[1]  % The number tells where line numbering starts
		\Input Observation $\mathbf{y} = [\mathbf{y}_0^{\rm T} \,\, \mathbf{y}_1^{\rm T}  \cdots \mathbf{y}_{M-1}^{\rm T}]^{\rm T}$	
		\Statex $/\star$ \texttt{\textcolor{brown}{Stage-1: Delay, \eqref{eq:s_tau}}} $\hfill \star/$
		\For{$m = 0$ to $M-1$}
		 	\State Initialize ${\boldsymbol{\theta}}_{\tau_m}^{\rm {\rm ini}} = [{\tau}_{m}^{\rm {\rm ini}} \,\, {\gamma}_{m}^{\rm r, ini} \,\, {\gamma}_{m}^{\rm i, ini}]^{\rm T}$ via FFT-based fix grid search \cite{RimaxThesis} taking $\mathbf{y}_m$ as input
			\State Refine $\hat{\boldsymbol{\theta}}_{\tau_m,{\rm LM}} = {\rm LM}(\mathbf{y}_m,\boldsymbol{\theta}_{\tau_m}^{\rm ini};\mathbf{s}_{{\tau}}(\cdot),\mathbf{D}_{{\tau}}(\cdot))$
		\EndFor	
		\Statex $/\star$ \texttt{\textcolor{brown}{Stage-2: TDoA, \eqref{eq:s_theta} and \eqref{eq:D_theta}}} $\hfill \star/$
		\State Initialize ${\boldsymbol{\theta}}^{\rm ini}$ using ${\delta}_{m}^{\rm ini} = \hat{\tau}_{m,{\rm LM}} - \hat{\tau}_{0,{\rm LM}}$ 
		\State Refine $\hat{\boldsymbol{\theta}}_{\rm LM} = {\rm LM}(\mathbf{y},{\boldsymbol{\theta}}^{\rm ini};\mathbf{s}_{\theta}(\cdot),\mathbf{D}_{\theta}(\cdot))$
		\Statex  $/\star$ \texttt{\textcolor{brown}{Stage-3: Dist./Ang., \eqref{eq:s_theta} and \eqref{eq:D_upsilon2}}} $\hfill \star/$
		\State Initialize ${\boldsymbol{\upsilon}}^{\rm ini}$ using $\hat{\boldsymbol{\theta}}_{\rm LM}$ (Eq. \eqref{eq:R_phi_hat_ini})
		\State Refine $\hat{\boldsymbol{\upsilon}}_{\rm LM} = {\rm LM}(\mathbf{y},{\boldsymbol{\upsilon}}^{\rm ini};\mathbf{s}_{\theta(\upsilon)}(\cdot),\mathbf{D}_{\theta(\upsilon)}(\cdot))$
		\Output $\hat{\boldsymbol{\upsilon}}_{\rm LM}$
	\end{algorithmic}
\end{algorithm}

\subsection{3-Stage Algorithm to Estimate $R$ and $\phi$}\label{subsec:3stage_algorithm}
The goal of this subsection is to derive an estimator for $R$ and $\phi$, or equivalently for $\boldsymbol{\upsilon}$. 
Using the LM algorithm to estimate $\boldsymbol{\upsilon}$ requires an initial value $\hat{\boldsymbol{\upsilon}}^{\rm ini}$, which is not straightforward to obtain. 
Since estimating $\boldsymbol{\theta}$ is simpler, a natural approach is to first estimate the TDoA vector $\boldsymbol{\theta}$ and then invert \eqref{eq:delta_R_pi}. 
However, refining $\boldsymbol{\theta}$ with LM also requires an initial estimate $\hat{\boldsymbol{\theta}}^{\rm ini}$, which can be obtained by estimating the delays per receive antenna.

Motivated by this dependency, we propose the 3-stage algorithm in Algorithm~\ref{alg:single_band}. 
\textit{Stage-1} estimates the per-antenna delays, \textit{Stage-2} estimates the TDoA vector using the delay estimates for initialization, and \textit{Stage-3} estimates $R$ and $\phi$ using the refined TDoAs.

\subsubsection{Stage-1: Delay}
The first stage estimates the delay at each receive antenna. 
Define the per-antenna parameter vector 
${\boldsymbol{\theta}}_{\tau_m} = [{\tau}_{m} \,\, \gamma_{m}^{\rm r} \,\, \gamma_{m}^{\rm i}]^{\rm T}$, 
with model and Jacobian
\begin{equation}
\begin{aligned}
\mathbf{s}_{\tau}({\boldsymbol{\theta}}_{\tau_m}) 
&= \mathbf{a}(\tau_m)\,\gamma_m,\\
\mathbf{D}_{\tau}({\boldsymbol{\theta}}_{\tau_m}) 
%&= \left[\frac{\partial \mathbf{s}_\tau}{\partial \tau_m}\;\; \frac{\partial \mathbf{s}_\tau}{\partial \gamma_m^{\rm r}}\;\; \frac{\partial \mathbf{s}_\tau}{\partial \gamma_m^{\rm i}}\right]
& = \left[\boldsymbol{\Xi}(f_0,0)\mathbf{a}(\tau_m)\gamma_m \;\; \mathbf{a}(\tau_m)\;\; j\mathbf{a}(\tau_m)\right],
\end{aligned}
\label{eq:s_tau}
\end{equation}
obtained from \eqref{eq:s_m} using $\gamma_m = b(\tau_m)\gamma$. 
This Stage-1 model intentionally neglects the $f_{\rm c}$-dependent phase term, since it is highly sensitive to coarse delay errors and is instead exploited in Stage-2.
The initial estimate $\hat{\boldsymbol{\theta}}_{\tau_m}^{\rm ini}$ is obtained via a fixed-grid delay search, efficiently implemented through an oversampled IFFT \cite{RimaxThesis,Bomfin_UW}. 
The LM refinement is then computed as 
$\hat{\boldsymbol{\theta}}_{\tau_m,{\rm LM}} = {\rm LM}(\mathbf{y}_m,\hat{\boldsymbol{\theta}}_{\tau_m}^{\rm ini};\mathbf{s}_{\tau}(\cdot),\mathbf{D}_{\tau}(\cdot))$.

\subsubsection{Stage-2: TDoA}
Assuming sufficiently accurate delay estimates, we initialize
\begin{equation}
\hat{\boldsymbol{\theta}}^{\rm ini}
=\left[\hat{\tau}_0^{\rm ini}\; \hat{\delta}_1^{\rm ini}\; \cdots\; \hat{\delta}_{M-1}^{\rm ini}\; \hat{\gamma}^{\rm r,ini}\; \hat{\gamma}^{\rm i,ini}\right]^{\rm T},
\hat{\delta}_m^{\rm ini}=\hat{\tau}_{m,{\rm LM}}-\hat{\tau}_{0,{\rm LM}}.
\end{equation}

The channel coefficient is initialized as  
$\hat{\gamma}^{\rm ini} = \hat{\gamma}^{\rm r,ini} + j \hat{\gamma}^{\rm i,ini}$  
by correlating the unit-gain model ($\gamma=1$) with the received signal.  
The LM refinement is given by  
$\hat{\boldsymbol{\theta}}_{\rm LM} = {\rm LM}(\mathbf{y},\hat{\boldsymbol{\theta}}^{\rm ini};\mathbf{s}_{\theta}(\cdot),\mathbf{D}_{\theta}(\cdot))$,  
using the model and Jacobian in \eqref{eq:s_theta} and \eqref{eq:D_theta}.
\subsubsection{Stage-3: Distance and AoA}
The initial estimates $\hat{R}^{\rm ini}$ and $\hat{\phi}^{\rm ini}$ are obtained by inverting \eqref{eq:delta_R_pi}. 
Since $R$ and $\phi$ are two unknowns, at least two TDoAs are needed. The most informative ones correspond to the antennas farthest from the reference antenna at $x_0=0$, as they provide the largest effective aperture.
Let $m_1$ and $m_2$ denote these antennas.  
Solving \eqref{eq:delta_R_pi} for $m_1$ and $m_2$ yields
\begin{equation}
	\begin{aligned}
		\hat{R}^{\rm ini} &=
		\frac{-c^2\hat{\delta}_{m_1}^2 x_{m_2} 
			+ c^2\hat{\delta}_{m_2}^2 x_{m_1}  
			+ x_{m_1}^2x_{m_2} - x_{m_1}x_{m_2}^2}
		{2(c\hat{\delta}_{m_1}x_{m_2} - c\hat{\delta}_{m_2}x_{m_1})}, \\[2pt]
		\hat{\phi}^{\rm ini} &=
		\cos^{-1}\!\left(\frac{c^3\hat{\delta}_{m_1}^2\hat{\delta}_{m_2}
			- c^3\hat{\delta}_{m_1}\hat{\delta}_{m_2}^{2}
			+ c\hat{\delta}_{m_1}x_{m_2}^{2}
			- c\hat{\delta}_{m_2}x_{m_1}^{2}}
		{-c^2\hat{\delta}_{m_1}^2 x_{m_2}
			+ c^2\hat{\delta}_{m_2}^2 x_{m_1}
			+ x_{m_1}^2x_{m_2}
			- x_{m_1}x_{m_2}^2}\right),
	\end{aligned}
	\label{eq:R_phi_hat_ini}
\end{equation}
computed using $\hat{\delta}_{m_1,{\rm LM}}$ and $\hat{\delta}_{m_2,{\rm LM}}$ from Stage-2 to form $\hat{\boldsymbol{\upsilon}}^{\rm ini}$.
Lastly, the final estimates are refined as  
$\hat{\boldsymbol{\upsilon}}_{\rm LM} = {\rm LM}(\mathbf{y},\hat{\boldsymbol{\upsilon}}^{\rm ini};\mathbf{s}_{\theta(\upsilon)}(\cdot),\mathbf{D}_{\theta(\upsilon)}(\cdot))$,  
where \eqref{eq:s_theta} uses $\boldsymbol{\theta}(\boldsymbol{\upsilon})$ and the Jacobian is given in \eqref{eq:D_upsilon2}.

\subsection{CRB Achievability via TSNR}\label{subsec:TSNRs}
ML estimators typically exhibit poor performance at low SNR, followed by a sharp MSE drop once the SNR exceeds a certain threshold, after which the CRB is closely approached. This threshold is known as the \ac{TSNR}.  
In this section, we characterize approximate TSNR values at which the three-stage estimator in Algorithm~\ref{alg:single_band} operates near the CRB. 
The analysis focuses on the TDoA and distance components, while AoA TSNR analysis is left for future work due to ambiguity-related challenges.

\subsubsection{TDoA TSNR in Stage-2}\label{subsubsec:TSNR_tdoa}
Recall that Stage-1 estimates per-antenna delays using the model in \eqref{eq:s_tau}, which does not exploit the $f_{\rm c}$-dependent phase.  
To assess TDoA achievability in Stage-2, we ask: \emph{what SNR is required for the estimates from Stage-1 delay to be accurate enough for the LM refinement in Stage-2?}  
The following proposition provides an approximate threshold.
\begin{proposition}
	Let $\rho = |\gamma|^2/\sigma^2$ denote the SNR.  
	The SNR required for TDoA estimation in Stage-2 of Algorithm~\ref{alg:single_band} to achieve the CRB in \eqref{eq:C_delta} can be approximated by
	\begin{equation}
		\rho> \rho_{\delta}^\star \approx \frac{12 f^2}{N f_0^2 (N^2-1)} \approx \frac{ 12}{N} \frac{f_{\rm c}^2}{B^2},
		\label{eq:rho_delta_star}
	\end{equation}
	where $B = N f_0$ is the signal bandwidth. 
\end{proposition}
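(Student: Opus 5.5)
The argument is a phase-ambiguity one. Stage-2 uses the full model in \eqref{eq:s_theta}, which contains the carrier term $b(\delta_m)=\exp(-j2\pi\delta_m f_{\rm c})$. Its likelihood in $\delta_m$ therefore has a fine oscillation of period $1/f_{\rm c}$, sitting under a coarse envelope of width about $1/B$. The LM refinement is a local method, so it converges to the correct peak, and hence to the CRB in \eqref{eq:C_delta}, only if the initialization $\hat\delta_m^{\rm ini}=\hat\tau_{m,{\rm LM}}-\hat\tau_{0,{\rm LM}}$ lies within the main lobe of the carrier oscillation. Being wrong by one cycle corresponds to a delay error of order $1/f_{\rm c}$; the basin half-width is $1/(2f_{\rm c})$. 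I will therefore require that the standard deviation of the Stage-1 TDoA error be below a fixed fraction of the carrier period, and translate that into an SNR condition.

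First I take the Stage-1 error statistics as given by the CRB of the reduced model \eqref{eq:s_tau}. This is justified because that model has no $f_{\rm c}$ term and is a plain single-tone delay problem, so the oversampled-FFT search plus LM attains its bound once the SNR is moderate. That bound is \eqref{eq:C_delta} with $f_{\rm c}=0$. Per antenna it gives ${\rm var}(\hat\tau_m)\approx \frac{\sigma^2}{2|\gamma|^2(2\pi)^2N}\frac{12}{f_0^2(N^2-1)}$. The factor $1/2$ reflects a single antenna; it is obtained exactly as in the proof of Proposition~1 for $M=1$ with the gain parameters eliminated. Because the errors on different antennas are independent, the TDoA initialization error has twice this variance, namely $\frac{\sigma^2}{|\gamma|^2(2\pi)^2N}\frac{12}{f_0^2(N^2-1)}$.

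Second, I impose the basin condition in the form $2\pi f_{\rm c}\,{\rm std}(\hat\delta_m^{\rm ini})<1$. This says the phase error induced on $b(\cdot)$ is below one radian, which is the natural small-error regime for the quadratic approximation underlying LM. Squaring gives $(2\pi f_{\rm c})^2\frac{12}{\rho(2\pi)^2Nf_0^2(N^2-1)}<1$, so that $\rho>\frac{12f_{\rm c}^2}{Nf_0^2(N^2-1)}$. This is \eqref{eq:rho_delta_star} with $f=f_{\rm c}$. Using $N^2-1\approx N^2$ and $B=Nf_0$ then gives $\frac{12}{N}\frac{f_{\rm c}^2}{B^2}$. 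I would add a remark that, above this threshold, Stage-2 starts inside the correct lobe, and the standard asymptotic efficiency of local ML then gives the CRB in \eqref{eq:C_delta}.

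The main obstacle is the choice of the constant in the basin criterion. A phase error of $\pm\pi$ is the hard ambiguity limit, while one radian, i.e. unit normalized error, is the choice that reproduces the stated expression without extra factors. The result is only an approximation, so I will state that constant explicitly as a heuristic and argue it through the standard threshold intuition. That intuition is that the probability of a cycle slip becomes negligible once the ratio of carrier-scale error to standard deviation is order one, so the threshold depends only on the scaling $\rho\propto f_{\rm c}^2/(NB^2)$. A secondary point needing care is that the factor $2$ from the TDoA difference is cancelled exactly by the single-antenna factor $1/2$, which is why the result matches \eqref{eq:C_delta} with $f_{\rm c}=0$ directly.
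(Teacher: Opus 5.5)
Your argument is correct and is essentially the paper's rationale. The paper also takes the Stage-1 TDoA error variance to be the $f_{\rm c}=0$ CRB, $\tilde C(\delta_m)=\frac{\sigma^2}{|\gamma|^2(2\pi)^2N}\frac{12}{f_0^2(N^2-1)}$, which your per-antenna bound with its factor $1/2$, doubled for independent antennas, reproduces exactly, and then solves $\tilde C(\delta_m)<1/(2\pi f_{\rm c})^2$ for $\rho$; your ``one radian'' criterion is numerically identical to the paper's choice of placing the $\pm\pi$ ambiguity limit $\epsilon_{\max,\delta}=1/(2f_{\rm c})$ at $\pi$ standard deviations, i.e., $\mathrm{std}(\epsilon)<\epsilon_{\max,\delta}/\pi=1/(2\pi f_{\rm c})$.
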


\noindent\textbf{Rationale:} See Appendix~\ref{app:tdoa_waterfall}. \hfill$\square$
%The derivation in Appendix~\ref{app:tdoa_waterfall} is heuristic rather than rigorous.  

The expression in \eqref{eq:rho_delta_star} provides useful insights.
For example, the TSNR scales with $f_{\rm c}^2$, with the inverse square of the bandwidth, and with $1/N$.  
Thus, higher carrier frequencies require higher SNR for accurate TDoA estimation.  
This suggests a natural strategy for multi-band localization: use lower-frequency sub-bands to reduce the TSNR and higher-frequency sub-bands to reduce the MSE via \eqref{eq:C_delta} and \eqref{eq:alpha}.  
For a fixed bandwidth, increasing $N$ reduces sub-carrier spacing and increases the processing gain, which explains the $1/N$ scaling.

Lastly, the results of Section~\ref{sec:results} provide an analysis that supports the derivation and offer further intuition.

\subsubsection{Distance TSNR}\label{subsubsec:TSNR_distance}
Next, we ask: \emph{what SNR is required for the TDoA estimates $\hat{\delta}_{m_1,{\rm LM}}$ and $\hat{\delta}_{m_2,{\rm LM}}$ estimated by Stage-2 to yield an initialization $\hat{R}^{\rm ini}$ in \eqref{eq:R_phi_hat_ini} accurate enough for LM refinement in Stage-3?}  
The answer is given below.
\begin{proposition}
	The SNR required for the Stage-2 TDoA estimates to enable successful distance refinement in Stage-3 is approximately
	\begin{equation}
		\rho> \rho_{\rm R}^\star	\approx \frac{1}{N}\max\left(\frac{(4 R c)^2 }{f_{\rm c}^2 (M-1)^4 \delta_{\rm x}^4  \sin^4 \phi},\frac{12f_{\rm c}^2}{B^2}\right).
		\label{eq:rho_R_star}
	\end{equation}
\end{proposition}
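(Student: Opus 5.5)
The plan is a heuristic rationale in the same spirit as the one for \eqref{eq:rho_delta_star}, not a rigorous proof. Stage-3 can only succeed if two conditions hold. First, the Stage-2 TDoA estimates must themselves be CRB-accurate. Second, the resulting initialization $\hat{R}^{\rm ini}$ from \eqref{eq:R_phi_hat_ini} must fall inside the basin of attraction of the LM refinement. The first condition is exactly \eqref{eq:rho_delta_star}, which gives the second argument $12f_{\rm c}^2/(NB^2)$ of the $\max$. The work therefore lies in turning the second condition into an SNR requirement, and then taking the larger of the two thresholds.

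For the second condition I would propagate the Stage-2 TDoA error through the inversion \eqref{eq:R_phi_hat_ini} to first order. Above $\rho_\delta^\star$, each $\hat{\delta}_{m,{\rm LM}}$ has variance close to $C(\delta_m)=\sigma^2/\alpha$. Since $f_{\rm c}$ dominates $\alpha$, this is $C(\delta_m)\approx \sigma^2/(|\gamma|^2(2\pi)^2 N f_{\rm c}^2)$. For the ULA I would use the Fresnel (second-order) expansion of \eqref{eq:delta_R_pi},
\begin{equation*}
\delta_m \approx -\frac{x_m\cos\phi}{c} + \frac{x_m^2\sin^2\phi}{2Rc}.
\end{equation*}
This expansion shows that $R$ enters only through the quadratic term, so that $\partial\delta_m/\partial R\approx -x_m^2\sin^2\phi/(2R^2c)$; this is the same quantity that drives \eqref{eq:C_R_ULA}. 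The linear term is absorbed by the $\phi$ estimate. A perturbation $\Delta\delta$ at the most informative antenna, at $|x_m|\approx (M-1)\delta_{\rm x}/2$, therefore produces
\begin{equation*}
|\Delta R| \approx \frac{8R^2 c}{(M-1)^2\delta_{\rm x}^2\sin^2\phi}\,|\Delta\delta|.
\end{equation*}
Squaring and inserting $C(\delta_m)$ gives ${\rm Var}(\hat{R}^{\rm ini})\propto R^4c^2/(\rho N f_{\rm c}^2 (M-1)^4\delta_{\rm x}^4\sin^4\phi)$.

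Next I would impose a basin-of-attraction criterion: the initialization must have a standard deviation that is a fixed fraction of $R$ itself, ${\rm std}(\hat{R}^{\rm ini})\lesssim \kappa R$. The reason is that the near-field curvature term $x_m^2\sin^2\phi/(2Rc)$, on which the LM cost depends, must not be mis-modelled by an $\mathcal{O}(1)$ relative amount; otherwise the LM step leaves the locally quadratic region. One power of $R^2$ then cancels, and the condition becomes
\begin{equation*}
\rho \gtrsim \frac{(4Rc)^2}{N f_{\rm c}^2 (M-1)^4\delta_{\rm x}^4\sin^4\phi}.
\end{equation*}
This holds once $\kappa$ is fixed, with the factors $8$, $2\pi$ and $\kappa$ combining into the constant $4$.

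The main obstacle is justifying this last criterion and its constant. It is a convergence-region assumption rather than a derived fact, and the exact numerical factor (here $4$) can only be calibrated heuristically, for instance by making the induced curvature error $x_m^2\sin^2\phi\,\Delta R/(2R^2c)$ small relative to a carrier cycle, or by comparison with the waterfall observed in the simulations of Section~\ref{sec:results}. A secondary issue is that the two-antenna inversion \eqref{eq:R_phi_hat_ini} uses correlated TDoAs, since both share the reference antenna, together with a joint $\phi$ error. I would argue that this changes only constant factors, because the angular information comes from the linear term and is much better conditioned than the distance information.
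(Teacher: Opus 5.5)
Your decomposition matches the paper's. The paper also takes the maximum of $\rho_\delta^\star$ from \eqref{eq:rho_delta_star} and a near-field condition, built from the same two ingredients you use: the quadratic Fresnel term at the outermost antenna, $\epsilon_{\max,R}=(M-1)^2\delta_{\mathrm{x}}^2\sin^2\phi/(8Rc)$, and the Stage-2 bound $C(\delta)$ from \eqref{eq:C_delta}. Your sensitivity reads $|\Delta R| = R\,|\Delta\delta|/\epsilon_{\max,R}$, so your criterion $\mathrm{std}(\hat{R}^{\mathrm{ini}})\lesssim\kappa R$ is the same as $\mathrm{std}(\Delta\delta)\lesssim\kappa\,\epsilon_{\max,R}$.

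What is missing is the value of $\kappa$, and that is exactly what turns your scaling law into the stated formula. You leave it to calibration; the paper fixes it with two explicit choices.
\begin{itemize}
\item The tolerance is one quadratic-term unit. Inverting \eqref{eq:R_phi_hat_ini} for a three-antenna array with a TDoA error $\eta\,\delta_{\mathrm{quad}}$ on one antenna gives $\hat{R}^{\mathrm{ini}}\approx 2R/(2+\eta)$, so $\eta=-1$ already yields $\hat{R}^{\mathrm{ini}}\approx 2R$.
\item It reuses the $\pi$-standard-deviation rule from the TDoA threshold, imposing $C(\delta)<\epsilon_{\max,R}^2/\pi^2$.
\end{itemize}
This is your criterion with $\kappa=1/\pi$, and $64\pi^2/(2\pi)^2=16$ then gives the $(4Rc)^2$ numerator. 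Stating the condition on the TDoA error rather than on $\hat{R}^{\mathrm{ini}}$ has a further advantage. You never need to linearize $R\propto 1/\delta_{\mathrm{quad}}$ in the regime where the relative error is $\mathcal{O}(1)$ and the map is strongly asymmetric, as $2R/(2+\eta)$ shows.

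The alternative calibration you suggest would not work. To first order, the curvature error $x_m^2\sin^2\phi\,\Delta R/(2R^2c)$ induced at the outermost antenna is just $\Delta\delta$ itself. Requiring it to be small relative to a carrier cycle therefore reduces to $|\Delta\delta|\lesssim 1/f_{\mathrm{c}}$. That is the TDoA condition, i.e. the second argument of the max. It does not depend on $R$, $M$, $\delta_{\mathrm{x}}$ or $\phi$, so it cannot produce the first argument.

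Your concern about the correlated TDoAs (shared reference antenna) and the joint $\phi$ error is legitimate, but it only changes constants. The paper sidesteps it by applying the per-TDoA bound $C(\delta)$ directly at the outermost antennas.
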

\noindent\textbf{Rationale:} See Appendix~\ref{app:dist_waterfall}. \hfill$\square$

%Similarly to the TDoA TSNR, the result in \eqref{eq:rho_R_star} is derived heuristically.  
A key point is that $\hat{R}^{\rm ini}$ in \eqref{eq:R_phi_hat_ini} relies on a three-antenna sub-array ($M=3$), and thus the achievable TSNR corresponds to this sub-array configuration.  
Despite this nuance, the distance TSNR remains a useful reference for distinguishing the SNR regime in which the estimator exhibits large errors from the regime in which it closely approaches the CRB for any $M \ge 3$.

Lastly, the results of Section~\ref{sec:results} provide intermediate numerical results that support the derivation and offer further intuition.

\section{Multi-band Generalization} \label{sec:system_model_multi_band}
%
%In the previous sections, we focused on the single-band model of Section~\ref{sec:system_model_single_band}.  
In this section, we generalize the single-band results to the multi-band case. 
Subsection~\ref{subsec:multi_band_model} presents the generalized multi-band model, followed by the matrix-form FIMs for TDoA and AoA-distance in Subsection~\ref{subsec:multi_band_matrix_form}.  
Closed-form CRB generalizations are derived in Subsection~\ref{subsec:multi_band_closed_form_CRBs}, and the multi-band estimator and TSNR extensions are discussed in Subsection~\ref{subsec:multi_band_estimator}.
\subsection{General Multi-band Model}\label{subsec:multi_band_model}
Let $Q$ denote the number of available sub-bands.  
For each band $q$, let $f_{{\rm c}_q}$, $f_{0_q}$, and $N_q$ represent the center frequency, subcarrier spacing, and number of subcarriers, respectively.  
All bands experience the same delays, but each band has its own channel gain $\gamma_q$.  
The overall multi-band parameter vector is formed by concatenation as%
\begin{equation}
	\boldsymbol{\theta}_q 
	= [\, \tau_0 \,\, \delta_1 \,\, \cdots \,\, \delta_{M-1} \,\, \gamma_q^{\rm r} \,\, \gamma_q^{\rm i} \,]^{\rm T}.
	\label{eq:theta_q}
\end{equation}
The single-band model in \eqref{eq:s_theta} extends naturally to the multiband case as
\begin{equation}
	\mathbf{s}(\boldsymbol{\Theta})
	\!= \!\big[\, 
	\mathbf{s}_0(\boldsymbol{\theta}_0)^{\rm T} \,\,
	\mathbf{s}_1(\boldsymbol{\theta}_1)^{\rm T} \,
	\cdots \,
	\mathbf{s}_{Q-1}(\boldsymbol{\theta}_{Q-1})^{\rm T}
	\big]^{\rm T}\!\!
	\in \!\mathbb{C}^{M\cdot \sum_q N_q},
	\label{eq:s_Theta}
\end{equation}
which concatenates the single-band received vectors into a larger multi-band observation.
The overall multi-band parameter vector $\boldsymbol{\Theta} = \bigcup_q \boldsymbol{\theta}_q$ can be written as 
%$\boldsymbol{\Theta}
%		= [\, \dot{\boldsymbol{\theta}}^{\rm T} \,\, \boldsymbol{\Gamma}^{\rm T} \,]^{\rm T}$, where $\dot{\boldsymbol{\theta}}
%		= [\, \tau_0 \,\, \delta_1 \,\, \cdots \,\, \delta_{M-1} \,]^{\rm T}$ and $\boldsymbol{\Gamma}
	%	 = [\, \gamma_0^{\rm r} \,\, \gamma_0^{\rm i} \, \cdots \, \gamma_{Q-1}^{\rm r} \,\, \gamma_{Q-1}^{\rm i} \,]^{\rm T}$.
%
\begin{equation}
	\begin{aligned}
		\boldsymbol{\Theta}
		&= [\, \dot{\boldsymbol{\theta}}^{\rm T} \,\, \boldsymbol{\Gamma}^{\rm T} \,]^{\rm T}, \\[3pt]
		\dot{\boldsymbol{\theta}}
		&= [\, \tau_0 \,\, \delta_1 \,\, \cdots \,\, \delta_{M-1} \,]^{\rm T}, \\[2pt]
		\text{and }\boldsymbol{\Gamma}
		& = [\, \gamma_0^{\rm r} \,\, \gamma_0^{\rm i} \, \cdots \, \gamma_{Q-1}^{\rm r} \,\, \gamma_{Q-1}^{\rm i} \,]^{\rm T}.
	\end{aligned}
	\label{eq:Theta_q}
\end{equation}

\subsection{Multi-band Matrix-form Jacobian and FIM}\label{subsec:multi_band_matrix_form}
This subsection generalizes the expressions of Section~\ref{sec:crb_R_phi} (single-band) to the multi-band model.  
The Jacobians with respect to the subvectors $\dot{\boldsymbol{\theta}}$ and $\boldsymbol{\Gamma}$ are
\begin{equation}
	\begin{aligned}
		\mathbf{D}(\dot{\boldsymbol{\theta}}) 
		&= 
		\big[
		\mathbf{D}_{0}(\dot{\boldsymbol{\theta}})^{\rm T} \,\,
		\mathbf{D}_{1}(\dot{\boldsymbol{\theta}})^{\rm T} \,\,
		\cdots \,\,
		\mathbf{D}_{Q-1}(\dot{\boldsymbol{\theta}})^{\rm T}
		\big]^{\rm T}, \\[3pt]
		\text{and }\mathbf{D}(\boldsymbol{\Gamma}) 
		&=
		\begin{bmatrix}
			\mathbf{D}_{0}(\gamma_{0}) & \mathbf{0} & \cdots \\
			\mathbf{0} & \mathbf{D}_{1}(\gamma_{1}) & \cdots \\
			\vdots & \ddots & \cdots \\
			\mathbf{0} & \cdots & \mathbf{D}_{Q-1}(\gamma_{Q-1})
		\end{bmatrix},
	\end{aligned}
	\label{eq:D_multiband}
\end{equation}
where  $\mathbf{D}_q(\dot{\boldsymbol{\theta}}) = [\,\mathbf{S}_q^{(\tau_0)} \;\; \mathbf{S}_q^{(\boldsymbol{\delta})}\,]$ and $\mathbf{D}_q(\gamma_q) = \mathbf{S}_q^{(\gamma_q)}$. The partial derivatives $\mathbf{S}_q^{(\tau_0)}$, $\mathbf{S}_q^{(\boldsymbol{\delta})}$ and $\mathbf{S}_q^{(\gamma_q)}$ have the same definition as in \eqref{eq:S_delta_derivatives}, but for the $q$th sub-band.
The multi-band Jacobian and corresponding FIM are
\begin{equation}
	\begin{aligned}
		\mathbf{D}(\boldsymbol{\Theta}) 
		&= [\,\mathbf{D}(\dot{\boldsymbol{\theta}}) \;\; \mathbf{D}(\boldsymbol{\Gamma})\,], \\[3pt]
		\text{and } \mathbf{J}(\boldsymbol{\Theta}) 
		&= \frac{2}{\sigma^{2}} 
		\Re\!\left(
		\begin{bmatrix}
			\mathbf{D}(\dot{\boldsymbol{\theta}})^{\rm H}\mathbf{D}(\dot{\boldsymbol{\theta}}) 
			& \mathbf{D}(\dot{\boldsymbol{\theta}})^{\rm H}\mathbf{D}(\boldsymbol{\Gamma}) \\
			\mathbf{D}(\boldsymbol{\Gamma})^{\rm H}\mathbf{D}(\dot{\boldsymbol{\theta}})
			& \mathbf{D}(\boldsymbol{\Gamma})^{\rm H}\mathbf{D}(\boldsymbol{\Gamma})
		\end{bmatrix}
		\right).
	\end{aligned}
	\label{eq:DJ_multiband}
\end{equation}
Analogously to the TDoA case, define
\begin{equation}
	\boldsymbol{\Upsilon}
	= [\,\dot{\boldsymbol{\upsilon}}^{\rm T} \;\; \boldsymbol{\Gamma}^{\rm T}\,]^{\rm T}\quad  \text{and}
	\quad
	\dot{\boldsymbol{\upsilon}} = [\,\tau_0 \;\; R \;\; \phi\,]^{\rm T},
	\label{eq:Upsilon}
\end{equation}
with Jacobian
\begin{equation}
	\mathbf{D}(\boldsymbol{\Upsilon})
	= [\,\mathbf{D}(\dot{\boldsymbol{\upsilon}}) \;\; \mathbf{D}(\boldsymbol{\Gamma})\,],
	\label{eq:D_Upsilon}
\end{equation}
where  
$
\mathbf{D}(\dot{\boldsymbol{\upsilon}})
=
\big[
\mathbf{D}_{0}(\dot{\boldsymbol{\upsilon}})^{\rm T} 
\cdots 
\mathbf{D}_{Q-1}(\dot{\boldsymbol{\upsilon}})^{\rm T}
\big],
$
and $
\mathbf{D}_q(\dot{\boldsymbol{\upsilon}}) 
= 
[\,
\mathbf{S}_q^{(\tau_0)} \;\;
\mathbf{S}_q^{(\phi)} \;\;
\mathbf{S}_q^{(R)}
\,]$, with $\mathbf{S}_q^{(R)}$ and $\mathbf{S}_q^{(\phi)}$ being the same as the ones used in \eqref{eq:D_upsilon2}, for the $q$th band.

The multi-band matrix-form FIM is therefore
\begin{equation}
	\mathbf{J}(\boldsymbol{\Upsilon})
	= \frac{2}{\sigma^{2}}
	\Re\!\left(
	\begin{bmatrix}
		\mathbf{D}(\dot{\boldsymbol{\upsilon}})^{\rm H}\mathbf{D}(\dot{\boldsymbol{\upsilon}}) 
		& \mathbf{D}(\dot{\boldsymbol{\upsilon}})^{\rm H}\mathbf{D}(\boldsymbol{\Gamma}) \\
		\mathbf{D}(\boldsymbol{\Gamma})^{\rm H}\mathbf{D}(\dot{\boldsymbol{\upsilon}})
		& \mathbf{D}(\boldsymbol{\Gamma})^{\rm H}\mathbf{D}(\boldsymbol{\Gamma})
	\end{bmatrix}
	\right),
\end{equation}
which yields the generalization of  \eqref{eq:crb_R_phi_no_cf} to multi-band as
\begin{equation}
	{\rm CRB}_{R} = [\,\mathbf{J}(\boldsymbol{\Upsilon})^{-1}\,]_{2,2},
	\qquad
	{\rm CRB}_{\phi} = [\,\mathbf{J}(\boldsymbol{\Upsilon})^{-1}\,]_{3,3}.
	\label{eq:crb_R_phi_multi_band_no_cf}
\end{equation}

\subsection{Multi-band closed-form CRBs}\label{subsec:multi_band_closed_form_CRBs}
This subsection generalizes the results of Section~\ref{sec:crb_R_phi_approx} (single-band) to the multi-band model.

\begin{proposition}
	Propositions~2, 3, and 4, corresponding to the TDoA CRB, the intermediate AoA-Distance CRB, and the ULA AoA-Distance CRB for the single-band system, extend directly to the multi-band model by replacing $\alpha/\sigma^2$ in the respective expressions with
	\begin{equation}
		{\alpha}_{\rm mb} 
        = \sum_{q} \frac{|\gamma_q|^2}{\sigma_q^2} (2\pi)^2 N_q 
        \left( \frac{(N_q^2 - 1) f_{0_q}^2}{12} + f_{{\rm c}_q}^2 \right).
		\label{eq:alpha_mb}
	\end{equation}
	%
	%which is a weighted linear combination of the per-band $\alpha/\sigma^2$ contributions.
\end{proposition}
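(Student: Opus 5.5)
The plan is to redo the proof of Proposition~1 in the modified delay model, now with one gain per band, and then check that the same transformation arguments as in Propositions~2--4 go through. For each band $q$ I reparametrize $\tilde{\gamma}_q=\gamma_q b_q(-\tau_0)$ with $b_q(\tau)=\exp(-j2\pi\tau f_{{\rm c}_q})$. This is one-to-one per band and leaves $\boldsymbol{\delta}$ untouched, so the CRBs of $\boldsymbol{\delta}$, $R$ and $\phi$ do not change. The delay vector $\boldsymbol{\tau}$ is then shared by all bands, and band $q$ adds the gain pair $(\tilde{\gamma}_q^{\rm r},\tilde{\gamma}_q^{\rm i})$. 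I also allow a per-band noise power $\sigma_q^2$, which the statement implicitly uses. Because the bands are observed in independent noise, the FIM is a sum of per-band FIMs:
\begin{equation*}
\mathbf{J}=\sum_q \frac{2}{\sigma_q^2}\Re\big(\mathbf{D}_q^{\rm H}\mathbf{D}_q\big),
\end{equation*}
where $\mathbf{D}_q$ has nonzero columns only for $\boldsymbol{\tau}$ and for $\tilde{\gamma}_q$.

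Next I compute the blocks band by band, reusing the computations from Appendix~\ref{app:inv_FIM_theta_delay}. Since $\mathbf{B}\lozenge\mathbf{A}$ is block diagonal across antennas, the $\boldsymbol{\tau}$--$\boldsymbol{\tau}$ block of band $q$ should be $\frac{2}{\sigma_q^2}\alpha_q\mathbf{I}_M$, with $\alpha_q$ given by \eqref{eq:alpha} for band $q$. The cross block between $\boldsymbol{\tau}$ and $\tilde{\gamma}_q$ should be $\mathbf{1}_M\mathbf{c}_q^{\rm T}$ for some $\mathbf{c}_q\in\mathbb{R}^2$, since the term linear in $\grave{\mathbf{N}}$ sums to zero and only the $f_{{\rm c}_q}$ term survives, identically for each antenna. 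The gain block is $\frac{2}{\sigma_q^2}MN_q\mathbf{I}_2$.

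I then take the Schur complement to eliminate all gain parameters. The reduced FIM for $\boldsymbol{\tau}$ becomes
\begin{equation*}
\mathbf{J}_{\tau}=2\Big(\sum_q \alpha_q/\sigma_q^2\Big)\mathbf{I}_M-\kappa\,\mathbf{1}_M\mathbf{1}_M^{\rm T}
\end{equation*}
for some scalar $\kappa$. Its diagonal coefficient is exactly $2\alpha_{\rm mb}$ from \eqref{eq:alpha_mb}. By Sherman--Morrison, its inverse has the form $\frac{1}{2}\big(\frac{1}{\alpha_{\rm mb}}\mathbf{I}_M+\frac{\beta_{\rm mb}}{\alpha_{\rm mb}^2}\mathbf{1}_M\mathbf{1}_M^{\rm T}\big)$. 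This is \eqref{eq:J_delay} with $\alpha/\sigma^2$ replaced by $\alpha_{\rm mb}$.

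From here the three earlier proofs carry over verbatim. For the TDoA, the difference vector $\mathbf{e}_m-\mathbf{e}_0$ annihilates the rank-one term, as in Proposition~2. For the intermediate AoA-Distance CRB, I apply the chain rule with $\mathbf{U}$, which is band independent because the geometry is common to all bands. This gives \eqref{eq:J_theta_upsilon}, and $\mathbf{W}=\mathbf{e}_0\mathbf{e}_0^{\rm T}$ kills the rank-one part in the $\phi$ and $R$ entries, as in Proposition~3. Proposition~4 only evaluates $(\mathbf{U}^{\rm T}\mathbf{U})^{-1}$ for the ULA, so it inherits the replacement directly.

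The main obstacle is the Schur complement step. I must verify that every band's cross term is proportional to $\mathbf{1}_M$ with the same structure, so that the correction remains a single rank-one matrix along $\mathbf{1}_M$. If some band's cross term had antenna-dependent weights, the diagonal-plus-rank-one structure would break. The key fact I will check is that $|b_q|=1$ and the per-antenna inner products $\mathbf{a}_q^{\rm H}\boldsymbol{\Xi}_q\mathbf{a}_q$ do not depend on $\tau_m$. This holds because the phase factors cancel within each antenna block.
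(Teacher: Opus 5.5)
Your proposal is correct and follows essentially the same route as the paper's appendix. The per-band modified delay model gives $\mathbf{X}=\alpha_{\rm mb}\mathbf{I}_M$, every band's cross block carries the factor $\mathbf{1}_M$, and the gain block is block-diagonal, so the top-left block of the inverse FIM stays diagonal-plus-rank-one along $\mathbf{1}_M$ and the cancellations of Propositions~2--4 go through with $\alpha/\sigma^2$ replaced by $\alpha_{\rm mb}$; the only differences are cosmetic (you eliminate the gains by a Schur complement plus Sherman--Morrison rather than the block-inverse formula, and you carry the per-band noise powers $\sigma_q^2$ from the start instead of adding them in a closing remark).
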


\begin{proof}
	See Appendix~\ref{app:proposition_multiband}.
\end{proof}

\subsection{Multi-band 4 Stage Estimator and CRB Achievability}\label{subsec:multi_band_estimator}
\begin{algorithm}[t!]
	\small
	\caption{Multi-Band 4-Stage LM Algorithm}
	\label{alg:multi_band}
	\begin{algorithmic}[1]  % The number tells where line numbering starts
		\Input $\mathbf{y} = [\mathbf{y}_{0,0}^{\rm T}\,\, \mathbf{y}_{1,0}^{\rm T} \cdots \mathbf{y}_{M-1,0}^{\rm T}\,\, \mathbf{y}_{0,1}^{\rm T} \cdots \mathbf{y}_{M-1,Q-1}^{\rm T}]^{\rm T}$	
%		\State $q_{\rm ini} \gets 0$ 
		\State $q_{\rm ini} = \max_{q}(\rho_q/\rho_{\delta_q}^\star)$	
		\Statex $/\star$ \texttt{\textcolor{brown}{Stage-1: Delay for $\mathbf{y}_{:,q_{\rm ini}}$, \eqref{eq:s_tau}}} $\hfill \star/$		
		\For{$m = 0$ to $M-1$}
		\State Initialize $\hat{\boldsymbol{\theta}}_{\tau_m,q_{\rm ini}}^{\rm {\rm ini}} = [\hat{\tau}_{m,q_{\rm ini}}^{\rm {\rm ini}} \,\, \hat{\gamma}_{m,q_{\rm ini}}^{\rm r, ini} \,\, \hat{\gamma}_{m,q_{\rm ini}}^{\rm i, ini}]^{\rm T}$ via FFT-based fix grid search \cite{RimaxThesis} taking $\mathbf{y}_{m,q_{\rm ini}}$ as input
		\State Refine $\hat{\boldsymbol{\theta}}_{\tau_m,q_{\rm ini}}^{\rm LM} = {\rm LM}(\mathbf{y}_{m,q_{\rm ini}},\hat{\boldsymbol{\theta}}_{\tau_m,q_{\rm ini}}^{\rm ini};\mathbf{s}_{{\tau}}(\cdot),\mathbf{D}_{{\tau}}(\cdot))$
		\EndFor	
		\Statex $/\star$ \texttt{\textcolor{brown}{Stage-2: TDoA for $\mathbf{y}_{:,q_{\rm ini}}$, \eqref{eq:s_theta} and \eqref{eq:D_theta}}} $\hfill \star/$
		\State Initialize ${\boldsymbol{\theta}}^{\rm ini}_{q_{\rm ini}}$ using ${\delta}_{m,q_{\rm ini}}^{\rm ini} = \hat{\tau}_{m,q_{\rm ini}}^{\rm LM} - \hat{\tau}_{0,q_{\rm ini}}^{\rm LM}$ 
		\State Refine $\hat{\boldsymbol{\theta}}^{\rm LM}_{q_{\rm ini}} = {\rm LM}(\mathbf{y}_{:,q_{\rm ini}},\hat{\boldsymbol{\theta}}^{\rm ini}_{q_{\rm ini}};\mathbf{s}_{\theta}(\cdot),\mathbf{D}_{\theta}(\cdot))$
		\State Estimate $\hat{\boldsymbol{\theta}}^{\rm LM}_{q_{\rm ini}}$ using the received signal of the $q_{\rm ini}$-th sub-band, $\mathbf{y}_{:,q_{\rm ini}}$, in Stages 1 and 2 of Algorithm \ref{alg:single_band} (Line 7).
		\Statex $/\star$ {\texttt{\textcolor{brown}{Stage-3: Multi-B. TDoA, \eqref{eq:s_Theta} and \eqref{eq:DJ_multiband}}}} $\hfill \star/$
		\State Initialize $\hat{\dot{\boldsymbol{\theta}}}^{\rm ini}$ with $\hat{\tau}_{0,q_{\rm ini}}^{\rm LM}$ and $\hat{\delta}_{m,q_{\rm ini}}^{\rm LM}, \forall m$ of $\hat{\boldsymbol{\theta}}^{\rm LM}_{q_{\rm ini}}$
		\State Initialize $\hat{\mathbf{\Gamma}}^{\rm ini}_{q,q} $ with $(\hat{\gamma}_{m,q}^{\rm r, ini} \,\, \hat{\gamma}_{m,q}^{\rm i, ini})$ obtained via correlation using the initial values of $\hat{\dot{\boldsymbol{\theta}}}^{\rm ini}$ for all $q$
%			\For{$q = 0$ to $Q-1$}
%			\State Initialize $\hat{\mathbf{\Gamma}}^{\rm ini}_{q,q} $ with $(\hat{\gamma}_{m,q}^{\rm r, ini} \,\, \hat{\gamma}_{m,q}^{\rm i, ini})$ obtained via correlation using the initial values of $\hat{\dot{\boldsymbol{\theta}}}^{\rm ini}$ 
%			\EndFor
			\State Initialize $\hat{\boldsymbol{\Theta}}^{\rm ini}	= [(\hat{\dot{\boldsymbol{\theta}}}^{\rm ini})^{\rm T} \,\, (\hat{\boldsymbol{\Gamma}}^{\rm ini})^{\rm T}]^{\rm T}$
			\State Refine $\hat{\boldsymbol{\Theta}}^{\rm LM} = {\rm LM}(\mathbf{y},\hat{\boldsymbol{\Theta}}^{\rm ini};\mathbf{s}_{\Theta}(\cdot),\mathbf{D}_{\Theta}(\cdot))$
		\Statex  $/\star$ \texttt{\textcolor{brown}{Stage-4: Dist./Ang., \eqref{eq:s_Theta} and \eqref{eq:D_Upsilon}}} $\hfill \star/$
		\State Initialize $\hat{\boldsymbol{\Upsilon}}^{\rm ini}$ using $\hat{\boldsymbol{\Theta}}^{\rm LM}$ (Eq. \eqref{eq:R_phi_hat_ini})
		\State Refine $\hat{\boldsymbol{\Upsilon}}^{\rm LM} = {\rm LM}(\mathbf{y},\hat{\boldsymbol{\Upsilon}}^{\rm ini};\mathbf{s}_{\Theta(\Upsilon)}(\cdot),\mathbf{D}_{\Theta(\Upsilon)}(\cdot))$
		\Output $\hat{\boldsymbol{\Upsilon}}^{\rm LM}$
	\end{algorithmic}
\end{algorithm}
The 4 stage algorithm for estimating the AoA-Distance parameters of the multi-band model 
$\mathbf{s}(\boldsymbol{\Theta})$ in \eqref{eq:s_Theta} is summarized in Algorithm~\ref{alg:multi_band}.  
The received multi-band signal is
\begin{equation}
	\mathbf{y} = \mathbf{s}(\boldsymbol{\Theta}) + \mathbf{w},
\end{equation}
which can be written explicitly as the concatenation of per-antenna, per-band observations
\[
\mathbf{y}
=
[
\mathbf{y}_{0,0}^{\rm T} \;\;
\mathbf{y}_{1,0}^{\rm T} \;\;
\cdots \;\;
\mathbf{y}_{M-1,0}^{\rm T} \;\;
\mathbf{y}_{0,1}^{\rm T} \;\;
\cdots \;\;
\mathbf{y}_{M-1,Q-1}^{\rm T}
]^{\rm T}.
\]

Following the single-band approach, each stage employs the iterative LM algorithm for refinement.  
Our ultimate goal is to estimate
$
\hat{\boldsymbol{\Upsilon}}^{\rm LM}
=
{\rm LM}(
\mathbf{y}, \hat{\boldsymbol{\Upsilon}}^{\rm ini};
\mathbf{s}_{\Theta(\Upsilon)}(\cdot),\, \mathbf{D}_{\Theta(\Upsilon)}(\cdot)
)
$,
where the initialization $\hat{\boldsymbol{\Upsilon}}^{\rm ini}$ is obtained by applying \eqref{eq:R_phi_hat_ini} to the refined multi-band TDoA estimate
$
\hat{\boldsymbol{\Theta}}^{\rm LM}
=
{\rm LM} (
\mathbf{y},  \hat{\boldsymbol{\Theta}}^{\rm ini};
\mathbf{s}_{\Theta}(\cdot),\, \mathbf{D}_{\Theta}(\cdot)
)
$.
Thus, the key modification in the multi-band setting is the construction of the initial estimate $\hat{\boldsymbol{\Theta}}^{\rm ini}$.

Note that the single-band and multi-band models share the same TDoAs. 
The only difference lies in the channel coefficients, $\gamma$ (single-band) versus $\boldsymbol{\Gamma}$ (multi-band).  
Since each per-band gain $\gamma_q$ can be obtained by correlating the unit-gain TDoA model ($\gamma=1$) with the received signal, a natural strategy is to refine the TDoAs using only the most favorable sub-band, and then infer the remaining channel coefficients via correlation.  

According to Proposition~5, the single-band TDoA estimate in Stage-2 of Algorithm~\ref{alg:single_band} achieves the CRB when $\rho_q > \rho_{\delta_q}^\star$.  
Let
\begin{equation}
q_{\rm ini} = \arg\max_{q} ( \rho_q/\rho_{\delta_q}^\star )
\end{equation}
denote the sub-band with the largest margin above its TSNR threshold.  
The TDoAs are initialized by running Stages~1 and 2 of Algorithm~\ref{alg:single_band} on the $q_{\rm ini}$th subband, yielding $\hat{\boldsymbol{\theta}}^{\rm LM}_{q_{\rm ini}}$.

Stage~3 of Algorithm~\ref{alg:multi_band} then uses $\hat{\boldsymbol{\theta}}^{\rm LM}_{q_{\rm ini}}$ to initialize the multi-band TDoAs and all channel coefficients (via correlation), forming $\hat{\boldsymbol{\Theta}}^{\rm ini}$, and computes the refined estimate
$
\hat{\boldsymbol{\Theta}}^{\rm LM}
=
{\rm LM}(\mathbf{y}, \hat{\boldsymbol{\Theta}}^{\rm ini}; \mathbf{s}_{\Theta}(\cdot), \mathbf{D}_{\Theta}(\cdot)).
$

Finally, Stage~4 refines the AoA–distance parameters, $\hat{\boldsymbol{\upsilon}}^{\rm LM}$, using the same procedure as Stage~3 of Algorithm~\ref{alg:single_band}.

\subsubsection{Multi-band TDoA TSNR}
Since the sub-band $q_{\rm ini}$ is selected to initialize the refined multi-band TDoA, its TSNR condition follows directly from \eqref{eq:rho_delta_star} in Proposition~5, namely 
$\rho_{q_{\rm ini}} > \rho_{\delta_{q_{\rm ini}}}^\star$, 
where $f_{{\rm c}_{q_{\rm ini}}}$, $N_{q_{\rm ini}}$, and 
$B_{q_{\rm ini}} = N_{q_{\rm ini}} f_{0_{q_{\rm ini}}}$ 
are the parameters of the selected sub-band.
Using the same structure as \eqref{eq:rho_delta_star} for, the multi-band TDoA TSNR becomes
\begin{equation}
	\rho_{q_{\rm ini}} 
	> 
	\rho_{\delta_{q_{\rm ini}}}^\star
	\;\approx\;
	\frac{12}{N_{q_{\rm ini}}}\,
	\frac{f_{{\rm c}_{q_{\rm ini}}}^{2}}{B_{q_{\rm ini}}^{2}}.
	\label{eq:rho_delta_star_mb}
\end{equation}

\subsubsection{Multi-band Distance TSNR}
Following the same steps as the single-band derivation in Appendix~\ref{app:dist_waterfall}, and incorporating the multi-band TDoA TSNR, the distance TSNR must satisfy the simultaneous conditions
\begin{equation}
	\left\{
	\begin{array}{l}
		\displaystyle 
		\sum_{q} \rho_q N_q f_{{\rm c}_q}^2 
		> 
		\frac{(4 R c)^2}{(M-1)^4 \delta_{\rm x}^4 \sin^4\phi}, \\[6pt]
		\displaystyle 
		\rho_{q_{\rm ini}} > \rho_{\delta_{q_{\rm ini}}}^\star
	\end{array}
	\right..
	\label{eq:rho_R_multi_band_general}
\end{equation}
Unlike the single-band TSNR expression in \eqref{eq:rho_R_star}, the multi-band case does not allow the left-hand side of the top equation in \eqref{eq:rho_R_multi_band_general} to be expressed solely in terms of a single SNR value.  
Instead, it naturally appears as a linear combination of the per-band SNRs, weighted by $N_q$ and $f_{{\rm c}_q}^2$.  
This highlights that sub-band power allocation plays a key role in meeting the multi-band TSNR, since $\rho_q \propto |\gamma_q|^2$ can be adjusted per sub-band under a total power budget.

\subsubsection{Dual-band Distance TSNR with Dependent Channels}
A particularly instructive case is a two-band system ($q=\{0,1\}$) with  
$f_{{\rm c}_0} < f_{{\rm c}_1}$,  
$|\gamma_0|^2 = |\gamma_1|^2$,  
$N_0 = N_1$,  
and equal sub-carrier spacing.  
In this setting, the TSNR reduces to a single–band–like condition
\begin{equation}
	\rho_0 
	\!> \!
	\rho_{{\rm R}_0}^\star 
	\!=\!
	\frac{1}{N_0}\!
	\max\!\left(\!
	\frac{(4Rc)^2}{(f_{{\rm c}_0}^2 \! + \! f_{{\rm c}_1}^2)(M\!- \! 1)^4 \delta_{\rm x}^4 \sin^4\!\phi},
	\frac{12 f_{{\rm c}_0}^2}{B_0^2}\!
	\right).
	\label{eq:rho_R_star_mb}
\end{equation}
Thus, when the per-band gains $|\gamma_q|^2$ are not freely chosen (e.g., when expressed relative to a reference sub-band $q=0$), the TSNR can be characterized in terms of the reference SNR~$\rho_0$.
%Equation \eqref{eq:rho_R_star_mb} reveals a useful insight into multi-band localization, namely, introducing a higher-frequency sub-band reduces the required TSNR only when the first argument of the $\max$ operator dominates.   
%Lastly, we notice that equation \eqref{eq:rho_R_star_mb} permits a generalization for arbitrary subcarrier spacing, number of sub-carriers, and $|\gamma_0|^2 = K |\gamma_1|^2$ for real valued constant $K$. This general formulation shall be exploited in future work.

% In contrast, when the second argument dominates, including additional higher-frequency bands provides no improvement in this regard.

\begin{figure*}
	\centering \includegraphics[scale=0.43]{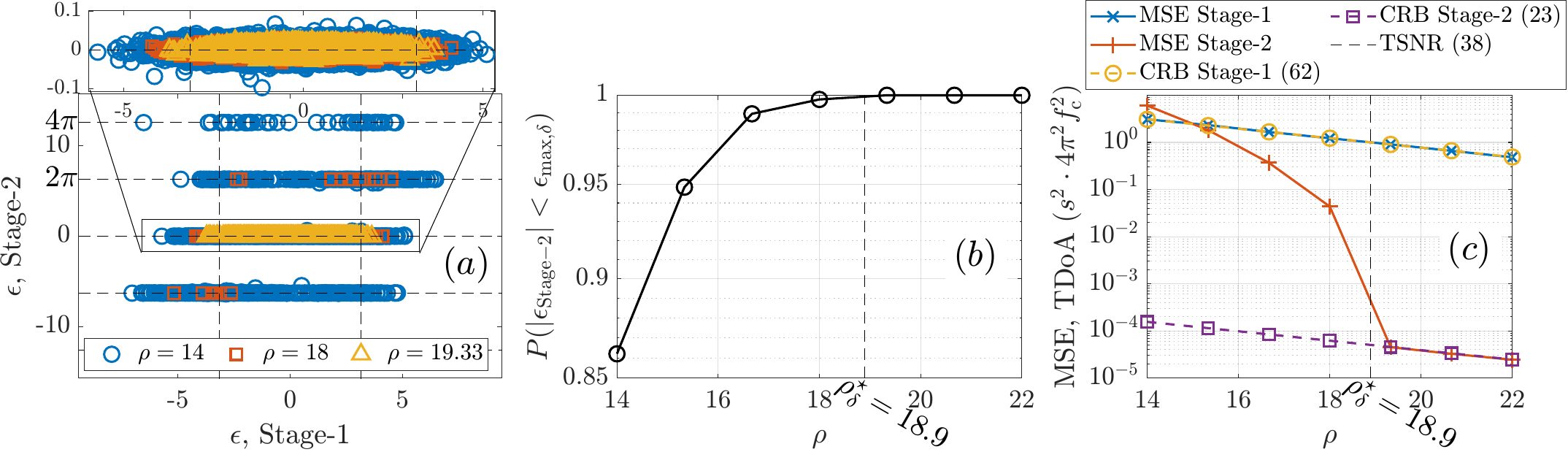} % single column
	\caption{TDoA Results. $(a)$ TDoA error of Stage-1 vs Stage-2. $(b)$ Probability of refined TDoA (Stage-2) being in the correct region. $(c)$ MSE, CRBs, and TSNR. The TDoAs are normalized (divided) by $1/(2 \pi f_{\rm c})$.}
	\label{fig:tdoa_singleband}
       \vspace{-0.4cm}
\end{figure*}

\section{Numerical Results}\label{sec:results}
The goals of the following results are to:  
i) provide numerical evidence and intuition for the TDoA and distance TSNRs in Propositions~5 and 6,  
ii) compare the matrix-form and closed-form CRBs, and  
iii) demonstrate that the single- and multi-band estimators approach the CRB at high SNR.

We first present the single-band results for TDoA, distance, and AoA in Subsections~\ref{subsec:results_single_band_tdoa}, \ref{subsec:results_single_band_distance}, and \ref{subsec:results_single_band_aoa}.  
The multi-band results are then shown in Subsection~\ref{subsec:results_multi_band}.

\begin{table}[t!]
	\centering  
	\caption{Parameters for single-band results.}
	\vspace{-0.2cm}    
	\label{tab:parameters_singleband}	
	\begin{tabular}{lc|lc}    
		\toprule
		  \sizefont Parameter & \sizefont Value & 	\sizefont Parameter & \sizefont Value \\
		\midrule
		\sizefont TX pos., $(x_{\rm T},y_{\rm T})$					& \sizefont $(15,20)\SI{ }{\m}$  &	\sizefont Dist., $R$ & \sizefont $ \SI{25.5}{\m}$\\
		\sizefont Center freq., $f_{\rm c}$ & \sizefont $\SI{10}{\GHz}$ & \sizefont AoA, $\phi$ & \sizefont $ 78.69^\circ$ \\
		\sizefont N. subc., $N$   & \sizefont $256$ &	\sizefont Subc. spacing, $f_0$  & \sizefont $\SI{960}{\kHz}$ \\
		\sizefont N. ant., $M$   & \sizefont $5$ &	\sizefont Ant. spacing, $\delta_{{\rm x}}$   & \sizefont $\SI{0.03}{\m}$ \\		
		\sizefont TDoA TSNR, $\rho_{\delta}^\star$ & \sizefont $\SI{18.9}{\dB}$ &	\sizefont Dist. TSNR, $\rho_{R}^\star$ & \sizefont $\SI{22.8}{\dB}$ \\
		\bottomrule
	\end{tabular}
	%	\vspace{-0.2cm}
\end{table}

\subsection{Single-band TDoA}\label{subsec:results_single_band_tdoa}
The single-band TDoA results are shown in Fig.~\ref{fig:tdoa_singleband} using the parameters in Table~\ref{tab:parameters_singleband}.  
For numerical stability, all TDoAs are normalized by $1/(2\pi f_{\rm c})$, in which case the maximum error limit in \eqref{eq:eps_delta} becomes $\epsilon_{\max,\delta} = \pi$.  
In the plots, all TDoAs for $m = 1,\ldots,M-1$ are displayed as scattered points in graph~$(a)$, while graph~$(b)$ and $(c)$ show averaged results.

The left-most graph~$(a)$ compares the Stage-1 and Stage-2 TDoA errors for identical received signals across multiple realizations.  
Since Stage-2 refines the Stage-1 output, this comparison verifies the refinement on a per-realization basis.  
Three SNR values are shown, $14$, $18$, and $\SI{19.33}{\dB}$, that are all below the TSNR $\rho_{\delta}^\star = \SI{18.9}{\dB}$.  
The Stage-2 results cluster around $\pm k 2\pi$, $k=0,1,\ldots$, reflecting the ambiguity introduced by the constraint 
$b(\hat{\delta}_m) = b(\delta_m)\, b(\pm \epsilon)$, as discussed in Appendix \ref{app:tdoa_waterfall}.  
Because $b(\tau)=\exp(-j 2\pi \tau f_{\rm c})$ in \eqref{eq:ab}, a $2\pi$ phase shift in $b(\pm\epsilon)$ arises when $\epsilon$ changes by integer multiples of $1/f_{\rm c}$, which equals $2\pi$ in normalized units.  
Most importantly, ambiguities occur more often when the Stage-1 TDoA error is close to $\pm \epsilon_{\max,\delta}$.  
At lower SNRs, even small Stage-1 errors may lead to $\pm 2\pi$ jumps in Stage-2 due to larger LM step sizes.  
At $\rho=\SI{20}{\dB}$, which is close to $\rho_{\delta}^\star$, ambiguities occur only when the initialization error is near $\pm \epsilon_{\max,\delta}$. Higher SNR reduces LM step sizes and prevents unnecessary $\pm 2\pi$ shifts when the initial error is small.

Graph~$(b)$ complements graph~$(a)$ by showing the probability that the absolute Stage-2 TDoA error is below $\epsilon_{\max,\delta}=\pi$.  
At $\rho=\SI{20}{\dB}$, this probability is close to one, indicating a low likelihood of $\pm 2\pi$ shifts.  
%For $\rho > \rho_{\delta}^\star$, the probability becomes even closer to one, confirming the absence of ambiguity.  
According to Proposition~5 in Appendix \ref{app:tdoa_waterfall}, this happens because Stage-2 successfully incorporates the $f_{\rm c}$ constraint through $b(\cdot)$ without error. % clarifying why the CRB is achieved for $\rho>\rho_{\delta}^\star$.

Graph~$(c)$ shows the Stage-1 and Stage-2 TDoA MSEs.  
The CRB for Stage-2 is achieved precisely when $\rho>\rho_{\delta}^\star$, numerically validating the rationale of Proposition~5 and the intuition provided by graphs~$(a)$ and~$(b)$.  
Moreover, Stage-2 improves the TDoA MSE by approximately four orders of magnitude, a crucial gain for accurate TDoA-based distance estimation.
Lastly, the CRB expressions in \eqref{eq:C_delta}, \eqref{eq:C_delta_tilde}, as well as the outputs of Stages~1 and 2 of Algorithm~\ref{alg:single_band} are validated.

%In summary, the results depicted in Fig.~\ref{fig:tdoa_singleband} validate multiple theoretical results. The TSNR results of \eqref{eq:rho_delta_star} are analyzed systematically via instructive intermediate results in graphs~$(a)$ and $(b)$ and the RMSE results in graph~$(c)$. The CRB expressions in \eqref{eq:C_delta} (Stage-2) and \eqref{eq:C_delta_tilde} (Stage-1), as well as the efficiency of Algorithm~\ref{alg:single_band}, are verified in graph~$(c)$.

\begin{figure*}
	\centering \includegraphics[scale=0.43]{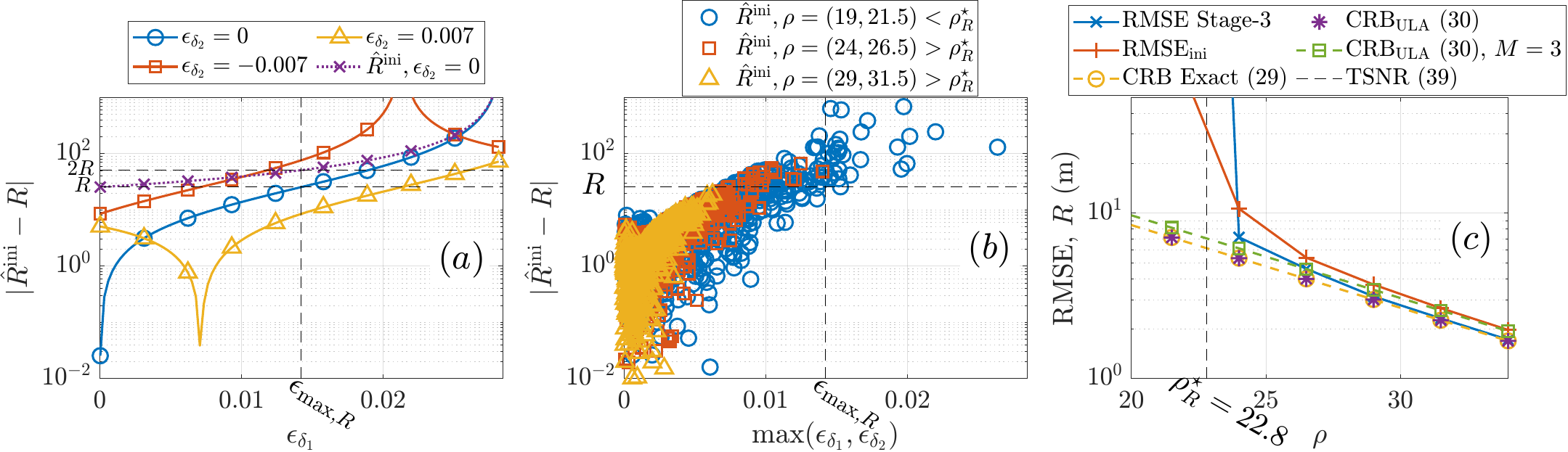} % single column
	\caption{Single-band Distance: $(a)$ Deterministic TDoA error (Stage-2) vs distance initialization error. $(b)$ Simulated TDoA error (Stage-2) vs distance initialization error. $(c)$  RMSE, CRBs, and TSNR.}
	\label{fig:dist_singleband}
       \vspace{-0.5cm}
\end{figure*}

\subsection{Single-band Distance}\label{subsec:results_single_band_distance}
The single-band distance estimation results are shown in Fig.~\ref{fig:dist_singleband} on the next page using the parameters of Table~\ref{tab:parameters_singleband}.

The left-most graph~$(a)$ illustrates the distance initialization error obtained from \eqref{eq:R_phi_hat_ini} under deterministic TDoA errors from Stage-2, where $\epsilon_{\delta_1}$ and $\epsilon_{\delta_2}$ denote the TDoA errors for the two farthest antennas from the reference element at the origin.  
We fix $\epsilon_{\delta_2}=\{0,-0.013,0.013\}$ and vary $\epsilon_{\delta_1}\in[0,0.05]$, also plotting the resulting $\hat{R}_{\rm ini}$ for reference.  
For $\epsilon_{\delta_2}=0$, this setup matches the scenario in \eqref{eq:R_ini} (approximate $\hat{R}_{\rm ini}$ expression):  
$\epsilon_{\delta_1}=\epsilon_{\max,R}$ corresponds to $\eta=-1$, meaning that the TDoA error equals one quadratic-term unit  
$\delta_{\rm quad}=\delta_{\rm x}^2 \sin^2\phi/(2Rc)$.  
As predicted by \eqref{eq:R_ini}, $\epsilon_{\delta_1}=\epsilon_{\max,R}$ yields $\hat{R}_{\rm ini}\approx 2R$, implying an initialization error of order $R$.  
When $\epsilon_{\delta_2}\neq 0$, the resulting errors has a more convoluted form, where $\epsilon_{\delta_2}>0$ reduces the error and $\epsilon_{\delta_2}<0$ increases it.  
%Nevertheless, the $\epsilon_{\delta_2}=0$ case provides a representative characterization of the initialization behavior when $\epsilon_{\delta_1}$ approaches $\epsilon_{\max,R}$.

The center graph~$(b)$ shows scattered data relating the maximum of $\{\epsilon_{\delta_1},\epsilon_{\delta_2}\}$ to the distance initialization error.  
For SNRs below $\rho_R^\star$ (blue circles), many realizations exceed the $(\epsilon_{\max,R},\, R)$ limits, indicating poor initialization since the induced error is on the order of $R$.  
For SNRs above $\rho_R^\star$, the number of such occurrences decreases significantly, though unfavorable combinations of $(\epsilon_{\delta_1},\epsilon_{\delta_2})$ may still produce outliers, consistent with graph~$(a)$.

The right-most graph~$(c)$ displays the RMSE and square-root CRB for the distance estimate.  
For comparison, we also include the $M=3$ CRB, which is the appropriate benchmark for SNR values near $\rho_R^\star$ because the initialization relies on the outer-most three antennas, or equivalently, a sub-array with $3$ antennas.  
Indeed, the RMSE nearly matches the $M=3$ CRB at $\rho=\SI{24}{\dB}>\rho_R^\star$, confirming the rationale of Proposition~6 outlined in Appendix~\ref{app:dist_waterfall}.  
The full-array ($M=5$) CRB is reached only at higher SNRs (above \SI{30}{\dB}), since antennas closer to the array center require a higher SNR to accurately capture the distance-dependent quadratic term.  
%Graph~$(c)$ also highlights the improvement offered by the Stage~3 refinement relative to Stage-2, which is most pronounced just above $\rho_R^\star$.
%(e.g., a $\SI{2}{m}$ RMSE reduction at $\rho=\SI{24.7}{\dB}$ versus $\SI{0.2}{m}$ at $\rho=\SI{34}{\dB}$).

Finally, the results confirm that the approximate ULA CRB in \eqref{eq:C_R_ULA} closely matches the exact matrix-form CRB in \eqref{eq:CRB_phi_R}, validating Proposition~4.

%In summary, the results depicted in Fig.~\ref{fig:dist_singleband} validate multiple theoretical results. The TSNR results of \eqref{eq:rho_R_star} are analyzed systematically via instructive intermediate results in graphs~$(a)$ and $(b)$ and the RMSE results in graph~$(c)$. The closed-form ULA CRB \eqref{eq:CRB_phi_R} as well as the efficiency of Algorithm~\ref{alg:single_band} are verified in graph~$(c)$.

\subsection{Single-band AoA}\label{subsec:results_single_band_aoa}
The AoA results are shown in Fig.~\ref{fig:aoa_singleband} for the parameters in Table~\ref{tab:parameters_singleband}.  
An interesting result is shown: the RMSE of Stage-3 has worse performance than the initialization for SNRs below $\rho_R^*$. Since Stage-3 jointly estimates AoA and distance, the poor distance estimation degrades the AoA estimation, and the CRB is achieved only for SNRs above $\rho_R^*$ with a slight improvement in relation to the initialization. 
This result reveals that, for SNR above $\rho_R^*$, the proposed algorithm does not play a significant role in AoA estimation, because a reasonably good estimate can be achieved assuming the typical far field model. 
%Several results related to the AoA parameter are validated, namely, the AoA estimator achieves the CRB at high SNR, the approximate ULA CRB in \eqref{eq:C_phi_ULA} closely matches the exact matrix-form CRB in \eqref{eq:CRB_phi_R}, and the AoA initialization from \eqref{eq:R_phi_hat_ini} attains an MSE close to the CRB for SNR values above the TDoA TSNR.

\begin{figure}[t!]
	\centering \includegraphics[scale=0.44]{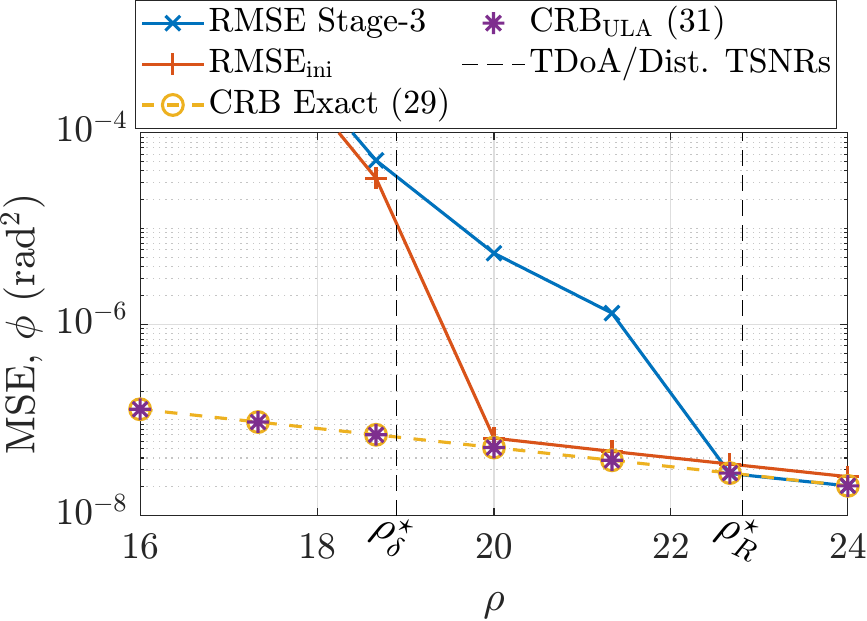} % single column
	\caption{Single-band AoA: MSE, CRBs, and TSNRs.}
	\label{fig:aoa_singleband}
     \vspace{-0.5cm}
\end{figure}

\begin{figure}[t!]
	\centering \includegraphics[scale=0.44]{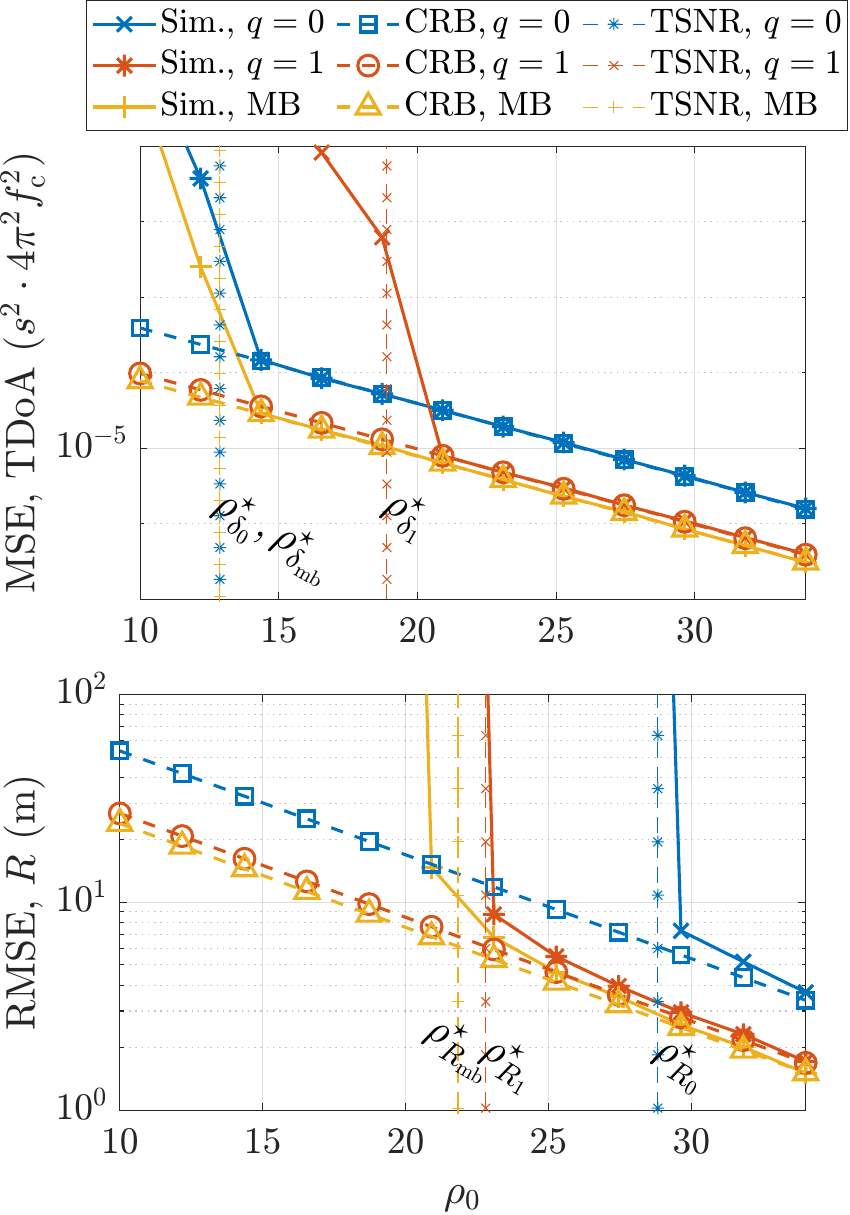} % single column
	\caption{Multi-band TDoA and Distance.}
	\label{fig:multiband}
    \vspace{-0.6cm}
\end{figure}

\subsection{Multi-band TDoA and Distance}\label{subsec:results_multi_band}
The multi-band results are shown in Fig.~\ref{fig:multiband}.  
To isolate the impact of center frequency, we consider two sub-bands, $q=\{0,1\}$, with 
$f_{{\rm c}_0}=\SI{5}{\GHz}$ and $f_{{\rm c}_1}=\SI{10}{\GHz}$, while all remaining parameters match those in Table~\ref{tab:parameters_singleband}.
For the simulation, Algorithm \ref{alg:multi_band} is evaluated. The single- and multi-band TDoA CRBs are computed from \eqref{eq:C_delta} with $\alpha$ and $\alpha_{\rm mb}$, from \eqref{eq:alpha} and \eqref{eq:alpha_mb}, respectively.
The single- and multi-band distance CRBs are computed using \eqref{eq:C_R_ULA} with $\alpha$ and $\alpha_{\rm mb}$. 
The TSNRs for single- and multi-band TDoA are computed from \eqref{eq:rho_delta_star} and \eqref{eq:rho_delta_star_mb}, respectively. 
The single- and multi-band distance TSNRs are computed from \eqref{eq:rho_R_star} and \eqref{eq:rho_R_star_mb}, respectively.

The TDoA results in the top graph clearly illustrate the role of each sub-band in both estimation error and TSNR.  
The lower-frequency sub-band exhibits a larger MSE but a smaller TSNR, whereas the higher-frequency sub-band achieves a lower MSE but requires a significantly higher TSNR.  
%This behavior is consistent with the theoretical expressions developed in this work.  
For example, the single-band TSNR in \eqref{eq:rho_delta_star} scales with $f_{\rm c}^{2}$, implying that band $q=0$ requires \SI{6}{\dB} less SNR than band $q=1$, since $(f_{{\rm c}_0}/f_{{\rm c}_1})^{2}=1/4$.  
Likewise, in \eqref{eq:alpha}, the term $f_{\rm c}^{2}$ dominates $(N^{2}-1)f_{0}^{2}/12\approx B^{2}/12$, so the CRB and the observed MSE scale approximately as $1/f_{\rm c}^{2}$.  
This leads to an instructive duality: TSNR increases with $f_{\rm c}^{2}$, whereas the MSE decreases with $1/f_{\rm c}^{2}$.

Additionally, Algorithm~\ref{alg:multi_band} successfully combines the stronger features of each sub-band: it achieves the lower TSNR requirement of the low-frequency band while attaining an MSE slightly below that of the high-frequency band.  
In this example, the multi-band WSRN is the same as the sub-band for $q_{\rm ini} = \arg\max_{q} ( \rho_q / \rho_{\delta_q}^\star ) = 0$, while the error is close to that of $q=1$.

The distance results are shown in the bottom graph.  
In contrast to the TDoA case, the lower-frequency band exhibits the \emph{higher} TSNR for distance estimation.  
Here, the first term of the $\max(\cdot,\cdot)$ in \eqref{eq:rho_R_star} dominates, and since it scales with $1/f_{\rm c}^{2}$, the TSNR decreases at higher frequencies by \SI{6}{\dB}.  
Importantly, a smaller TDoA TSNR for a low-frequency band does \emph{not} imply a smaller distance TSNR.  
For example, for $q=0$ we have $\rho_{\delta_0}^\star = \SI{12.8}{\dB}$. Because the first term in \eqref{eq:rho_R_star} is much larger, the distance TSNR is $\rho_{R_0}^\star=\SI{28.8}{\dB}$, with a \SI{16}{\dB} gap from $\rho_{\delta_0}^\star$.  
Although the TDoA TSNR increases by \SI{6}{\dB} when moving from $f_{{\rm c}_0}$ to $f_{{\rm c}_1}$, the dominant term in \eqref{eq:rho_R_star} remains unchanged, and the distance TSNR decreases by the same \SI{6}{\dB}, giving 
$\rho_{R_1}^\star=\SI{22.8}{\dB}$.  
%If the TDoA term were dominant, the opposite trend would arise.

Regarding multi-band TSNR, the results confirm that it decreases further as predicted by \eqref{eq:rho_R_star_mb}.
In this case, the TSNR scales with $1/(f_{{\rm c}_0}^{2}+f_{{\rm c}_1}^{2})$, demonstrating that multi-band processing not only reduces the CRB but also relaxes the required SNR.

%In summary, Fig.~\ref{fig:multiband} validates the multi-band theoretical results derived in Section~\ref{sec:system_model_multi_band}.
%We demonstrated how the multi-band TSNR in \eqref{eq:rho_R_star_mb} (or its more general form \eqref{eq:rho_R_multi_band_general}) and the CRB generalization, via \eqref{eq:alpha_mb} in Proposition~6, can be used to analyze the performance of localization system.

%
\section{Conclusion}\label{sec:conclusion}

This paper presented a unified theoretical and algorithmic framework for single- and multi-band localization under a coherent SIMO model. We derived closed-form and matrix-form CRBs for TDoA, AoA, and distance, developed single- and multi-band benchmark estimators, and introduced TSNR conditions that characterize when CRB-level performance becomes achievable. Key results include exact TDoA CRBs, intermediate AoA-distance CRBs for arbitrary arrays, ULA closed-form approximations, and generalization to multi-band. All the theoretical results were numerically validated.

A central contribution of this work is the systematic characterization of the \emph{Threshold SNR} (TSNR).  
The TSNR expressions derived for both TDoA and distance provide practical thresholds that precisely indicate the transition from unreliable, ``off-the-chart'' estimates to CRB-approaching performance.  
%These expressions reveal how signal parameters, array structure, and multi-band aggregation jointly influence estimator reliability.  
In particular, the numerical results illustrated that multi-band operation improves both RSME and TSNR for the distance estimation.
%To the best of our knowledge, no previous work provided a systematic framework to analytically characterize this effect.

Overall, the closed-form TSNR and CRB expressions developed in this work provide a practical tool for system-level design and optimization of next-generation localization, enabling rapid trade-off analysis across waveform, carrier frequency, bandwidth, aperture, and multi-band aggregation without resorting to extensive Monte Carlo studies.

\appendix

\subsection{Inverse FIM of $\tilde{\boldsymbol{\theta}}_\tau$}\label{app:inv_FIM_theta_delay}
Recalling that $\mathbf{D}(\tilde{\boldsymbol{\theta}}_\tau) = [ {\mathbf{S}}^{(\boldsymbol{\tau})} \,\, \mathbf{S}^{(\tilde{\gamma})} ]$, we obtain
\begin{equation}
	\mathbf{J}(\tilde{\boldsymbol{\theta}}_\tau) 
	= \frac{2}{\sigma^2} 
	\Re\!\left( \mathbf{D}(\tilde{\boldsymbol{\theta}}_\tau)^{\rm H} \mathbf{D}(\tilde{\boldsymbol{\theta}}_\tau) \right) 
	= \frac{2}{\sigma^2}
	\begin{bmatrix}
		\mathbf{X} & \mathbf{Y} \\[3pt]
		\mathbf{Y}^{\rm T} & \mathbf{Z}
	\end{bmatrix}.
	\label{eq:J_theta_tilde}
\end{equation}
The matrix blocks are given by
\begin{equation}
	\begin{aligned}
		\mathbf{X} 
		&\stackrel{(a)}{=} 
		\Re\!\left( ({\mathbf{S}}^{(\boldsymbol{\tau})})^{\rm H} {\mathbf{S}}^{(\boldsymbol{\tau})} \right) 
		= \alpha \cdot \mathbf{I}_M, \\[3pt]
		\mathbf{Y} 
		&\stackrel{(b)}{=} 
		\Re\!\left( ({\mathbf{S}}^{(\boldsymbol{\tau})})^{\rm H} {\mathbf{S}}^{(\tilde{\gamma})} \right) 
		= 2\pi N f_{\rm c} \cdot \mathbf{1}_M 
		\left[ \tilde{\gamma}^{\rm i} \,\, -\!\tilde{\gamma}^{\rm r} \right], \\[3pt]
		\mathbf{Z} 
		&\stackrel{(c)}{=} 
		\Re\!\left( ({\mathbf{S}}^{(\tilde{\gamma})})^{\rm H} {\mathbf{S}}^{(\tilde{\gamma})} \right) 
		= \mathbf{I}_2 \cdot M N.
	\end{aligned}
	\label{eq:XYZ}
\end{equation}
The identity structure of $\mathbf{X}$ in $(a)$ indicates the absence of cross-terms between different delays, which occurs because each delay $\tau_m$ is associated only with the $m$th antenna. 
%In this model, this property arises from the block-diagonal structure of ${\mathbf{S}}^{(\boldsymbol{\tau})}$, resulting from the Khatri–Rao product with the diagonal matrix in $\mathbf{B}(\boldsymbol{\delta}_0)$ (see \eqref{eq:s_theta}). 
Recalling the auxiliary variable $\boldsymbol{\Xi}(f_0, f_{\rm c}) = -j 2\pi (\grave{\mathbf{N}} f_0 + \mathbf{I} f_{\rm c})$ defined subsequently to \eqref{eq:S_delta_derivatives}, the terms multiplying $f_0$ and $f_{\rm c}$ in $\alpha$ are computed using the Identities 1 and 2 of Appendix~\ref{app:useful_identities}.
Regarding $\mathbf{Y}$ in $(b)$, there is no contribution from $f_0$ due to Identity 1 of Appendix~\ref{app:useful_identities}, leaving only the term related to $f_{\rm c}$ present in ${\mathbf{S}}^{(\boldsymbol{\tau})}$. 
The vector associated with the channel gain, $\left[ \tilde{\gamma}^{\rm i} \,\, -\!\tilde{\gamma}^{\rm r} \right]$, arises from the multiplication by the unit imaginary number in $\boldsymbol{\Xi}(f_0, f_{\rm c})$ within ${\mathbf{S}}^{(\boldsymbol{\tau})}$, and the real part operation in \eqref{eq:XYZ}.
Lastly, the quantity $\mathbf{Z}$ in $(c)$ can be written as 
$\mathbf{Z} = MN\,\Re\!\left( [\mathbf{1} \, j\mathbf{1}]^{\rm H} [\mathbf{1} \, j\mathbf{1}] \right)$, 
which has an identity form because the off-diagonal terms are purely imaginary.

Next, the inverse \ac{FIM} in \eqref{eq:J_theta_tilde} is computed via the Schur complement \cite{boyd2004convex} as
\begin{equation}
	\begin{aligned}
		\mathbf{J}(\tilde{\boldsymbol{\theta}}_\tau)^{-1} \! & \! = \frac{\sigma^2}{2} \left[ \begin{matrix} \mathbf{X}^{-1}\! + \!\mathbf{X}^{-1}\mathbf{Y}\mathbf{G}^{-1}\mathbf{Y}^{\rm T} \mathbf{X}^{-1} & \!\!\!-\mathbf{X}^{-1}\mathbf{Y}\mathbf{G}^{-1} \\ -\mathbf{G}^{-1}\mathbf{Y}^{\rm T}\mathbf{X}^{-1} & \mathbf{G}^{-1} \end{matrix} \right] 
		\\& = \frac{\sigma^2}{2} \left[ \begin{matrix} \mathbf{I}_M \frac{1}{\alpha} + \frac{\beta}{\alpha^2} \mathbf{1}_M \mathbf{1}_M^{\rm T} & \cdots \\ \cdots & \ddots \end{matrix} \right]			
	\end{aligned}
	\label{eq:J_theta_delay_app}
\end{equation}
where $\mathbf{G} = \mathbf{Z} - \mathbf{Y}^{\rm T}\mathbf{X}^{-1}\mathbf{Y}$ and 
$
\beta\! = \!(2\pi)^2 N^2 f_{\rm c}^2 
\left( 
[\tilde{\gamma}^{\rm i} \! -\!\tilde{\gamma}^{\rm r}] 
\mathbf{G}^{-1} 
[\tilde{\gamma}^{\rm i} \! -\!\tilde{\gamma}^{\rm r}]^{\rm T}
\right).
$
The second line of \eqref{eq:J_theta_delay_app} follows directly from substituting the quantities in \eqref{eq:XYZ}. 
%The key feature of this formulation lies in the top-left block of the inverse matrix, which contains the \ac{CRB} associated with the delay parameters. 
%This block exhibits a convenient diagonal-plus-rank-one structure that enables a closed-form expression for the \ac{CRB} of the TDoAs.

\subsection{Useful Identities}\label{app:useful_identities}
\begin{itemize}
    \item Identity 1: $\sum_{n=0}^{N-1} \!\big(n - \frac{N-1}{2}\big) = 0$.
    \item Identity 2: $\sum_{n=0}^{N-1} \!\big(n - \frac{N-1}{2}\big)^2 = N(N^2-1)/2$.
    \item Identity 3: $\sum_{n=0}^{N-1} \!\big(n\! -\! \frac{N-1}{2}\big)^4 = N (N^2  -  1)(3N^2 \! - \! 7)/240$.
\end{itemize}
    
\subsection{Closed-form CRB for ULA}\label{app:CRB_ULA}
Computing the closed-form \ac{CRB} in \eqref{eq:CRB_phi_R} requires evaluating the inverse of
\begin{equation}
	\mathbf{U}^{\rm T}\mathbf{U} 
	= 
	\begin{bmatrix}
		M & \mathbf{1}^{\rm T}\boldsymbol{\delta}_0^{(\phi)} & \mathbf{1}^{\rm T}\boldsymbol{\delta}_0^{(R)} \\[3pt]
		\mathbf{1}^{\rm T}\boldsymbol{\delta}_0^{(\phi)} & \| \boldsymbol{\delta_0}^{(\phi)} \|^2 & (\boldsymbol{\delta}_0^{(\phi)})^{\rm T}\boldsymbol{\delta}_0^{(R)} \\[3pt]
		\mathbf{1}^{\rm T}\boldsymbol{\delta}_0^{(R)} & (\boldsymbol{\delta}_0^{(\phi)})^{\rm T}\boldsymbol{\delta}_0^{(R)} & \| \boldsymbol{\delta}_0^{(R)} \|^2
	\end{bmatrix}.
	\label{eq:UUT}
\end{equation}
Our approach is to approximate the TDoAs ($\delta_m$ in \eqref{eq:delta_R_pi}) and their derivatives with respect to $\phi$ and $R$ ($\delta_m^{(\phi)}$ and $\delta_m^{(R)}$) with the linear and quadratic (1st and 2nd) terms of Taylor series expansions around $x_m$.
This yields
\begin{equation}
	\begin{aligned}
		\delta_m &\approx -\frac{x_m \cos \phi}{c} + \frac{x_m^2 \sin^2 \phi}{2 R c}, \\[3pt]
		\delta_m^{(\phi)} 
		& \approx \frac{x_m \sin \phi}{c} 
		\!\left( 1 \!\!+\!\! \frac{x_m \cos \phi}{R} \right) 
		\approx \frac{x_m \sin \phi}{c}, \\[3pt]
		\text{and } \delta_m^{(R)} 
		&	\approx -\frac{x_m^2 \sin^2 \phi}{2 R^2 c}.
	\end{aligned}
	\label{eq:delta_taylor}
\end{equation}
In the expression for $\delta_m^{(\phi)}$, we used $\partial(\sin^2 \phi)/\partial \phi = 2 \sin \phi \cos \phi$ (chain rule) and assume $x_m / R \ll 1$ for the array centered around $x=0$. 

Using Identity 1 of Appendix~\ref{app:useful_identities}, we found that the cross terms associated with the angle $\phi$ approximately zero, i.e., $\mathbf{1}^{\rm T}\boldsymbol{\delta}^{(\phi)} \approx 0$ and 
$(\boldsymbol{\delta}^{(\phi)})^{\rm T}\boldsymbol{\delta}^{(R)} \approx 0$.
Hence, the \ac{CRB} for the AoA simplifies to $C(\phi) = \frac{\sigma^2}{2\alpha} [\mathbf{U}^{-1}]_{2,2} = 1/\|\boldsymbol{\delta}^{(\phi)}\|^2$ yielding \eqref{eq:C_phi_ULA} using Identity 2 of Appendix~\ref{app:useful_identities}.

For the distance, equation \eqref{eq:C_R_ULA} is found by solving $C_{\rm ULA}(R) = \frac{\sigma^2}{2\alpha} 
		\frac{M}{{\rm det}(\mathbf{U}_{\tau_0,R}^{\rm T}\mathbf{U}_{\tau_0,R})}$, where
        \[
\mathbf{U}_{\tau_0,R}^{\rm T}\mathbf{U}_{\tau_0,R} 
= 
\begin{bmatrix}
	M & \mathbf{1}^{\rm T}\boldsymbol{\delta}^{(R)} \\[3pt]
	\mathbf{1}^{\rm T}\boldsymbol{\delta}^{(R)} & \| \boldsymbol{\delta}^{(R)} \|^2
\end{bmatrix}
\]
is a sub-matrix of $\mathbf{U}^{\rm T}\mathbf{U}$ excluding the terms associated with $\phi$, whose elements are computed using the Identities 2 and 3 of Appendix~\ref{app:useful_identities}.

\subsection{TDoA Threshold SNR $\rho_{\delta}^\star$}\label{app:tdoa_waterfall}
Consider the initial TDoA estimate $\hat{\delta}_m = \hat{\tau}_m - \hat{\tau}_0$, obtained by the delay estimates of Stage-1, that is used for initialization of Stage-2.
Recall that the model used in Stage-1 is described in \eqref{eq:s_tau} and ignores the dependence of $b(\hat{\delta}_m)$ on $\hat{\delta}_m$.  
On the contrary, the model of Stage-2 is described in \eqref{eq:s_theta}, which has the constraint $b(\cdot)$ (in $\mathbf{B}(\cdot)$) as a fundamental distinguishing term from \eqref{eq:s_tau}.
Then, we need to analyze how sensitive the estimation error from Stage-1 is in the constraint $b(\cdot)$.
Let the estimation error be $\hat{\delta}_m = \delta_m + \epsilon$, yielding
\begin{equation}
b(\hat{\delta}_m) = \underbrace{\exp(-j 2 \pi \delta_m f_{\rm c})}_{\text{correct phase}}\, \underbrace{\exp(-j 2 \pi \epsilon f_{\rm c})}_{\text{error}}.
\label{eq:b_hat}
\end{equation}
To impose a maximum unambiguous phase error, we require  
$b(\pm \epsilon_{\max,\delta}) = \exp(\mp j\pi)$, corresponding to a maximum phase error of $\mp \pi$.
This leads to the limit
\begin{equation}
	\epsilon_{\max,\delta} = 1/(2 f_{\rm c}).
	\label{eq:eps_delta}
\end{equation}
Our hypothesis is that if $|\epsilon| < \epsilon_{\max,\delta}$ with high probability, then the TDoA estimates from Stage-1 are sufficiently adjusted to the Stage-2 model due to confined phase error within unambiguous limits.

Assuming $\epsilon$ is approximately Gaussian, most of its probability mass lies within a few standard deviations of the mean.
We use this well-known fact to calculate the standard deviation of $\epsilon$ such that $|\epsilon| < \epsilon_{\max,\delta}$ with high probability.
It turns out that a factor of $\pi$ standard deviations is a convenient numerical choice\footnote{A typical rule of thumb considers 3 standard deviations to bound the most likely region of Gaussian events. We used the same idea, where we replaced $3$ by $\pi$ to simplify the equations.}, which translates into defining $\epsilon_{\max,\delta}/\pi$ as the maximum standard deviation of $\epsilon$ to bound the TDoA error with high probability.
In other words, if $\epsilon^\star \sim \mathcal{N}(0,(\epsilon_{\max,\delta}/\pi)^2)$ denotes the TDoA error at the TSNR, then  
$P(|\epsilon^\star| < \epsilon_{\max,\delta}) = 0.9983$. 
%Thus, for SNRs above the TSNR threshold, we expect  
%$P(|\epsilon| < \epsilon_{\max,\delta}) > 0.9983$, making $(\epsilon_{\max,\delta}/\pi)^2$ a logical variance reference.
%In summary, our hypothesis states that when the TDoA MSE (using the $f_{\rm c}=0$ model) satisfies $E(\epsilon^2) < \epsilon_{\max,\delta}^2/(\pi^2) = 1/(2\pi f_{\rm c})^2$, the TDoA estimate from Stage-1 is accurate enough for Stage-2 to refine it further. 

Then, using the SNR definition $\rho = |\gamma|^2/\sigma^2$, and TDoA CRB from Stage-1 parametrized by $f_{\rm c}=0$
\begin{equation}
	\tilde{C}(\delta_m)
	= C(\delta_m \mid f_{\rm c}=0)
	= \frac{\sigma^2}{|\gamma|^2 (2\pi)^2 N}
	\frac{12}{f_0^2 (N^2 - 1)},
	\label{eq:C_delta_tilde}
\end{equation}
we solve the inequality $\tilde{C}(\delta_m) < 1/(2\pi f_{\rm c})^2$ for $\rho$ yielding the TSNR threshold in \eqref{eq:rho_delta_star}.

\subsection{Distance Threshold SNR $\rho_{R}^\star$}\label{app:dist_waterfall}
The Taylor approximation of the TDoA in \eqref{eq:delta_taylor} contains a linear and a quadratic term in the antenna position $x_m$.  
Noticing that the quadratic term, $\delta_{m,{\rm quad}} = \frac{x_m^2 \sin^2\phi}{2Rc}$, is the near-field correction that captures the dependence on $R$, we propose the following hypothesis: if the absolute TDoA error $|\epsilon| = |\hat{\delta}_m-\delta_m| < |\delta_{m,{\rm quad}}|$ with high probability, the TDoA estimate retains enough information about $R$ for the LM refinement to improve the distance estimation.

In the following, we provide an instructive example to illustrate the interplay between TDoA and distance errors, and provide a quantitative result that justifies the above hypothesis.
Consider a system with $M=3$ antennas indexed by $m=\{0,1,2\}$ located at $x_0=0$, $x_1=\delta_{\rm x}$, and $x_2=-\delta_{\rm x}$.  
Let the estimated TDoAs be $\hat{\delta}_1 = \delta_1 + \eta |\delta_{m,{\rm quad}}|$ and $\hat{\delta}_2 = \delta_2$,
%\[
%\hat{\delta}_1 = \delta_1 + \eta |\delta_{m,{\rm quad}}|, 
%\qquad 
%\hat{\delta}_2 = \delta_2,
%\]
so that only the first TDoA contains an error, parameterized by $\eta$.  
Using the Taylor approximation \eqref{eq:delta_taylor} with the above TDoAs estimates, and plugging the result into the conversion formulas from TDoA to distance-angle in \eqref{eq:R_phi_hat_ini} yields the distance expression
\begin{equation}
	\hat{R}_{\rm ini}
	= \frac{2R}{2+\eta}
	- \frac{\delta_{\rm quad}(1+\eta+\eta^2/2)}{2+\eta}
	- \frac{\delta_{\rm lin}\eta}{2+\eta}
	\approx \frac{2R}{2+\eta},
	\label{eq:R_ini}
\end{equation}
where $\delta_{\rm quad} = \frac{\delta_{\rm x}^2 \sin^2\phi}{2Rc}$ and $\delta_{\rm lin} = \delta_{\rm x}\cos\phi$.
The approximation follows because term associated with $R$ dominates for $|\eta| \le 1$, as $\delta_{\rm quad}\ll\delta_{\rm lin}\ll R$.  
Setting $\eta=\pm 1$ (i.e., TDoA error equal to $\pm \delta_{\rm quad}$) gives  
$\hat{R}_{\rm ini} \approx \frac{2}{3}R$ and $\hat{R}_{\rm ini}\approx 2R$, showing that a negative quadratic-sized error produces an $\hat{R}_{\rm ini}$ on the order of $R$ itself, leading to poor initialization.  
In summary, the above analysis provides numerical evidence that $|\epsilon| > |\delta_{m,{\rm quad}}|$ leads to a poor initialization for the distance.
Hence, the quantity $|\delta_{m,{\rm quad}}|$ is a reasonable boundary choice for $|\epsilon|$, which is then used to derive the distance TSNR.

The antennas farthest from the center are located at  
$x_{\max} = \pm (M-1)\delta_{\rm x}/2$,  
which substituted into $\delta_{\rm quad}$ yields the maximum tolerable TDoA error
\begin{equation}
	\epsilon_{\max,R}
	= \frac{(M-1)^2 \delta_{\rm x}^2 \sin^2\phi}{8Rc}.
	\label{eq:app_delay_max}
\end{equation}
Next, we seek the SNR for which  
$|\epsilon_m| < \epsilon_{\max,R}$ with high probability, for the farthest antennas from the center, $m=m_1$ and $m_2$.  
Using the CRB $C(\delta)$ in \eqref{eq:C_delta} for the TDoA estimation at Stage-2, the TSNR threshold is obtained by enforcing
\begin{equation}
C(\delta) < \frac{\epsilon_{\max,R}^2}{\pi^2}.
\end{equation}

Adopting the same Gaussian-based rationale used in Appendix~\ref{app:tdoa_waterfall}.  
Substituting $C(\delta)$ from \eqref{eq:C_delta}, $\epsilon_{\max,R}$ from \eqref{eq:app_delay_max}, and $\rho = |\gamma|^2/\sigma^2$ gives
$\rho >
\frac{K-1}{NK}
\left(
\frac{4Rc}{f_{\rm c} (M-1)^2 \delta_{\rm x}^2 \sin^2\phi}
\right)^2
$,
where  $K = 1 + 12 f_{\rm c}^2/((N^2 - 1)f_0^2)$.
Since $K\gg 1$, we have $(K-1)/(NK) \approx 1/N$.

Finally, because Stage-2 achieves its CRB only when $\rho > \rho_{\delta}^\star$ in \eqref{eq:rho_delta_star}, the TSNR for distance estimation must exceed both thresholds.  
Thus, the overall TSNR is given by the maximum of $\rho_{\rm R}'/N$ and $\rho_{\delta}^\star$, leading to the expression in \eqref{eq:rho_R_star}.
\subsection{Multi-band generalization}\label{app:proposition_multiband}
To avoid redundancy, we directly write the modified multi-band model in Subsection~\ref{subsec:modified_model} analogously to the single-band derivation of Subsection~\ref{subsec:modified_model}, where the transformation from the original to the modified single-band delay models,
\eqref{eq:s_m} (parameterized by ${\boldsymbol{\theta}}_\tau$) and
\eqref{eq:s_m_tilde} (parameterized by $\tilde{\boldsymbol{\theta}}_\tau$), is analogously applied to the multi-band model, yielding ${\boldsymbol{\Theta}}_\tau$ and $\tilde{\boldsymbol{\Theta}}_\tau$, respectively.

By combining the multi-band Jacobians in \eqref{eq:D_multiband} and the corresponding FIM
structure in \eqref{eq:DJ_multiband} with the modified-model construction of
Subsection~\ref{subsec:modified_model}, one can show that the FIM of the modified multi-band
delay model, $\tilde{\boldsymbol{\Theta}}_\tau$, has the same block structure as the
single-band counterpart in \eqref{eq:J_theta_tilde}. Formally,
\begin{equation}
	\mathbf{J}(\tilde{\boldsymbol{\Theta}}_\tau) 
	= \frac{2}{\sigma^2} 
	\Re\!\left( \mathbf{D}(\tilde{\boldsymbol{\Theta}}_\tau)^{\rm H} \mathbf{D}(\tilde{\boldsymbol{\Theta}}_\tau) \right) 
	= \frac{2}{\sigma^2}
	\begin{bmatrix}
			\mathbf{X} & \mathbf{Y} \\[3pt]
			\mathbf{Y}^{\rm T} & \mathbf{Z}
		\end{bmatrix},
	\label{eq:J_theta_tilde2}
\end{equation}
where $\mathbf{D}(\tilde{\boldsymbol{\Theta}}_\tau)
= [\mathbf{D}(\dot{\boldsymbol{\theta}}_\tau)\;\;\;
\mathbf{D}(\tilde{\boldsymbol{\Gamma}})]$
denotes the Jacobian of the modified multi-band delay model.
Following the same steps as in Appendix~\ref{app:inv_FIM_theta_delay}, the sub-matrices are
\begin{equation}
	\begin{aligned}
			\mathbf{X} 
			&\stackrel{(a)}{=} \Re\!\left(\sum_q  ({\mathbf{S}}^{(\boldsymbol{\tau})}_q)^{\rm H} {\mathbf{S}}^{(\boldsymbol{\tau})}_q \right) 
			=  \alpha_{\rm mb} \cdot \mathbf{I}_M, \\[3pt]
			\mathbf{Y} 
			&\stackrel{(b)}{=}  \Re\!\left(  \left[({\mathbf{S}}^{(\boldsymbol{\tau})}_0)^{\rm H} {\mathbf{S}}^{(\tilde{\gamma}_0)}_0  \cdots ({\mathbf{S}}^{(\boldsymbol{\tau})}_{Q-1})^{\rm H} {\mathbf{S}}^{(\tilde{\gamma}_{Q-1})}_{Q-1}\right]\right) 
			= 	2\pi \cdot \mathbf{1}_M\\ & 
			\! \! \!\! \begin{bmatrix}
					N_0 f_{{\rm c}_0}  \left[ \tilde{\gamma}^{\rm i}_0 \,\, -\!\tilde{\gamma}^{\rm r}_0 \right] & \!\!\!\! \cdots \!\!\!\! & N_{Q-1} f_{{\rm c}_{Q-1}}  \left[ \tilde{\gamma}^{\rm i}_{Q-1} \,\, -\!\tilde{\gamma}^{\rm r}_{Q-1} \right]		
				\end{bmatrix}, \\[3pt]
			\mathbf{Z} 
			&\stackrel{(c)}{=} 	
            %\Re\!\left( \begin{bmatrix}
			%		({\mathbf{S}}^{(\tilde{\gamma}_0)}_0)^{\rm H} {\mathbf{S}}^{(\tilde{\gamma}_0)}_0 & &
			%		\\ & \ddots & \\
			%		& & ({\mathbf{S}}^{(\tilde{\gamma}_{Q-1})}_{Q-1})^{\rm H} {\mathbf{S}}^{(\tilde{\gamma}_{Q-1})}_{Q-1}
			%	\end{bmatrix} \right) 
			%\\ & = 
            \begin{bmatrix}
					\mathbf{I}_{2} \cdot M N_0 & &
					\\ & \ddots & \\
					& & \mathbf{I}_{2} \cdot M N_{Q-1}
				\end{bmatrix}.
		\end{aligned}
	\label{eq:XYZ2}
\end{equation}
Line $(a)$ follows from the structure of $\mathbf{D}(\dot{\boldsymbol{\theta}}_\tau)$ in
\eqref{eq:D_multiband}, where $\alpha_{\rm mb}$ is the weighted sum of the single-band
contributions. Lines $(b)$ and $(c)$ consider the block-diagonal structure of $\mathbf{D}(\tilde{\boldsymbol{\Gamma}})$.

Crucially, the single-band and multi-band FIMs in \eqref{eq:XYZ} and
\eqref{eq:XYZ2} share two structural properties, namely, $\mathbf{X}$ is proportional to the identity matrix, and $\mathbf{Y}$ contains the all-ones vector $\mathbf{1}_M$.  
These properties ensure that the top-left block of the inverse FIM retains the same form as
\eqref{eq:J_theta_delay_app} given by
\begin{equation}	
	\mathbf{J}(\tilde{\boldsymbol{\Theta}}_\tau)^{-1} = \frac{\sigma^2}{2} \left[ \begin{matrix} \mathbf{I}_M \frac{1}{\alpha_{\rm mb}} + \frac{\beta_{\rm mb}}{\alpha_{\rm mb}^2} \mathbf{1}_M \mathbf{1}_M^{\rm T} & \cdots \\ \cdots & \ddots \end{matrix} \right],
	\label{eq:J_delay_inv_multi_band}
\end{equation}
where $\beta_{\rm mb}$ is the multi-band analogue of $\beta$, obtained from the
factorization $\mathbf{Y}\mathbf{G}^{-1}\mathbf{Y}^{\rm T}
= \mathbf{1}_M\,\beta_{\rm mb}\,\mathbf{1}_M^{\rm T}$.% and whose exact definition is irrelevant for this proof.

Propositions~2, 3, and 4 depend only on linear transformations of the top-left block of \eqref{eq:J_delay_inv_multi_band}. 
Since it has the same structure as the single-band block in \eqref{eq:J_delay}, all cancellations proceed identically, and the multi-band generalization is obtained by replacing
$\alpha$ with $\alpha_{\rm mb}$.

We remark that the above formulation assumes equal noise power per sub-band, which does not hold in general.
The generalization is done by removing the noise power term in \eqref{eq:J_delay_inv_multi_band}, and writing the SNR per subband $|\gamma_q|^2/\sigma_q^2$ in $\tilde{\alpha}_{\rm mb}$ (see \eqref{eq:alpha_mb}). 
In this case, the multi-band CRB expressions replaces $\alpha/\sigma^2$ by $\tilde{\alpha}_{\rm mb}$.

\bibliography{references_ha}{}
\bibliographystyle{IEEEtran}

\vfill

\end{document}